\if@twocolumn\PassOptionsToPackage{switch}{lineno}\else\fi\makeatother
\renewenvironment{abstract} {\vspace*{-1pc}\trivlist\item[]\leftskip\oupIndent\hrulefill\par\vskip4pt\noindent\textbf{\abstractname}\mbox{\null}\\}{\par\noindent\hrulefill\endtrivlist} 
\date{}
\def\oupIndent{1pt}
\def\author#1{\gdef\@author{\hskip-\dimexpr(\tabcolsep)\hskip\oupIndent\parbox{\dimexpr\textwidth-\oupIndent}{\centering\bfseries#1}}}
\def\title#1{\gdef\@title{\centering\bfseries\ifx\@articleType\@empty\else\@articleType\\\fi#1}}
\let\@articleType\@empty \def\articletype#1{\gdef\@articleType{{\normalfont\itshape#1}}}
\def\processVert{\ifmmode|\else\textbar\fi}
\newtheorem{fact}{Fact}
\newtheorem{theorem}{Theorem}
\newtheorem{corollary}{Corollary}
\newtheorem{definition}{Definition}
\newtheorem{example}{Example}
\newtheorem{remark}{Remark}
\newcommand{\lr}[1]{\langle #1 \rangle}
\begin{document}
\title{Losing Connection: the Modal Logic of Definable Link Deletion
\medskip}

\author{\textbf{\fontsize{14pt}{16.4pt}\selectfont{Dazhu Li}}~
\medskip
\\\normalsize\normalfont {Department of Philosophy\unskip, Tsinghua University, Beijing \\ Institute for Logic, Language and Computation\unskip, University of Amsterdam, P.O. Box 94242, 1090 GE Amsterdam,
The Netherlands}~\\{\normalsize\normalfont  E-mail: lidazhu91@163.com}}
\def\RunningHead{{Losing Connection: the Modal Logic of Definable Link Deletion}}
\maketitle 

\begin{abstract}
In this article, we start with a two-player game that models communication under adverse circumstances in everyday life and study it from the perspective of a modal logic of graphs, where links can be deleted locally according to  definitions available to the adversarial player. We first introduce a new language, semantics, and some typical validities. We then formulate a new type of first-order translation for this modal logic and prove its correctness. Then, a novel notion of bisimulation is proposed which leads to a characterization theorem for the logic as a fragment of first-order logic, and a further investigation is made of its expressive power against hybrid modal languages. Next, we discuss how to axiomatize this logic of link deletion, using dynamic-epistemic logics as a contrast. Finally, we show that our new modal logic lacks both the tree model property and the finite model property, and that its satisfiability problem is undecidable. 
\end{abstract}

\smallskip\noindent\textbf{Keywords: }{Graph Game, Modal Logic, Dynamic Logic, Link Deletion, Undecidability}
  
\section{Introduction}\label{sec-introduction}
In the graph of the World-Wide Web, to search for relevant and valuable information, a computer user usually clicks through consecutive hyperlinks passing through intermediate web pages. However, hyperlinks do not always work: say, because of technical malfunctions, or more interestingly, intentional obstruction. Such scenarios of search under adverse circumstances are quite common, and formally, they can be modeled as non-cooperative games played on graphs. For instance, consider the following web graph:

\medskip
\smallskip

\begin{center}
\begin{tikzpicture}[every circle node/.style={draw,inner sep=0pt,minimum size=5mm}, every rectangle node/.style= {draw,inner sep=0pt,minimum size=4.5mm}]
\node(a)[circle] at (-0.5,0) {$i$};
\node(b) at (1,1) {$s$};
\node(c)[circle] at (1,-1) {$v$};
\node(d) at (2,0) {$u$};
\node(e) at (3,1) {$t$};
\node(f)[circle] at (3,-1) {$g$};
\draw[->](a) to (b);
\draw[->](a) to (c);
\draw[->](b) to (d);
\draw[->](b) to (e);
\draw[->](c) to (d);
\draw[->](c) to (f);
\draw[->](d) to (e);
\draw[->](d) to (f);
\draw[->](e) to (f);
\end{tikzpicture} 
\end{center}
In this picture, nodes stand for web pages, directed arrows are hyperlinks, and the two kinds of shape, square and circle, denote two different properties of web pages. One player $E$, the user in the above scenarios, starts at point $i$, and tries to arrive at one of the goal points $t$ and $g$. The other player $A$, say, Nature or some intentional opponent, tries to prevent this. The game goes in rounds: $A$ first cuts one or more links in the graph, then $E$ makes a step along some still available link. Since $A$ can cut at most 8 links in all, the game is finite. $E$ wins if she gets to the goal region, and loses if she cannot get there.

This description still leaves the game underspecified, since we must say more about how player $A$ is allowed to cut before we can analyze the outcomes of the game. For concreteness, we start with a variant where the properties are not yet essential.

\medskip

\noindent \textit{\bf First Version} \, Player $A$ cuts one arrow from $E$'s current position to some reachable node.

\medskip

In the resulting game on our graph, player $E$ has a winning strategy: she is always able to find the information that she needs. Player $A$ might start by deleting the link $\lr{i,s}$, then $E$ moves to node $v$. In the second round, $A$ must cut $\lr{v,g}$, and $E$ goes to state $u$. Finally, player $E$ can always arrive at $t$ or $g$ whatever link $A$ deletes.

In this first version, the game is a local variant of the \textit{sabotage game} (\textit{SG}) in \cite{lig}. A sabotage game is played on a graph by two players: in each round, \textit{Traveler} acts in the same way as $E$, while $A$'s counterpart \textit{Demon} first cuts a link. However, \textit{Demon}'s moves in sabotage games are global and allow cutting a link anywhere in the graph, not necessarily starting at the current position of \textit{Traveler}. In contrast, our game restricts the moves available to \textit{Demon}, giving him fewer winning strategies in general (cf. \cite{sabotage}).

\smallskip

However, the real-world scenarios that we considered suggest a more drastic deviation from existing sabotage games. In many cases of obstruction, the hostile opponent can cut more than one link, following a recipe rather than some arbitrary choice. For instance, blocking of links between computers is usually done by a program working on some explicit description of the targets to be blocked. Or for another concrete illustration of locality and definability, agents in a social network can cut friendship links starting with themselves, and they will often do that cutting according to some rule, such as `delete all links to people that have proved to be dishonest'.

Our next game models such more realistic scenarios, taking care of both aspects.

\medskip

\noindent\textit{\bf Definitive Version} \, In each round, player $A$ chooses an available atomic property, and cuts all links from the position of $E$ to nodes with the chosen property. 

\medskip

For example, in the above graph, when $E$ is located at node $s$, $A$ can cut both the links $\lr{s,u}$ and $\lr{s,t}$ if he chooses the definable property of nodes marked by the square. 

Clearly, with this new version, $A$'s powers of blocking access to information have increased. Indeed, on the same graph as before, he now has a winning strategy. In the first round, $A$ cuts the link $\lr{i,v}$, and $E$'s only option is to move to node $s$. But then, $A$ can cut both links $\lr{s,u}$ and $\lr{s,t}$ simultaneously, and $E$ gets stuck and loses.


We will now focus on the logical analysis of our second more realistic game,  calling it the \textit{definable sabotage game}, denoted S$_d$G. Here existing modal logics for sabotage can serve as an inspiration, given the similarity of the games. But they must be modified, since we have made the obstructing player both less powerful (given the local nature of his choices) and more powerful (since he can remove more than one link in general). More concretely, to analyze the sabotage game, \cite{sabotage} presents a \textit{sabotage modal logic} (\textit{SML}) extending standard modal logic with a modality $\blacklozenge\varphi$ stating that $\varphi$ is true at the evaluation point after removing some accessibility arrow from the model. But what is a suitable logic for S$_d$G? The next section contains our proposal, called \textit{definable sabotage modal logic} (\textit{S$_d$ML}). We will study this logic in depth, not just for its connections to the above games, but also as a pilot study for throwing light on what is special and what is general about sabotage games, and the logical theory that already exists for them. In addition, our logic is a test case for how local sabotage, even though definable in ways reminiscent of dynamic-epistemic logics of information update, has its own behavior, including significantly higher complexity (cf. \cite{johandynamic}).

%

%

\medskip

\noindent\textbf{Outline of the Paper.} In Section \ref{sec-languagesemanticsvalidities}, we present the syntax and semantics of S$_d$ML (Section \ref{subsec-languagandsemantics}), and some typical logical validities (Section \ref{subsec-validities}). In Section \ref{sec-translation}, we describe the non-trivial first-order translation for S$_d$ML and check its correctness. In Section \ref{sec-bisimulationcharacterization}, we first introduce a notion of bisimulation for S$_d$ML and investigate some of its model theory (Section \ref{subsec-bisimulation}), then we prove a characterization theorem for S$_d$ML as a fragment of first-order logic that is invariant for the bisimulation introduced (Section \ref{subsec-characterization}), and finally we explore the expressive power of S$_d$ML (Section \ref{subsec-compare}). In Section \ref{sec-hybrid}, we provide some further analysis of an axiomatization of S$_d$ML. In particular, we illustrate the relation between S$_d$ML and hybrid logics (Section \ref{subsec-hybridtranslation}), and study recursion axioms (Section \ref{subsec-recursion}). Next, in Section \ref{sec-undecidability}, we show that S$_d$ML lacks both the tree model property and the finite model property, and that the satisfiability problem for S$_d$ML is undecidable. Finally, we discuss related work in Section \ref{sec-relatedwork}, and conclude in Section \ref{sec-summary} with a summary and outlook.  

\section{Language, Semantics and Logical Validities}\label{sec-languagesemanticsvalidities}
In this section, we introduce the syntax and semantics of S$_d$ML. After that, to understand the new device, we illustrate some properties of the logic by means of logical validities.

\subsection{Language and Semantics}\label{subsec-languagandsemantics}
As mentioned above, the definable sabotage modal logic S$_d$ML is intended to match S$_d$G. Therefore its language should be expressive enough to model the actions of the players. For player $E$, it is natural to think of the standard modality $\Diamond$, which characterizes the transition from a node to its successors (see \cite{modallogic}). However, to characterize the action of $A$, some dynamic operator is indispensable. 

The language $\mathcal{L}_d$ of S$_d$ML is a straightforward extension of the standard modal language $\mathcal{L}_\Box$. In addition to the modality $\Diamond$, it also includes a dynamic modal operator $[-\;]$. The formal definition is as follows:

\begin{definition}[Language]\label{def-syntax}
Let {\rm{\textbf{P}}} be a countable set of propositional atoms. The formulas of $\mathcal{L}_d$ are defined by the following grammar in Backus-Naur Form:
$$\mathcal{L}_d\ni\varphi::=p\mid\neg\varphi\mid(\varphi\land\varphi)\mid\Box\varphi\mid[-\varphi]\varphi$$
where $p\in$\;{\rm{\textbf{P}}}. Besides, notions $\top$, $\bot$, $\lor$, $\to$ and $\Diamond$ are as usual. For any $[-\varphi]\psi\in\mathcal{L}_d$, we define $\lr{-\varphi}\psi:=\neg[-\varphi]\neg\psi$, i.e., $\lr{-\;}$ is the dual operator of $[-\;]$.
\end{definition}

We will often omit parentheses when doing so ought not cause confusion. The operator $[-\;]$ is our device to model the action of $A$ in S$_d$G. This can be clarified by the semantics of S$_d$ML. Formulas of $\mathcal{L}_d$ are evaluated in standard relational models $\mathcal{M}=\lr{W,R,V}$, where $W$ is the domain, a non-empty set of states, nodes or points, $R\subseteq W^2$ is the set of accessibility relations or links between points, and $V: \mathbf{P}\to2^W$ is the valuation function. A pair $\mathcal{F}=\lr{W,R}$ is called a \textit{frame}. For each $w\in W$, $\lr{\mathcal{M},w}$ is a \textit{pointed model}. For brevity, we usually write $\mathcal{M},w$ instead of $\lr{\mathcal{M},w}$. For any $\lr{w,v}\in R$, we also write $\lr{w,v}\in\mathcal{M}$. Besides, we use $R(w)$ to denote the set $\{v\in W\mid\lr{w,v}\in R\}$ of successors of $w$. We now introduce the semantics, which is defined inductively by truth conditions.

\begin{definition}[Semantics]\label{def-semantics}
Given a pointed model $\lr{\mathcal{M},w}$ and  a formula $\varphi$ of $\mathcal{L}_d$, we say that $\varphi$ is true in $\mathcal{M}$ at $w$, written as $\mathcal{M},w\vDash\varphi$, when 
\begin{align*}
\mathcal{M},w\vDash p&\;\;{\rm{\textit{iff}}}\;\; w\in V(p) \\
\mathcal{M},w\vDash\neg\varphi& \;\;{\rm{\textit{iff}}}\;\;\mathcal{M},w \not\vDash\varphi\\
\mathcal{M},w\vDash\varphi\land\psi &\;\;{\rm{\textit{iff}}}\;\;\mathcal{M},w\vDash \varphi\;and\;\mathcal{M},w\vDash\psi\\
\mathcal{M},w\vDash\Box\varphi& \;\;{\rm{\textit{iff}}}\;\; for\; each\; v\in W, \; if\; Rwv,\; then\;\mathcal{M},v\vDash\varphi\\
\mathcal{M},w\vDash[-\varphi]\psi& \;\;{\rm{\textit{iff}}}\;\; \mathcal{M}|_{\lr{w,\varphi}},w\vDash\psi
\end{align*}
where $\mathcal{M}|_{\lr{w,\varphi}}=\lr{W,R\setminus (\{w\}\times V(\varphi)\cap R(w)),V}=\lr{W,R\setminus (\{w\}\times V(\varphi)),V}$ is obtained by deleting all links from $w$ to the nodes that are $\varphi$.
\end{definition}

We say that formula $\varphi$ is \textit{satisfiable} if there exists a pointed model $\lr{\mathcal{M},w}$ such that $\mathcal{M},w\vDash\varphi$. By Definition \ref{def-semantics}, the truth conditions for Boolean and modal connectives $\neg$, $\land$, $\Box$ are as usual, and $[-\varphi]\psi$ means that $\psi$ is true at the evaluation point after deleting all accessibility relations from the current point to the nodes that are $\varphi$. Besides, we say that two pointed models $\lr{\mathcal{M}_1,w}$ and $\lr{\mathcal{M}_2,v}$ are \textit{$\varphi$-sabotage-related} (notation, $\lr{\mathcal{M}_1,w}\xrightarrow[]{-\varphi}\lr{\mathcal{M}_2,v}$) iff $\lr{\mathcal{M}_2,v}$ is $\lr{\mathcal{M}_1|_{\lr{w,\varphi}},w}$. Intuitively, by the semantics, formula $\varphi$ occurring in $[-\;]$ stands for a property of some successors of the current point, and $[-\varphi]$ is exactly an action of player $A$ in S$_d$G. 

\medskip

\noindent\textbf{Example Revisited.} Recall the graph at the outset. Assume that the propositional atoms $p$ and $q$ refer to the properties denoted with circle and square respectively. Then we are able to express the facts of the game with formulas of $\mathcal{L}_d$. For instance, that `after $A$ deletes the links from $v$ to the circle point, i.e., $g$, $E$ still can move to a square node, i.e., $u$' can be expressed as the truth at $v$ of the formula $[-p]\Diamond q$. Besides, $\mathcal{L}_d$ can also define the existence of winning strategies for players. For example, the formula $[-p]\Box[-q]\Box\bot$ states that $A$ can stop $E$ successfully by removing the links from the position of $E$ to the circle nodes in the first round, and cutting the links pointing to the square nodes in the second round. By our semantics for these formulas, S$_d$ML captures S$_d$G precisely.

\subsection{Logical Validities}\label{subsec-validities}
Although the language and semantics of S$_d$ML look simple, there are some issues with the new operator $[-\;]$. To illustrate how it works, we explore some interesting validities of S$_d$ML. First of all, let us consider the following principle:
\begin{eqnarray}
&&[-\varphi](\varphi_1\to\varphi_2)\to([-\varphi]\varphi_1\to[-\varphi]\varphi_2)
\end{eqnarray}
which follows from the semantics of S$_d$ML directly. The formula enables us to distribute $[-\;]$ over an implication. It is a common principle that applies to almost all modalities, e.g. the standard modality and the public announcement operator (see, e.g. \cite{baltag}). However, operator $[-\;]$ also has some distinguishing features. For instance, the validity
\begin{eqnarray}
&&[-\varphi]\psi\leftrightarrow\lr{-\varphi}\psi
\end{eqnarray}
illustrates that $[-\;]$ is self-dual and---less obviously---a model update function essentially. It is not hard to check that the validity of formulas (1) and (2) is closed under substitution. Interestingly, this is not a common feature of S$_d$ML. Some examples are as follows:
\begin{eqnarray}
&&[-\varphi]p\leftrightarrow p\\
&&[-p]\Diamond q\leftrightarrow\Diamond(\neg p\land q)\\
&&[-p][-q]\varphi\leftrightarrow[-q][-p]\varphi
\end{eqnarray}
Principle (3) illustrates that operator $[-\;]$ does not change the truth value of propositional atoms. Formula (4) allows us to reduce a formula including $[-\;]$ to an $\mathcal{L}_\Box$-formula. By (5), when all formulas occurring in $[-\;]$ are propositional atoms, the order of different operators $[-\;]$ can be interchanged.

Actually each propositional atom occurring in formulas (3)-(5) can be replaced by any Boolean formula without affecting their validity. However, these schematic validities fail in general when we consider the deletions for complex properties. As an example, we show this phenomenon for principle (5).

\begin{example}\label{example-validity}
Consider the general schematic form $[-\varphi_1][-\varphi_2]\varphi\leftrightarrow[-\varphi_2][-\varphi_1]\varphi$ for the principle (5). Let $\varphi_1:=p$, $\varphi_2:=\Diamond\Diamond p$, and $\varphi:=\Diamond q$. 
Define a model $\mathcal{M}$ as follows:
\begin{center}
 \begin{tikzpicture}[every node/.style={circle,draw,inner sep=0pt,minimum size=5mm}]
 \node(a) at (0,0){$w$};
 \node(b) [label=left:$p$]at (-1.5,-1.5){$v_1$};
 \node(c) [label=right:$q$]at (1.5,-1.5){$v_2$};
 \draw[->](a)to [bend left=20](b);
 \draw[->](a)to [bend right=20](c); 
 \draw[->](b)to [bend left=20](a);
 \draw[->](c)to [bend right=20](a); 
 \end{tikzpicture} 
 \end{center}
By inspection, one sees that $\mathcal{M},w\vDash[-p][-\Diamond\Diamond p]\Diamond q$ and $\mathcal{M},w\not\vDash[-\Diamond\Diamond p][-p]\Diamond q$. Therefore it holds that $\mathcal{M},w\not\vDash[-p][-\Diamond\Diamond p]\Diamond q\leftrightarrow[-\Diamond\Diamond p][-p]\Diamond q$.
\end{example}

Many instances of validity in S$_d$ML are not straightforward, and require much more thought than the often rather obvious validities found in standard logical systems. In particular, the dynamic modality $[-\;]$ creates interesting complexity, since removing a link in a model can have side-effects for truth values of formulas at worlds throughout the model. Therefore, it is time to make a deeper technical investigation of our logic. 

\section{First-Order Translation for S$_d$ML}\label{sec-translation}

Given the semantics of S$_d$ML, a natural question is: is S$_d$ML axiomatizable? Obviously the truth conditions for S$_d$ML are first-order, so there must be a first-order translation like that for standard modal logic. In this section, we present a positive answer to the question by describing a recursive standard translation for S$_d$ML. 

However we already know from SML that additional arguments may be needed in the translation: for SML, that extra argument was a finite set of links (see \cite{sabotage}). Interestingly, finding the translation here requires even more delicate analysis of the extra argument.

To do so, our method is to introduce a new device, being a sequence consisting of ordered pairs, e.g. $\lr{v,\varphi}$, to denote the occurrences of $[-\;]$ in a formula, where $v$ is a variable and $\varphi$ is a property of its successors. Let $\mathcal{L}_1$ be the first-order language consisting of countable unary predicates $P_{i\in N}$, a binary relation $R$, and equivalence $\equiv$. 

\begin{definition}[Standard Translation for S$_d$ML] \label{def-translation} 
Let $x$ be a designated variable, and $O$ be a finite sequence $\lr{v_0,\psi_0};...;\lr{v_i,\psi_i};...;\lr{v_n,\psi_n} (0\leqslant i\leqslant n)$, where $\psi_{0\leqslant i\leqslant n}$ is an $\mathcal{L}_d$-formula and $v_{0\leqslant i\leqslant n}$ is a variable. Then the translation $ST_x^O:\mathcal{L}_d\rightarrow\mathcal{L}_1$ is defined recursively as follows:
\begin{align*}
ST^O_x(p)&=Px\\
ST^O_x(\top)&=x\equiv x\\
ST^O_x(\neg\varphi)&=\neg ST^O_x(\varphi)\\
ST^O_x(\varphi_1\land\varphi_2)&=ST^O_x(\varphi_1)\land ST^O _x(\varphi_2)\\
ST^O_x(\Diamond\varphi)&=\exists y(Rxy\land\neg(x\equiv v_0\land ST_y^{\lr{x,\bot}}(\psi_0))\land\\
&\bigwedge\limits_{0\le i\le n-1}\!\!\neg(x\equiv v_{i+1}\land ST_y^{\lr{v_0,\psi_0};...;\lr{v_i,\psi_i}}(\psi_{i+1}))\land ST_y^O(\varphi))\\
ST^O_x([-\varphi_1]\varphi_2)&=ST^{O;\lr{x,\varphi_1}}_x(\varphi_2)
\end{align*} 
\end{definition}

The key inductive clauses in Definition \ref{def-translation} concern $\Diamond$-formulas and $[-\;]$-formulas. Formula $\Diamond\varphi$ is translated as a first-order formula stating that the current point $x$ has a successor $y$ which is $\varphi$, and that this accessibility relation is not deleted by the operator $[-\;]$ indexed in the sequence $O$. The first-order translation for $[-\varphi_1]\varphi_2$ says that the translation of $\varphi_2$ is carried out with respect to the sequence $O;\lr{x,\varphi_1}$, and that this translation is realized at the current point $x$. 

According to Definition \ref{def-translation}, the index sequence $O$ may become longer and longer, but it is always finite. For each formula $\varphi$ of $\mathcal{L}_d$, $ST_x^{\lr{x,\bot}}(\varphi)$ yields a first-order formula with only $x$ free. Now we use an example to illustrate the translation.

\begin{example}\label{example-trans} Consider formula $\Diamond[-\Diamond p_1]\Box p_2$. Its translation runs as follows: 
\begin{align*}
ST_x^{\lr{x,\bot}}(\Diamond[-\Diamond p_1]\Box p_2)=& \exists y(Rxy\land\neg(x\equiv x\land ST_y^{\lr{x,\bot}}(\bot))\land\\ 
& ST_y^{\lr{x,\bot}}([-\Diamond p_1]\Box p_2)) \\
=&\exists y(Rxy\land\neg(x\equiv x\land ST_y^{\lr{x,\bot}}(\bot))\land\\ 
&ST_y^{\lr{x,\bot};\lr{y,\Diamond p_1}}(\Box p_2)) \\
=&\exists y(Rxy\land\neg(x\equiv x\land ST_y^{\lr{x,\bot}}(\bot))\land\\ 
&\forall z(Ryz\land\neg(y\equiv x\land ST_z^{\lr{x,\bot}}(\bot))\land \\
&\neg(y\equiv y\land ST_z^{\lr{x,\bot}}(\Diamond p_1) )\to ST_z^{\lr{x,\bot};\lr{y,\Diamond p_1}}(p_2))\\
=&\exists y(Rxy\land\neg(x\equiv x\land ST_y^{\lr{x,\bot}}(\bot))\land\\ 
&\forall z(Ryz\land\neg(y\equiv x\land ST_z^{\lr{x,\bot}}(\bot))\land \\
&\neg(y\equiv y\land\exists z'(Rzz'\land\neg(z\equiv x\land ST_{z'}^{\lr{x,\bot}}(\bot))\land\\&ST_{z'}^{\lr{x,\bot}}(p_1))\to ST_z^{\lr{x,\bot};\lr{y,\Diamond p_1}}(p_2))\\
=&\exists y(Rxy\land\neg(x\equiv x\land\neg y\equiv y)\land\forall z(Ryz\land\\ 
&\neg(y\equiv x\land\neg z\equiv z)\land \neg(y\equiv y\land\exists z'(Rzz'\land\\
&\neg(z\equiv x\land \neg z'\equiv z')\land P_1z')\to P_2z)
\end{align*}
\end{example}

The result is much complicated. Actually, it is equivalent to formula $\exists y(Rxy\land\forall z(Ryz\land\neg\exists z'(Rzz'\land P_1z')\to P_2z))$, which states that there exists a successor $y$ of the current point $x$ such that, for each successor $z$ of $y$, if $z$ does not has any $P_1$-successors, then $z$ is $P_2$. Example \ref{example-trans} can be considered as a small case illustrating that S$_d$ML is succinct notation for a complex part of first-order logic. In order to check the result, we now prove the correctness of Definition \ref{def-translation}.

\begin{theorem}[Correctness of the Standard Translation]\label{theorem-correctnessoftranslation}
Let $\lr{\mathcal{M},w}$ be a pointed model and $\varphi$ be a formula of $\mathcal{L}_d$, then
$$\mathcal{M},w\vDash \varphi\;\;{\textit{iff}}\;\; \mathcal{M}\vDash ST^{\lr{x,\bot}}_x(\varphi)[w].$$
\end{theorem}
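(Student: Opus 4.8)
The plan is to prove the theorem by induction on the structure of the formula $\varphi$, but with a twist: a naive induction on $\varphi$ alone will not go through, because the inductive clause for $[-\varphi_1]\varphi_2$ changes the index sequence $O$ rather than producing a structurally simpler first-order formula at the same index. So first I would generalize the statement. The right generalization is something like: for every $\mathcal{L}_d$-formula $\varphi$, every pointed model $\langle\mathcal{M},w\rangle$, and every admissible index sequence $O=\langle v_0,\psi_0\rangle;\dots;\langle v_n,\psi_n\rangle$ together with an assignment $\mathfrak{g}$ sending each $v_i$ to a world, we have
\[
\mathcal{M}|_O,\,\mathfrak{g}(x)\vDash\varphi \quad\text{iff}\quad \mathcal{M}\vDash ST^O_x(\varphi)[\mathfrak{g}],
\]
where $\mathcal{M}|_O$ denotes the model obtained from $\mathcal{M}$ by successively deleting, for $i=0,\dots,n$, all links from $\mathfrak{g}(v_i)$ to worlds satisfying $\psi_i$ \emph{in the model already updated by the earlier pairs}. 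The subtlety hidden here is exactly what the $\Diamond$-clause of Definition \ref{def-translation} encodes: when we check whether the edge from $x$ to a successor $y$ survives, the $i$-th deletion $\langle v_i,\psi_i\rangle$ removes that edge iff $x=v_i$ and $y$ satisfies $\psi_i$ \emph{evaluated in $\mathcal{M}$ updated only by $\langle v_0,\psi_0\rangle;\dots;\langle v_{i-1},\psi_{i-1}\rangle$} — hence the nested translations $ST_y^{\langle v_0,\psi_0\rangle;\dots;\langle v_{i-1},\psi_{i-1}\rangle}(\psi_i)$ appearing in the clause, and the special first pair $\langle x,\bot\rangle$ which is a harmless dummy (since nothing satisfies $\bot$) that makes the indexing uniform. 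The original theorem is then the case $O=\langle x,\bot\rangle$, $\mathfrak{g}=[w]$, since $\mathcal{M}|_{\langle x,\bot\rangle}=\mathcal{M}$.

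The induction itself then proceeds by cases. The atomic, negation and conjunction cases are immediate since $\mathcal{M}|_O$ has the same domain and valuation as $\mathcal{M}$, so $ST^O_x(p)=Px$ works and the Booleans commute with the translation. The case $\varphi=[-\varphi_1]\varphi_2$ is essentially a definitional unfolding: by semantics $\mathcal{M}|_O,\mathfrak{g}(x)\vDash[-\varphi_1]\varphi_2$ iff $(\mathcal{M}|_O)|_{\langle \mathfrak{g}(x),\varphi_1\rangle},\mathfrak{g}(x)\vDash\varphi_2$, and one checks that $(\mathcal{M}|_O)|_{\langle \mathfrak{g}(x),\varphi_1\rangle}$ is exactly $\mathcal{M}|_{O;\langle x,\varphi_1\rangle}$ (here one must be a little careful that "$\varphi_1$ evaluated in $\mathcal{M}|_O$" is what the semantics of the outer $[-\;]$ demands, and that this matches the convention in the generalized statement for the appended pair); then apply the induction hypothesis to $\varphi_2$ with the longer sequence $O;\langle x,\varphi_1\rangle$. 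Note $\varphi_2$ is a proper subformula of $\varphi$, so the induction is well-founded on formula structure even though $O$ grows.

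The heart of the argument is the $\Diamond$ case. Here $\mathcal{M}|_O,\mathfrak{g}(x)\vDash\Diamond\varphi$ iff there is $b$ with $\langle\mathfrak{g}(x),b\rangle\in R^{\mathcal{M}|_O}$ and $\mathcal{M}|_O,b\vDash\varphi$. I would unwind the definition of $R^{\mathcal{M}|_O}$: the pair $\langle\mathfrak{g}(x),b\rangle$ lies in $R^{\mathcal{M}|_O}$ iff $\langle\mathfrak{g}(x),b\rangle\in R^{\mathcal{M}}$ and for no $i\in\{0,\dots,n\}$ is it the case that $\mathfrak{g}(v_i)=\mathfrak{g}(x)$ and $b$ satisfies $\psi_i$ in $\mathcal{M}|_{\langle v_0,\psi_0\rangle;\dots;\langle v_{i-1},\psi_{i-1}\rangle}$ — the latter condition being, by the induction hypothesis applied to each $\psi_i$ (a subformula, or at least of lower complexity, than the $[-\;]$-formula that generated this pair — one should confirm the well-foundedness here, which is why the pairs carry genuine subformulas of the original input) with the truncated sequence and the assignment $[v_i{:}\mathfrak{g}(v_i),\,y{:}b]$, equivalent to $\mathcal{M}\vDash ST_y^{\langle v_0,\psi_0\rangle;\dots;\langle v_{i-1},\psi_{i-1}\rangle}(\psi_i)[b]$. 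This is precisely the conjunction of negated clauses in the $\Diamond$-clause of Definition \ref{def-translation}; and $\mathcal{M}|_O,b\vDash\varphi$ becomes $\mathcal{M}\vDash ST^O_y(\varphi)[b]$ by the induction hypothesis. Matching the $i=0$ term ($\langle x,\bot\rangle$, contributing $\neg(x\equiv v_0\land ST_y^{\langle x,\bot\rangle}(\psi_0))$) with the dummy first pair closes the case. The main obstacle to watch is the well-foundedness of this whole scheme: because the $\Diamond$-clause recursively calls $ST$ on the $\psi_i$'s stored in $O$, and these got there from earlier $[-\;]$-subformulas, I need a measure — e.g. the multiset of subformula-complexities, or simply the fact that every $\psi_i$ is a strict subformula of the original $\varphi_0$ we started translating — under which every recursive call strictly decreases, so that the induction (and the definition of $ST$ itself) terminates. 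Making that measure explicit, and verifying that the generalized inductive hypothesis is correctly stated so that all these side appeals are legitimate, is the part that requires genuine care rather than routine checking.
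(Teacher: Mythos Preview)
Your proposal is correct and, in fact, more careful than the paper's own argument. The paper proceeds by a direct induction on $\varphi$ and, for the case $\varphi=[-\varphi_1]\varphi_2$, writes the chain
\[
\mathcal{M}',w\vDash\varphi_2 \;\Leftrightarrow\; \mathcal{M}'\vDash ST^{\lr{x,\bot}}_x(\varphi_2)[w] \;\Leftrightarrow\; \mathcal{M}\vDash ST_x^{\lr{x,\bot};\lr{x,\varphi_1}}(\varphi_2)[w],
\]
attributing the second equivalence simply to the definition of the translation. But that step is exactly your generalized lemma specialised to a two-element sequence: it asserts that evaluating $ST^{\lr{x,\bot}}_x(\varphi_2)$ in the \emph{updated} model $\mathcal{M}'=\mathcal{M}|_{\lr{w,\varphi_1}}$ agrees with evaluating $ST_x^{\lr{x,\bot};\lr{x,\varphi_1}}(\varphi_2)$ in the \emph{original} $\mathcal{M}$. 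This does not fall out of the definition by inspection; it needs an argument of the kind you outline, with the inductive hypothesis strengthened to arbitrary index sequences $O$ and with the $\Diamond$-clause carrying the real content. Your attention to well-foundedness is also on target: a clean global measure is $|\varphi|+\sum_i|\psi_i|$, which strictly decreases at every recursive call---the subformula calls, the $[-\;]$-call to $ST^{O;\lr{x,\varphi_1}}_x(\varphi_2)$, and the truncated-sequence side calls $ST_y^{\lr{v_0,\psi_0};\dots;\lr{v_i,\psi_i}}(\psi_{i+1})$ inside the $\Diamond$-clause. So your route and the paper's are the same in spirit, but you make explicit the load-bearing generalization that the paper leaves implicit.
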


\begin{proof}
The proof is by induction on the structure of $\varphi$. The cases for Boolean and modal connectives are straightforward. When $\varphi$ is $[-\varphi_1]\varphi_2$, the following equivalences hold:
\begin{align*}
\mathcal{M},w\vDash[-\varphi_1]\varphi_2&\;\;{\rm{iff}}\;\; \exists \mathcal{M}'\; {\rm{s.t.}} \lr{\mathcal{M},w}\xrightarrow[]{-\varphi_1}\lr{\mathcal{M}',w}\; {\rm{and}} \;\mathcal{M}',w\vDash\varphi_2\\
&\;\;{\rm{iff}}\;\;\exists \mathcal{M}'\; {\rm{s.t.}} \lr{\mathcal{M},w}\xrightarrow[]{-\varphi_1}\lr{\mathcal{M}',w}\; {\rm{and}} \;
\mathcal{M}'\vDash ST^{\lr{x,\bot}}_x(\varphi_2)[w]\\
&\;\;{\rm{iff}}\;\;\mathcal{M}\vDash ST_x^{\lr{x,\bot};\lr{x,\varphi_1}}(\varphi_2)[w]\\
&\;\;{\rm{iff}}\;\;\mathcal{M}\vDash ST^{\lr{x,\bot}}_x(\varphi)[w]
\end{align*}
The first equivalence follows from the semantics directly. By the inductive hypothesis, the second one holds. The last two equivalences hold by Definition \ref{def-translation}.
\end{proof}

\begin{remark}\label{remark-trans}
The first-order translation for S$_d$ML is quite different from that for SML. To translate a SML formula, it suffices to maintain a finite set of ordered pairs of nodes encoding the links already deleted (cf. \cite{sabotage}). However it fails for S$_d$ML, since the number of links cut by $[-\;]$ may be infinite. Besides, Example \ref{example-validity} shows that we should also take care of the order of $[-\;]$ in a formula. Our finite sequence of ordered pairs of nodes and properties solves these problems and yields a translation for S$_d$ML.
\end{remark}

Finally, we end by answering the question stated at the outset of this section, which follows directly from Definition \ref{def-translation} and Theorem \ref{theorem-correctnessoftranslation}:
\begin{corollary}\label{corollary}
By the completeness theorem for first-order logic, S$_d$ML is axiomatizable.
\end{corollary}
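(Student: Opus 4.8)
The plan is to read Corollary~\ref{corollary} as the assertion that the set of S$_d$ML-valid formulas is recursively enumerable, and to obtain this by composing the first-order translation of Definition~\ref{def-translation} with a complete recursive proof calculus for first-order logic. Concretely, the steps are: (i) verify that $\varphi\mapsto ST^{\lr{x,\bot}}_x(\varphi)$ is a \emph{total computable} function $\mathcal{L}_d\to\mathcal{L}_1$; (ii) use Theorem~\ref{theorem-correctnessoftranslation} to reduce S$_d$ML-validity to first-order validity; (iii) invoke G\"odel's completeness theorem to enumerate the first-order validities; (iv) compose, and appeal to closure under modus ponens (and, if one wants a genuinely recursive axiom set rather than merely an r.e. one, to Craig's lemma).

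For step (i), the only point that is not entirely routine is termination, since the index sequence $O$ in Definition~\ref{def-translation} grows along the recursion and is unbounded in length a~priori, so one cannot simply recurse on the size of the input formula. The fix is a suitable weight: assign to a call $ST^O_x(\psi)$, where $O=\lr{v_0,\psi_0};\dots;\lr{v_n,\psi_n}$, the value $|\psi|+\sum_{i=0}^n|\psi_i|$; since every $\psi_i$ is a subformula of the argument of some $[-\;]$ encountered earlier, inspection of the two nontrivial clauses (the $\Diamond$-clause, whose recursive calls involve the $\psi_i$ together with strictly shorter prefixes of $O$, and the $[-\;]$-clause, which lengthens $O$ by one pair but strictly shrinks the formula) shows that each recursive call strictly decreases this weight. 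Hence the computation always halts and the translation is computable; that $ST^{\lr{x,\bot}}_x(\varphi)$ has only $x$ free is immediate from the definition.

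For steps (ii)--(iv): by Theorem~\ref{theorem-correctnessoftranslation}, $\varphi$ is S$_d$ML-valid iff the sentence $\forall x\,ST^{\lr{x,\bot}}_x(\varphi)$ is valid in first-order logic (and, dually, $\varphi$ is satisfiable iff $\exists x\,ST^{\lr{x,\bot}}_x(\varphi)$ has a model). By G\"odel's completeness theorem the set of valid $\mathcal{L}_1$-sentences is recursively enumerable, e.g. via any sound and complete recursive first-order calculus; composing such an enumeration with the computable translation of step~(i) yields a recursive enumeration of the S$_d$ML-valid formulas. Since this set is closed under modus ponens, it is axiomatizable---for instance by taking as axioms precisely those $\varphi\in\mathcal{L}_d$ whose translation $\forall x\,ST^{\lr{x,\bot}}_x(\varphi)$ is provable in a fixed complete first-order calculus, with modus ponens as the sole rule (and converting this r.e. axiom set into a recursive one by Craig's lemma if desired). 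I do not expect any real obstacle here; the single spot deserving a line of care is the termination argument for the translation. The resulting system is, of course, unilluminating as a calculus, which is exactly the motivation for revisiting a transparent, modal-style axiomatization in Section~\ref{sec-hybrid}.
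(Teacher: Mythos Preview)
Your proposal is correct and follows exactly the route the paper has in mind: the paper gives no explicit proof at all, merely stating that the corollary ``follows directly from Definition~\ref{def-translation} and Theorem~\ref{theorem-correctnessoftranslation},'' i.e., effective first-order translation plus G\"odel completeness. You have simply supplied the details the paper leaves implicit---most notably the termination argument for the translation (which is a genuine point, since $O$ grows) and the appeal to Craig's lemma---so there is nothing to compare beyond noting that your write-up is more careful than the paper's one-line gesture.
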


\section{Bisimulation and Expressivity for  S$_d$ML}\label{sec-bisimulationcharacterization}
Through the standard translation, we can translate a formula of S$_d$ML into first-order logic syntactically. In this section, we investigate the other aspect, i.e., model theories, for its expressive power. Let us begin with considering the notion of bisimulation for S$_d$ML.

\subsection{Bisimulation for S$_d$ML}\label{subsec-bisimulation}
After expanding the standard modal language $\mathcal{L}_\Box$ with the operator $[-\;]$, formulas of $\mathcal{L}_d$ are not invariant under the standard bisimulation any longer (cf. \cite{modallogic}). 

To show this, we first introduce a notion of \textit{definable sabotage modal equivalence} (notation, $\leftrightsquigarrow_d$) between pointed models: $\lr{\mathcal{M}_1,w}\leftrightsquigarrow_d\lr{\mathcal{M}_2,v}$ iff for each $\varphi\in\mathcal{L}_d$, $\mathcal{M}_1,w\vDash\varphi$ iff  $\mathcal{M}_2,v\vDash\varphi$.  

\begin{fact}\label{fact-bisimulation}
Formulas of $\mathcal{L}_d$ are not invariant under the standard bisimulation.
\end{fact}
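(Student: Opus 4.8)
The plan is to exhibit two pointed models that are standardly bisimilar but disagree on some $\mathcal{L}_d$-formula. Since the standard bisimulation only constrains the matching of successors (via \emph{back} and \emph{forth} plus atomic harmony), it cannot see how many successors a state has, whereas the operator $[-\;]$ is sensitive to that: deleting all links from $w$ to $\varphi$-nodes leaves a model whose $\Diamond$-behavior depends on the \emph{number} and \emph{identity} of the remaining successors, not just on their bisimulation types. So I would pick a $\varphi$ that distinguishes successors which the standard bisimulation is forced to identify.

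Concretely, first I would take a model $\mathcal{M}_1$ in which $w$ has two distinct successors $v_1,v_2$, both satisfying exactly the same propositional atoms (say both make some atom $q$ true and all others false), but arranged so that $v_1$ and $v_2$ have different outgoing structure — for instance $v_1$ has a successor, $v_2$ is a dead end — while still being mutually bisimilar via some auxiliary state; alternatively, and more cleanly, I would use a model where $w$ sees two copies of the same subtree, and a second model $\mathcal{M}_2$ where $w$ sees only one copy. These two pointed models are standardly bisimilar (the obvious relation pairing a state with its copy, and both successors of $w$ in $\mathcal{M}_1$ with the single successor in $\mathcal{M}_2$, satisfies forth and back). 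Then I would choose an $\mathcal{L}_d$-formula of the form $\langle -\psi\rangle\Diamond\chi$ (or $[-\psi]\Box\chi$) where $\psi$ is a modal formula true at exactly one of $v_1,v_2$: deleting the link from $w$ to that node removes one successor in $\mathcal{M}_1$ but — because in $\mathcal{M}_2$ the single successor satisfies $\psi$ — removes the only successor in $\mathcal{M}_2$. Hence $\Diamond\top$, suitably guarded, survives in $\mathcal{M}_1$ but not in $\mathcal{M}_2$, giving the required disagreement.

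Second, I would verify the two claims that make the example work: (i) the candidate relation is a standard bisimulation — this is a routine check of atomic harmony, forth, and back, clause by clause; and (ii) the chosen $\mathcal{L}_d$-formula is true in one pointed model and false in the other — this uses the semantics of $[-\;]$ from Definition~\ref{def-semantics}, computing the updated models $\mathcal{M}_i|_{\langle w,\psi\rangle}$ explicitly and then evaluating the inner modal formula. Step (ii) is where the care lies: one must make sure $\psi$ really is satisfied at the intended successor(s) and nowhere it would spoil the argument, and that the guarding of the outer modality is arranged so the truth values genuinely diverge.

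I expect the main obstacle to be \emph{choosing $\psi$ so that it cuts asymmetrically across the bisimulation classes while the two models remain standardly bisimilar}. Any propositional $\psi$ is useless here — validities (3)--(5) in Section~\ref{subsec-validities} show $[-\;]$ with Boolean arguments is tame, and in fact $p$-sabotage respects standard bisimulation. So $\psi$ must be a genuinely modal formula, and then I must confirm that the distinctions $\psi$ detects between $v_1$ and $v_2$ in $\mathcal{M}_1$ are \emph{not} visible to the standard bisimulation — i.e. that $v_1$ and $v_2$ can still be matched to a common state in $\mathcal{M}_2$. The cleanest route is the "two copies versus one copy of the same pointed model" construction: there $v_1$ and $v_2$ are literally bisimilar to each other and to the single successor in $\mathcal{M}_2$, so the standard bisimulation is automatic, while multiplicity of successors is exactly what $\langle -\psi\rangle$ can expose.
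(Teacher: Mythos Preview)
Your proposal has a genuine gap, in two connected places.

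First, your claim that ``any propositional $\psi$ is useless here'' is false. Validity~(4) only reduces $[-p]\Diamond q$, i.e.\ with a single $\Diamond$ following the deletion. For $[-q]\Diamond\Diamond q$ there is no such reduction: after moving to a successor $v$ of $w$, the inner $\Diamond$ is evaluated in the \emph{already-modified} model, and whether $v$'s outgoing links were touched depends on whether $v=w$. That identity condition is exactly what standard modal logic cannot see. The paper's proof in fact uses the formula $[-q]\Diamond\Diamond q$ with a propositional atom inside the operator.

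Second, and more seriously, your ``two copies versus one copy'' construction cannot work. If $w$ sees two isomorphic subtrees rooted at $v_1,v_2$ and $w'$ sees one copy rooted at $v$, then these pointed models are not merely standardly bisimilar: they are d-bisimilar, and hence by Theorem~\ref{theorem-bisimtoequiv} agree on every $\mathcal{L}_d$-formula. The paper confirms this explicitly in Example~\ref{example-count}. The reason is that $[-\psi]$ deletes \emph{uniformly} according to a formula, so if $v_1$ and $v_2$ are isomorphic, every $\psi$ is true at both or neither, and the deletion removes both links or neither; multiplicity is invisible. Your attempt to rescue this by choosing a modal $\psi$ that separates $v_1$ from $v_2$ while keeping them standardly bisimilar is circular: any such $\psi$ would itself already witness the Fact, and if $v_1,v_2$ are genuinely bisimilar (as in your ``clean'' construction) no modal $\psi$ exists.

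The idea you are missing is that the power of $[-\;]$ to break standard bisimulation comes from \emph{locality}, not from counting. The paper's example takes two bisimilar models in which a node $w$ and a bisimilar successor are \emph{identified} in one model (a self-loop at $v_1$) but \emph{distinct} in the other ($w_1\leftrightarrow w_2$). Deleting the $q$-link locally at $w_1$ leaves $w_2$'s link to the $q$-node intact, whereas deleting locally at $v_1$ also kills ``the successor's'' $q$-link, because the successor \emph{is} $v_1$. That is what makes $[-q]\Diamond\Diamond q$ diverge.
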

\begin{proof}
It suffices to give an example. Consider two models $\mathcal{M}_1$ and $\mathcal{M}_2$ that are defined as depicted in the following figure:
\begin{center}
\begin{tikzpicture} 
\node(a)[circle,draw,inner sep=0pt,minimum size=5mm,label=below:$p$] at (0,0) {$w_1$};
\node(b)[circle,draw,inner sep=0pt,minimum size=5mm,label=below:$p$] at (3,0){$w_2$};
\node(c)[circle,draw,inner sep=0pt,minimum size=5mm,label=left:$q$] at (1.5,-1.5){$w_3$};
\node(d)[circle,draw,inner sep=0pt,minimum size=5mm,label=right:$p$] at (6,0){$v_1$};
\node(e)[circle,draw,inner sep=0pt,minimum size=5mm,label=right:$p$] at (6,-1.5){$v_2$};
\draw[->](a) to[bend left=10] (b);
\draw[dashed](a) to [bend left=20]node {$Z$} (d);
\draw[->](b) to[bend left=10] (a);
\draw[dashed](b) to node {$Z$} (d);
\draw[dashed](c) to node{$Z$} (e);
\draw[->](a) to (c);
\draw[->](b) to (c);
\draw[->](d) to[in=70, out=110,looseness=8] (d);
\draw[->](d) to (e);
\end{tikzpicture} 
\end{center}
By the definition of standard bisimulation, we know that both  $\lr{\mathcal{M}_1,w_1}$ and $\lr{\mathcal{M}_1,w_2}$ are bisimilar to $\lr{\mathcal{M}_2,v_1}$, and that $\lr{\mathcal{M}_1,w_3}$ is bisimilar to $\lr{\mathcal{M}_2,v_2}$. However, we have $\mathcal{M}_1,w_1\vDash [-q] \Diamond\Diamond q$ and $\mathcal{M}_2,v_1\not\vDash [-q] \Diamond\Diamond q$. Therefore bisimulation does not imply definable sabotage modal equivalence.
\end{proof}

What is a suitable notion of bisimulation for S$_d$ML? Now we introduce a new notion of \textit{definable sabotage bisimulation} (\textit{d-bisimulation}). Here is the formal definition.

\begin{definition}[d-bisimulation]\label{def-bisimulation}
Let $\mathcal{M}_1=\lr{W_1,R_1,V_1}$ and $\mathcal{M}_2=\lr{W_2,R_2,V_2}$ be two models. A non-empty relation $Z_d$ is a \textit{d-bisimulation} between pointed models $\lr{\mathcal{M}_1,w}$ and $\lr{\mathcal{M}_2,v}$ (notation, $Z_d:\lr{\mathcal{M}_1,w}\underline{\leftrightarrow}_d\lr{\mathcal{M}_2,v}$) if the following five conditions are satisfied:

{\rm{\textbf{Atom}}}: If $\lr{\mathcal{M}_1,w}Z_d\lr{\mathcal{M}_2,v}$, then $\mathcal{M}_1,w\vDash p$ iff $\mathcal{M}_2,v\vDash p$, for each $p\in\textbf{\rm{\textbf{P}}}$.

{\rm{\textbf{Zig}$_{\Diamond}$}}: If $\lr{\mathcal{M}_1,w}Z_d\lr{\mathcal{M}_2,v}$ and there exists $w'\in W_1$ such that $R_1ww'$, then there exists $v'\in W_2$ such that $R_2vv'$ and $\lr{\mathcal{M}_1,w'}Z_d\lr{\mathcal{M}_2,v'}$.

{\rm{\textbf{Zag}$_{\Diamond}$}}: If $\lr{\mathcal{M}_1,w}Z_d\lr{\mathcal{M}_2,v}$ and there exists $v'\in W_2$ such that $R_2vv'$, then there exists $w'\in W_1$ such that  $R_1ww'$ and $\lr{\mathcal{M}_1,w'}Z_d\lr{\mathcal{M}_2,v'}$.

{\rm{\textbf{Zig}$_{[-\;]}$}}: For each $\varphi\in\mathcal{L}_d$, if $\lr{\mathcal{M}_1,w}Z_d\lr{\mathcal{M}_2,v}$ and there exists $\mathcal{M}_1'$ such that  $\lr{\mathcal{M}_1,w}\xrightarrow[]{-\varphi}\lr{\mathcal{M}_1',w}$, then there exists $\mathcal{M}_2'$ such that $\lr{\mathcal{M}_2,v}\xrightarrow[]{-\varphi}\lr{\mathcal{M}_2',v}$ and $\lr{\mathcal{M}_1',w}Z_d\lr{\mathcal{M}_2',v}$.

{\rm{\textbf{Zag}$_{[-\;]}$}}: For each $\varphi\in\mathcal{L}_d$, if $\lr{\mathcal{M}_1,w}Z_d\lr{\mathcal{M}_2,v}$ and there exists $\mathcal{M}_2'$ such that  $\lr{\mathcal{M}_2,v}\xrightarrow[]{-\varphi}\lr{\mathcal{M}_2',v}$, then there exists $\mathcal{M}_1'$ such that $\lr{\mathcal{M}_1,w}\xrightarrow[]{-\varphi}\lr{\mathcal{M}_1',w}$ and $\lr{\mathcal{M}_1',w}Z_d\lr{\mathcal{M}_2',v}$.

For brevity, we write $\lr{\mathcal{M}_1,w}\underline{\leftrightarrow}_d\lr{\mathcal{M}_2,v}$ if there exists a d-bisimulation $Z_d$ such that $\lr{\mathcal{M}_1,w}Z_d\lr{\mathcal{M}_2,v}$.
\end{definition}

Here the conditions for $\Diamond$ are as usual, and they do not change the model but change the evaluation point along the accessibility relation. While, the conditions for $[-\;]$ keep the evaluation point fixed but remove some links from the model. In the standard modal logic, given any two models $\mathcal{M}$ and $\mathcal{N}$, there always exists a bisimulation called \textit{largest bisimulation}, i.e., the set-theoretic union of all bisimulation relations between $\mathcal{M}$ and $\mathcal{N}$ (see \cite{openmind}). By Definition \ref{def-bisimulation}, it is not hard to see that this also holds for the new notion: for any two models, there is a largest d-bisimulation between them. This result is useful in various aspects, say, it can help us to simplify given models to smaller equivalent ones. 

As a concrete illustration of the notion introduced here, it is easy to see that the pointed models $\lr{\mathcal{M}_1,w_1}$ and $\lr{\mathcal{M}_2,v_1}$ in the proof of Fact \ref{fact-bisimulation} are not d-bisimilar. 

Next we show that formulas of S$_d$ML are invariant for d-bisimulation:
 
 \begin{theorem}[$\underline{\leftrightarrow}_d\subseteq\leftrightsquigarrow_d$]\label{theorem-bisimtoequiv}
For any $\lr{\mathcal{M}_1,w}$ and $\lr{\mathcal{M}_2,v}$, if $\lr{\mathcal{M}_1,w}\underline{\leftrightarrow}_d\lr{\mathcal{M}_2,v}$, then $\lr{\mathcal{M}_1,w}\leftrightsquigarrow_d\lr{\mathcal{M}_2,v}$. 
\end{theorem}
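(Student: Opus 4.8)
The plan is to prove the contrapositive-friendly statement directly: show by induction on the structure of $\varphi \in \mathcal{L}_d$ that whenever $Z_d : \lr{\mathcal{M}_1,w}\underline{\leftrightarrow}_d\lr{\mathcal{M}_2,v}$, we have $\mathcal{M}_1,w \vDash \varphi$ iff $\mathcal{M}_2,v \vDash \varphi$. Since $\underline{\leftrightarrow}_d$ abbreviates ``there exists a d-bisimulation linking the two pointed models'', it suffices to carry the single fixed relation $Z_d$ through the argument, re-invoking its back-and-forth clauses at each inductive step. Establishing this for all $\varphi$ is exactly the claim $\underline{\leftrightarrow}_d \subseteq \leftrightsquigarrow_d$.

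The base case ($\varphi = p$) is immediate from \textbf{Atom}, and the Boolean cases ($\neg$, $\land$) are routine since truth of compound Booleans at a point depends only on truth of subformulas at that same point, and the same $Z_d$-link is still available. For $\varphi = \Box\psi$: if $\mathcal{M}_1,w \vDash \Box\psi$ and $R_2 v v'$, then \textbf{Zag}$_\Diamond$ yields $w'$ with $R_1 w w'$ and $\lr{\mathcal{M}_1,w'}Z_d\lr{\mathcal{M}_2,v'}$; by the hypothesis on $w'$ we have $\mathcal{M}_1,w'\vDash\psi$, hence by the inductive hypothesis applied at $\psi$ we get $\mathcal{M}_2,v'\vDash\psi$, so $\mathcal{M}_2,v\vDash\Box\psi$. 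The converse direction uses \textbf{Zig}$_\Diamond$ symmetrically.

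The interesting case is $\varphi = [-\varphi_1]\varphi_2$. Suppose $\mathcal{M}_1,w \vDash [-\varphi_1]\varphi_2$, i.e.\ $\mathcal{M}_1|_{\lr{w,\varphi_1}},w \vDash \varphi_2$. Writing $\mathcal{M}_1' = \mathcal{M}_1|_{\lr{w,\varphi_1}}$, we have $\lr{\mathcal{M}_1,w}\xrightarrow{-\varphi_1}\lr{\mathcal{M}_1',w}$, so by \textbf{Zig}$_{[-\;]}$ (instantiated at the formula $\varphi_1 \in \mathcal{L}_d$) there is $\mathcal{M}_2'$ with $\lr{\mathcal{M}_2,v}\xrightarrow{-\varphi_1}\lr{\mathcal{M}_2',v}$ — so $\mathcal{M}_2' = \mathcal{M}_2|_{\lr{v,\varphi_1}}$ — and $\lr{\mathcal{M}_1',w}Z_d\lr{\mathcal{M}_2',v}$. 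Now $\varphi_2$ is a structurally smaller formula, so the inductive hypothesis applies to the pointed models $\lr{\mathcal{M}_1',w}$ and $\lr{\mathcal{M}_2',v}$ linked by $Z_d$, giving $\mathcal{M}_2',v\vDash\varphi_2$, i.e.\ $\mathcal{M}_2,v\vDash[-\varphi_1]\varphi_2$. The reverse implication uses \textbf{Zag}$_{[-\;]}$.

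The point deserving care — the ``main obstacle'' — is the well-foundedness of the induction in the $[-\;]$ case: the inductive hypothesis is invoked not on $\lr{\mathcal{M}_1,w},\lr{\mathcal{M}_2,v}$ themselves but on the \emph{updated} models $\lr{\mathcal{M}_1',w},\lr{\mathcal{M}_2',v}$. This is legitimate precisely because the induction is on the syntactic complexity of the $\mathcal{L}_d$-formula, not on the models, and $\varphi_2$ is a proper subformula of $[-\varphi_1]\varphi_2$; the models may change arbitrarily but the formula strictly shrinks. One should also note that the index formula $\varphi_1$ appearing inside $[-\;]$ is itself an $\mathcal{L}_d$-formula, which is exactly why Definition \ref{def-bisimulation} quantifies the clauses \textbf{Zig}$_{[-\;]}$ and \textbf{Zag}$_{[-\;]}$ over all $\varphi\in\mathcal{L}_d$ — without that universal quantification the induction would not go through, since $\varphi_1$ can be arbitrarily complex and need not be a subformula of anything we are currently analyzing. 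With these observations the induction closes and the theorem follows.
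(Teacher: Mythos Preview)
Your proof is correct and follows essentially the same structural induction on $\varphi$ as the paper's own proof, using the \textbf{Atom}, \textbf{Zig}/\textbf{Zag}$_\Diamond$, and \textbf{Zig}/\textbf{Zag}$_{[-\;]}$ clauses in the same places (you treat $\Box$ where the paper treats $\Diamond$, which is an immaterial dual reformulation). Your additional remarks on why the induction is well-founded and why the $[-\;]$ clauses must quantify over all $\varphi\in\mathcal{L}_d$ are accurate and helpful, but do not change the underlying argument.
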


\begin{proof}
We prove it by induction on the syntax of $\varphi$. Let $\lr{\mathcal{M}_1,w}\underline{\leftrightarrow}_d\lr{\mathcal{M}_2,v}$.

(1). $\varphi\in\textbf{P}$. By Definition \ref{def-bisimulation}, it holds directly that $\mathcal{M}_1,w\vDash\varphi$ iff $\mathcal{M}_2,v\vDash\varphi$.

(2). $\varphi$ is $\neg\psi$. By the inductive hypothesis, $\mathcal{M}_1,w\vDash\psi$ iff $\mathcal{M}_2,v\vDash\psi$. Consequently, we know that $\mathcal{M}_1,w\vDash\varphi$ iff $\mathcal{M}_2,v\vDash\varphi$.

(3). $\varphi$ is $\varphi_1\land\varphi_2$. By the inductive hypothesis, for each $i\in\{1,2\}$,  $\mathcal{M}_1,w\vDash\varphi_i$ iff $\mathcal{M}_2,v\vDash\varphi_i$. Thus it holds that $\mathcal{M}_1,w\vDash\varphi$ iff $\mathcal{M}_2,v\vDash\varphi$.

(4). $\varphi$ is $\Diamond\psi$. If $\mathcal{M}_1,w\vDash\varphi$, then there exists $w_1\in W_1$ such that $R_1ww_1$ and  $\mathcal{M}_1,w_1\vDash\psi$. By \textbf{Zig}$_\Diamond$, there exists $v_1\in W_2$ s.t. $R_2vv_1$ and $\lr{\mathcal{M}_1,w_1}\underline{\leftrightarrow}_d\lr{\mathcal{M}_2,v_1}$. By the inductive hypothesis, $\mathcal{M}_1,w_1\vDash\psi$ iff $\mathcal{M}_2,v_1\vDash\psi$. It is followed by $\mathcal{M}_2,v_1\vDash\psi$ immediately. Consequently it holds that $\mathcal{M}_2,v\vDash\varphi$. Similarly, we can obtain $\mathcal{M}_1,w\vDash\varphi$ from $\mathcal{M}_2,v\vDash\varphi$ by \textbf{Zag}$_\Diamond$. 

(5). $\varphi$ is $[-\varphi_1]\varphi_2$. If $\mathcal{M}_1,w\vDash\varphi$, then there is a $\mathcal{M}_1'$ s.t. $\lr{\mathcal{M}_1,w}\xrightarrow[]{-\varphi_1}\lr{\mathcal{M}_1',w}$ and $\mathcal{M}_1',w \vDash\varphi_2$. By \textbf{Zig}$_{[-\;]}$, there is some $\mathcal{M}_2'$ such that  $\lr{\mathcal{M}_2,v}\xrightarrow[]{-\varphi_1}\lr{\mathcal{M}_2',v}$ and $\lr{\mathcal{M}_1',w}\underline{\leftrightarrow}_d\lr{\mathcal{M}_2',v}$. By the inductive hypothesis, $\mathcal{M}_1',w\vDash\varphi_2$ iff $\mathcal{M}_2',v\vDash\varphi_2$. Hence it holds that $\mathcal{M}_2,v\vDash\varphi$. Similarly, by \textbf{Zag}$_{[-\;]}$, $\mathcal{M}_1,w\vDash\varphi$ follows from $\mathcal{M}_2,v\vDash\varphi$.
\end{proof}

As an application of Theorem \ref{theorem-bisimtoequiv}, let us consider a simple example:

\begin{example}\label{example-bisimulation}
Consider two models $\mathcal{M}_1$ and $\mathcal{M}_2$ defined respectively as follows:
\begin{center}
\begin{tikzpicture}[every node/.style={circle,draw,inner sep=0pt,minimum size=5mm}]
\node(a) at (0,2) {$w_1$};
\node(b) at (0,0){$w_2$};
\node(c) at (3,1){$v$};
\draw[->](a) to[bend right=15] (b);
\draw[->](b) to[bend right=15] (a);
\draw[->](c) to[in=70, out=110,looseness=8] (c);
\draw[dashed](a) to (c);
\draw[dashed](b) to (c);
\end{tikzpicture}    
\end{center}
By Definition \ref{def-bisimulation}, it holds that $\lr{\mathcal{M}_1,w_1}\underline{\leftrightarrow}_d\lr{\mathcal{M}_2,v}$ and $\lr{\mathcal{M}_1,w_2}\underline{\leftrightarrow}_d\lr{\mathcal{M}_2,v}$ (the d-bisimulation runs via the dashed lines). From Theorem \ref{theorem-bisimtoequiv}, we know that $\lr{\mathcal{M}_1,w_1}\leftrightsquigarrow_d \lr{\mathcal{M}_2,v}$ and $\lr{\mathcal{M}_1,w_2}\leftrightsquigarrow_d\lr{\mathcal{M}_2,v}$. Therefore, S$_d$ML cannot distinguish between nodes $w_{1(2)}$ and $v$.
\end{example}

Furthermore, for \textit{$\omega$-saturated} models, the converse of Theorem \ref{theorem-bisimtoequiv} holds as well. For each finite set $Y$, we denote the expansion of $\mathcal{L}_{1}$ with a set $Y$ of constants with $\mathcal{L}_1^Y$, and denote the expansion of $\mathcal{M}$ to $\mathcal{L}_1^Y$ with $\mathcal{M}^Y$.

\begin{definition}[$\omega$-saturation]\label{def-omega}
A model $\mathcal{M}=\lr{W,R,V}$ is \textit{$\omega$-saturated} if, for every finite subset $Y$ of $W$, the expansion $\mathcal{M}^Y$ realizes every set $\Gamma(x)$ of $\mathcal{L}_1^Y$-formulas whose finite subsets $\Gamma'(x)$ are all realized in $\mathcal{M}^Y$.
\end{definition}

Not all models are $\omega$-saturated, but every model can be extended to an $\omega$-saturated model with the same first-order theory (see \cite{modeltheory}). From Definition \ref{def-translation}, we know that each model $\mathcal{M}$ has an $\omega$-saturated extension with the same theory of S$_d$ML. For brevity, we use the set $\mathbb{T}^d(\mathcal{M},w)=\{\varphi\in\mathcal{L}_d\mid\mathcal{M},w\vDash\varphi\}$ of $\mathcal{L}_{d}$-formulas to denote the theory of $w$ in $\mathcal{M}$. By Definition \ref{def-omega}, we have the following result.

\begin{theorem}[$\leftrightsquigarrow_d\subseteq\underline{\leftrightarrow}_d$] \label{theorem-equivtobisim} For any two $\omega$-saturated pointed models $\lr{\mathcal{M}_1,w}$ and $\lr{\mathcal{M}_2,v}$, if $\lr{\mathcal{M}_1,w}\leftrightsquigarrow_d\lr{\mathcal{M}_2,v}$, then  $\lr{\mathcal{M}_1,w}\underline{\leftrightarrow}_d\lr{\mathcal{M}_2,v}$.
\end{theorem}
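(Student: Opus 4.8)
The plan is to run the classical Hennessy--Milner argument, taking as candidate d-bisimulation the relation $Z_d$ consisting of all pairs $(\lr{\mathcal{M}_1',w'},\lr{\mathcal{M}_2',v'})$ of $\omega$-saturated pointed models with $\lr{\mathcal{M}_1',w'}\leftrightsquigarrow_d\lr{\mathcal{M}_2',v'}$. By hypothesis the pair $(\lr{\mathcal{M}_1,w},\lr{\mathcal{M}_2,v})$ lies in $Z_d$, so it suffices to verify the five conditions of Definition \ref{def-bisimulation} for $Z_d$. The \textbf{Atom} condition is immediate from the definition of $\leftrightsquigarrow_d$.

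For \textbf{Zig}$_{\Diamond}$ (and symmetrically \textbf{Zag}$_{\Diamond}$) I would argue exactly as in the standard modal Hennessy--Milner theorem, exploiting $\omega$-saturation of the two \emph{unchanged} models. Given $\lr{\mathcal{M}_1,w}Z_d\lr{\mathcal{M}_2,v}$ and a successor $w'$ of $w$, I would look at the set of $\mathcal{L}_1$-formulas in one free variable $z$, over $\mathcal{M}_2$ expanded with a constant naming $v$, that contains $Rvz$ together with $ST_z^{\lr{z,\bot}}(\psi)$ for every $\psi\in\mathbb{T}^d(\mathcal{M}_1,w')$. Every finite subset of it is realized there: if $\psi_1,\dots,\psi_n$ all hold at $w'$, then $\mathcal{M}_1,w\vDash\Diamond(\psi_1\wedge\cdots\wedge\psi_n)$, hence $\mathcal{M}_2,v\vDash\Diamond(\psi_1\wedge\cdots\wedge\psi_n)$ by $\leftrightsquigarrow_d$, and Theorem \ref{theorem-correctnessoftranslation} turns the witnessing successor of $v$ into the required realizer. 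By $\omega$-saturation of $\mathcal{M}_2$ the whole set is realized by some $v'$, so $v'$ is a successor of $v$ with $\mathbb{T}^d(\mathcal{M}_1,w')\subseteq\mathbb{T}^d(\mathcal{M}_2,v')$; since $\mathcal{L}_d$ is closed under negation this is in fact an equality, so $\lr{\mathcal{M}_1,w'}\leftrightsquigarrow_d\lr{\mathcal{M}_2,v'}$, and as $\mathcal{M}_1,\mathcal{M}_2$ are untouched the pair $(\lr{\mathcal{M}_1,w'},\lr{\mathcal{M}_2,v'})$ is again in $Z_d$.

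For \textbf{Zig}$_{[-\;]}$ and \textbf{Zag}$_{[-\;]}$ the novelty of S$_d$ML shows up, but determinism of sabotage also helps: given $\lr{\mathcal{M}_1,w}Z_d\lr{\mathcal{M}_2,v}$ and $\varphi\in\mathcal{L}_d$, the only candidate on the right is $\lr{\mathcal{M}_2|_{\lr{v,\varphi}},v}$, so what has to be checked is (i) $\lr{\mathcal{M}_1|_{\lr{w,\varphi}},w}\leftrightsquigarrow_d\lr{\mathcal{M}_2|_{\lr{v,\varphi}},v}$ and (ii) that both of these reducts are still $\omega$-saturated. Point (i) is immediate from the semantics: for any $\psi\in\mathcal{L}_d$, $\mathcal{M}_1|_{\lr{w,\varphi}},w\vDash\psi$ iff $\mathcal{M}_1,w\vDash[-\varphi]\psi$ iff (by $\leftrightsquigarrow_d$, using $[-\varphi]\psi\in\mathcal{L}_d$) $\mathcal{M}_2,v\vDash[-\varphi]\psi$ iff $\mathcal{M}_2|_{\lr{v,\varphi}},v\vDash\psi$. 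Point (ii) I would isolate as the key lemma: \emph{if $\mathcal{M}$ is $\omega$-saturated then so is $\mathcal{M}|_{\lr{a,\varphi}}$, for every state $a$ and every $\varphi\in\mathcal{L}_d$}. The lemma is where Section \ref{sec-translation} is used: by Theorem \ref{theorem-correctnessoftranslation} the set of $\varphi$-worlds of $\mathcal{M}$ is defined by the parameter-free $\mathcal{L}_1$-formula $ST_x^{\lr{x,\bot}}(\varphi)$, so $\mathcal{M}|_{\lr{a,\varphi}}$ is precisely the structure obtained from $\mathcal{M}$ by reinterpreting $R$ as the relation defined by $Rxy\wedge\neg(x\equiv a\wedge ST_y^{\lr{y,\bot}}(\varphi))$ --- that is, $\mathcal{M}|_{\lr{a,\varphi}}$ is a first-order definable reduct of $\mathcal{M}$ using the single parameter $a$. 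Any type over such a reduct translates, by replacing each atomic $R$-formula with its defining formula, into a type over $\mathcal{M}$ expanded with a constant for $a$; finite satisfiability transfers along this translation, so $\omega$-saturation of $\mathcal{M}$ produces a realizer of the translated type, and this element realizes the original type in $\mathcal{M}|_{\lr{a,\varphi}}$. Hence the reduct is $\omega$-saturated, and both $[-\;]$-conditions go through.

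Assembling the five conditions shows that $Z_d$ is a d-bisimulation, and since $(\lr{\mathcal{M}_1,w},\lr{\mathcal{M}_2,v})\in Z_d$ the theorem follows. The main obstacle is exactly the lemma in point (ii): one has to be sure that deleting a possibly \emph{infinite}, $\mathcal{L}_d$-definable bundle of outgoing edges from a single node does not destroy $\omega$-saturation, and this is where the standard translation of Section \ref{sec-translation} does the real work; the $\Diamond$-conditions are then the usual Hennessy--Milner calculation, and the $[-\;]$-conditions become essentially bookkeeping once the lemma is available.
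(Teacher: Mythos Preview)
Your argument follows the same Hennessy--Milner strategy as the paper: take (a restriction of) $\leftrightsquigarrow_d$ as the candidate relation and verify the five clauses of Definition~\ref{def-bisimulation}. The paper simply declares $\leftrightsquigarrow_d$ itself to be the relation; for \textbf{Zig}$_{[-\;]}$ it obtains $\lr{\mathcal{M}_1|_{\lr{w,\varphi}},w}\leftrightsquigarrow_d\lr{\mathcal{M}_2|_{\lr{v,\varphi}},v}$ essentially as in your point~(i), via $[-\varphi]\psi\in\mathcal{L}_d$ (its appeal to $\omega$-saturation at that step is in fact superfluous, since truth of every finite conjunction already gives truth of every single formula).

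Where you genuinely go beyond the paper is your point~(ii). You restrict $Z_d$ to pairs of \emph{$\omega$-saturated} pointed models and then prove, using the standard translation, that $\mathcal{M}|_{\lr{a,\varphi}}$ is a first-order definable reinterpretation of $\mathcal{M}$ with one parameter, hence inherits $\omega$-saturation. The paper never isolates or proves this preservation lemma; yet it tacitly needs it, because after a $[-\;]$-step the \textbf{Zig}$_\Diamond$ clause must be re-applied to the sabotaged models, and that clause's compactness-plus-realization argument uses saturation of the target model. Your version therefore closes a gap the paper leaves open, and your lemma is exactly the right tool: once Section~\ref{sec-translation} makes the deleted edge-set first-order definable, preservation of $\omega$-saturation under such definable reducts is standard model theory. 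The trade-off is a slightly heavier proof in exchange for a relation $Z_d$ that is demonstrably closed under all back-and-forth moves.
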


\begin{proof}
We prove this by showing that $\leftrightsquigarrow_d$ satisfies the definition of d-bisimulation.

(1). For each $p\in\textbf{P}$, by the definition of $\leftrightsquigarrow_d$, it holds that $\mathcal{M}_1,w\vDash p$ iff $\mathcal{M}_2,v\vDash p$. This satisfies the condition of \textbf{Atom}. 

(2). Let $w_1\in W_1$ such that $R_1ww_1$. We show that point $v$ has a successor $v_1$ with $\lr{\mathcal{M}_1,w_1}\leftrightsquigarrow_d\lr{\mathcal{M}_2,v_1}$. For each finite subset $\Gamma$ of $\mathbb{T}^d(\mathcal{M}_1,w_1)$, it holds that:
\begin{align*}
\mathcal{M}_1,w\vDash\Diamond\bigwedge\Gamma &\;\;{\rm{iff}}\;\;  \mathcal{M}_2,v\vDash\Diamond\bigwedge\Gamma\\
&\;\;{\rm{iff}}\;\;\mathcal{M}_2\vDash ST_x^{\lr{x,\bot}}(\Diamond\bigwedge\Gamma)[v]\\
&\;\;{\rm{iff}}\;\;\mathcal{M}_2\vDash\exists y(Rxy\land ST_y^{\lr{x,\bot}}(\bigwedge\Gamma))[v]   
\end{align*}
Therefore every finite subset $\Gamma$ of $\mathbb{T}^d(\mathcal{M}_1,w_1)$ is satisfiable in the set of successors of node $v$. From Definition \ref{def-omega}, we know that $v$ has a successor $v_1$ where $\mathbb{T}^d(\mathcal{M}_1,w_1)$ is true. Thus, $\lr{\mathcal{M}_1,w_1}\leftrightsquigarrow_d\lr{\mathcal{M}_2,v_1}$. The proof of the \textbf{Zig$_\Diamond$} clause is completed.

(3). Similar to (2), we can prove that the condition of
\textbf{Zag$_\Diamond$} is satisfied.

(4). Let $\lr{\mathcal{M}_1',w}$ be a pointed model and $\varphi\in\mathcal{L}_d$ such that $\lr{\mathcal{M}_1,w}\xrightarrow[]{-\varphi}\lr{\mathcal{M}_1',w}$. We prove the \textbf{Zig$_{[-\;]}$} clause by showing there exists $\mathcal{M}_2'$ with  $\lr{\mathcal{M}_2,v}\xrightarrow[]{-\varphi}\lr{\mathcal{M}_2',v}$ and $\lr{\mathcal{M}_1',w}\leftrightsquigarrow_d\lr{\mathcal{M}_2',v}$. For each finite subset $\Gamma$ of $\mathbb{T}^d(\mathcal{M}_1',w)$, the following sequence of equivalences holds:
\begin{align*}
\mathcal{M}_1,w\vDash [-\varphi]\bigwedge\Gamma &\;\;{\rm{iff}}\;\;\mathcal{M}_2,v\vDash[-\varphi]\bigwedge\Gamma\\
&\;\;{\rm{iff}}\;\;\mathcal{M}_2\vDash ST_x^{\lr{x,\bot}}([-\varphi]\bigwedge\Gamma)[v]\\
&\;\;{\rm{iff}}\;\;\mathcal{M}_2\vDash ST_x^{\lr{x,\bot};\lr{x,\varphi}}(\bigwedge\Gamma)[v]   
\end{align*}
Hence each finite subset of  $\mathbb{T}^d(\mathcal{M}_1',w)$ is true at $\lr{\mathcal{M}_2',v}$, where $\lr{\mathcal{M}_2,v}\xrightarrow[]{-\varphi}\lr{\mathcal{M}_2',v}$. By Definition \ref{def-omega}, $\mathbb{T}^d(\mathcal{M}_1',w)$ is true at $\lr{\mathcal{M}_2',v}$. It is followed by  $\lr{\mathcal{M}_1',w}\leftrightsquigarrow_d\lr{\mathcal{M}_2',v}$.  

(5). Similar to (4), we can show that the condition of
\textbf{Zag$_{[-\;]}$} is satisfied.

Thus, we conclude that $\lr{\mathcal{M}_1,w}\underline{\leftrightarrow}_d\lr{\mathcal{M}_2,v}$. The proof is completed.
\end{proof}

\subsection{Characterization of S$_d$ML}\label{subsec-characterization}

By the notion of d-bisimulation, we can characterize S$_d$ML as the one-free-variable fragment of FOL that is invariant for d-bisimulation, where a first-order formula $\alpha(x)$ is invariant for d-bisimulation means that for all pointed models $\lr{\mathcal{M}_1,w_1}$ and $\lr{\mathcal{M}_2,w_2}$ such that $\lr{\mathcal{M}_1,w_1}\underline{\leftrightarrow}_d\lr{\mathcal{M}_2,w_2}$, it holds that $\mathcal{M}_1\vDash\alpha(x)[w_1]$ iff $\mathcal{M}_2\vDash\alpha(x)[w_2]$. 

\begin{theorem}[Characterization of S$_d$ML by d-bisimulation Invariance] \label{theorem-characterization} An $\mathcal{L}_1$-formula is equivalent to the translation of an $\mathcal{L}_d$-formula iff it is invariant for d-bisimulation.
\end{theorem}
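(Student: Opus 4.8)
\medskip
\noindent\textbf{Proof plan.}
The strategy is the familiar van Benthem characterization argument, with d-bisimulation replacing ordinary bisimulation; the three results already established do the work: Theorem~\ref{theorem-correctnessoftranslation}, by which each $\mathcal{L}_d$-formula is, via $ST^{\lr{x,\bot}}_x$, a one-free-variable first-order formula; Theorem~\ref{theorem-bisimtoequiv} ($\underline{\leftrightarrow}_d\subseteq\leftrightsquigarrow_d$); and Theorem~\ref{theorem-equivtobisim} ($\leftrightsquigarrow_d\subseteq\underline{\leftrightarrow}_d$ on $\omega$-saturated models). The left-to-right direction is immediate: if $\alpha(x)$ is equivalent to $ST^{\lr{x,\bot}}_x(\varphi)$ for some $\varphi\in\mathcal{L}_d$ and $\lr{\mathcal{M}_1,w_1}\underline{\leftrightarrow}_d\lr{\mathcal{M}_2,w_2}$, then by Theorem~\ref{theorem-bisimtoequiv} the two pointed models agree on $\varphi$, hence by Theorem~\ref{theorem-correctnessoftranslation} on $ST^{\lr{x,\bot}}_x(\varphi)$, hence on $\alpha(x)$.

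For the converse, assume $\alpha(x)$ is invariant for d-bisimulation and consider its set of modal consequences
\[
\mathrm{MOC}(\alpha)=\{\,ST^{\lr{x,\bot}}_x(\psi)\mid\psi\in\mathcal{L}_d\ \text{and}\ \alpha(x)\ \text{entails}\ ST^{\lr{x,\bot}}_x(\psi)\,\}.
\]
The heart of the proof is to show that $\mathrm{MOC}(\alpha)$ entails $\alpha(x)$. Granting this, first-order compactness produces $\psi_1,\dots,\psi_n\in\mathcal{L}_d$ with $\bigwedge_{i}ST^{\lr{x,\bot}}_x(\psi_i)$ entailing $\alpha(x)$; since $ST^{\lr{x,\bot}}_x$ commutes with conjunction this is $ST^{\lr{x,\bot}}_x(\psi_1\wedge\dots\wedge\psi_n)$, and the reverse entailment holds by the definition of $\mathrm{MOC}(\alpha)$, so $\alpha(x)$ is equivalent to the translation of the $\mathcal{L}_d$-formula $\psi_1\wedge\dots\wedge\psi_n$.

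To see that $\mathrm{MOC}(\alpha)$ entails $\alpha(x)$, take $\lr{\mathcal{N},w}$ with $\mathcal{N}\vDash\mathrm{MOC}(\alpha)[w]$. First one checks, by compactness, that $\{ST^{\lr{x,\bot}}_x(\psi)\mid\psi\in\mathbb{T}^d(\mathcal{N},w)\}\cup\{\alpha(x)\}$ is satisfiable: if some finite fragment $\{ST^{\lr{x,\bot}}_x(\psi_1),\dots,ST^{\lr{x,\bot}}_x(\psi_k),\alpha(x)\}$ were not, then $\alpha(x)$ would entail $ST^{\lr{x,\bot}}_x(\neg(\psi_1\wedge\dots\wedge\psi_k))$, putting this formula in $\mathrm{MOC}(\alpha)$ and so forcing $\mathcal{N},w\vDash\neg(\psi_1\wedge\dots\wedge\psi_k)$, contradicting $\psi_i\in\mathbb{T}^d(\mathcal{N},w)$. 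Fix a model $\lr{\mathcal{M},u}$ of this set; then $\mathcal{M}\vDash\alpha(x)[u]$, and since $\mathbb{T}^d(\mathcal{N},w)$ is $\mathcal{L}_d$-complete and $\mathcal{L}_d$ is closed under negation, $\mathbb{T}^d(\mathcal{M},u)=\mathbb{T}^d(\mathcal{N},w)$, i.e. $\lr{\mathcal{M},u}\leftrightsquigarrow_d\lr{\mathcal{N},w}$. Passing to $\omega$-saturated elementary extensions $\mathcal{M}^{\ast}\succ\mathcal{M}$ and $\mathcal{N}^{\ast}\succ\mathcal{N}$ (which exist by standard model theory, and along which $\mathcal{L}_d$-truth at points of the small models is preserved, by Theorem~\ref{theorem-correctnessoftranslation}), we get $\lr{\mathcal{M}^{\ast},u}\leftrightsquigarrow_d\lr{\mathcal{M},u}\leftrightsquigarrow_d\lr{\mathcal{N},w}\leftrightsquigarrow_d\lr{\mathcal{N}^{\ast},w}$. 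By Theorem~\ref{theorem-equivtobisim}, $\lr{\mathcal{M}^{\ast},u}\underline{\leftrightarrow}_d\lr{\mathcal{N}^{\ast},w}$; since $\mathcal{M}^{\ast}\vDash\alpha(x)[u]$ and $\alpha$ is d-bisimulation invariant, $\mathcal{N}^{\ast}\vDash\alpha(x)[w]$, hence $\mathcal{N}\vDash\alpha(x)[w]$, as needed.

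The step I expect to be the main obstacle is the passage from $\mathcal{L}_d$-equivalence to d-bisimilarity: because the $[-\;]$-clauses of d-bisimulation quantify over all $\mathcal{L}_d$-formulas and mention link-deleted models, it is not a priori clear that $\leftrightsquigarrow_d$ supplies the required ``sabotaged'' witnesses --- but on $\omega$-saturated models this is exactly Theorem~\ref{theorem-equivtobisim}, so once that is invoked the remaining work is routine bookkeeping: the transfer of $\mathcal{L}_d$-truth along elementary extensions is immediate from the standard translation, and the closing compactness argument yields a genuine $\mathcal{L}_d$-formula only because $\mathcal{L}_d$ is closed under the Booleans. One should also keep in mind, throughout the model constructions, that d-bisimulation invariance is stated for pointed models, so the designated points are always carried along.
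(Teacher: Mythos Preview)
Your proposal is correct and follows essentially the same route as the paper's own proof: define the set of modal consequences of $\alpha$ (the paper calls it $\mathbb{C}_d(\alpha)$, you call it $\mathrm{MOC}(\alpha)$), use compactness to show that the $\mathcal{L}_d$-theory of any model of this set together with $\alpha$ is satisfiable, obtain a $\leftrightsquigarrow_d$-equivalent pointed model where $\alpha$ holds, pass to $\omega$-saturated elementary extensions, invoke Theorem~\ref{theorem-equivtobisim} to upgrade $\leftrightsquigarrow_d$ to $\underline{\leftrightarrow}_d$, and transfer $\alpha$ back via d-bisimulation invariance and elementarity. Your write-up is in fact slightly more explicit than the paper's in a few bookkeeping steps (e.g., noting that $ST^{\lr{x,\bot}}_x$ commutes with conjunction, and that $\mathcal{L}_d$-truth is preserved along elementary extensions via Theorem~\ref{theorem-correctnessoftranslation}).
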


\begin{proof}
The direction from left to right holds directly by Theorem \ref{theorem-bisimtoequiv}. For the converse direction, let $\alpha$ be an $\mathcal{L}_1$-formula with one free variable $x$. Assume that $\alpha$ is invariant for d-bisimulation. Now we consider the following set:
$$\mathbb{C}_d(\alpha)=\{ST_x^{\lr{x,\bot}}(\varphi)|\varphi\in\mathcal{L}_d\ {\rm and}\ \alpha\vDash ST_x^{\lr{x,\bot}}(\varphi)\}.$$
The result holds from the following two claims:
\begin{itemize}
\item[(i).] If $\mathbb{C}_d(\alpha)\vDash\alpha$, then $\alpha$ is equivalent to the translation of an $\mathcal{L}_d$-formula.
\item[(ii).] $\mathbb{C}_d(\alpha)\vDash\alpha$, i.e., for any pointed model $\lr{\mathcal{M},w}$, $\mathcal{M}\vDash\mathbb{C}_d(\alpha)[w]$ entails $\mathcal{M}\vDash\alpha[w]$.
\end{itemize}

We show (i) first. Suppose that $\mathbb{C}_d(\alpha)\vDash\alpha$. From the compactness and deduction theorems of first-order logic, it holds that $\vDash\bigwedge\Gamma\to\alpha$ for some finite  subset $\Gamma$ of $\mathbb{C}_d(\alpha)$. The converse can be shown by the definition of $\mathbb{C}_d(\alpha)$: $\vDash \alpha\to\bigwedge\Gamma$. Thus it holds that $\vDash\alpha\leftrightarrow\bigwedge\Gamma$ proving the claim.

As to the claim (ii), let $\lr{\mathcal{M},w}$ be a pointed model such that $\mathcal{M}\vDash\mathbb{C}_d(\alpha)[w]$. Consider the set $\Sigma=ST_x^{\lr{x,\bot}}(\mathbb{T}^d(\mathcal{M},w))\cup\{\alpha\}$. We now show that:
\begin{itemize}
\item[(a).] The set $\Sigma$ is consistent.
\item[(b).] $\mathcal{M}\vDash\alpha[w]$, thus proving claim (ii).
\end{itemize}

Suppose that $\Sigma$ is not consistent. By the compactness of first-order logic, it follows that $\vDash\alpha\to\neg\bigwedge\Gamma$ for some finite subset $\Gamma$ of $\Sigma$. But then, by the definition of $\mathbb{C}_d(\alpha)$, we obtain $\neg\bigwedge\Gamma\in\mathbb{C}_d(\alpha)$, which is followed by $\neg\bigwedge\Gamma\in ST_x^{\lr{x,\bot}}(\mathbb{T}^d(\mathcal{M},w))$. However, it contradicts to $\Gamma\subseteq ST_x^{\lr{x,\bot}}(\mathbb{T}^d(\mathcal{M},w))$. Hence (a) holds.

Now we show that (b) holds as well. Since $\Sigma$ is consistent, it can be realized by some pointed model, say, $\lr{\mathcal{M}',w'}$. Note that both the pointed models have same theories, thus $\lr{\mathcal{M},w}\leftrightsquigarrow_d\lr{\mathcal{M}',w'}$. Now take two $\omega$-saturated elementary extensions $\lr{\mathcal{M}_{\omega},w}$ and $\lr{\mathcal{M}'_{\omega},w'}$ of $\lr{\mathcal{M},w}$ and $\lr{\mathcal{M}',w'}$ respectively. It can be shown that such extensions always exist (see \cite{modeltheory}). By the invariance of first-order logic under elementary extensions, from $\mathcal{M}'\vDash\alpha[w']$ we know $\mathcal{M}'_{\omega}\vDash\alpha[w']$. Moreover, by Theorem \ref{theorem-equivtobisim} and the assumption that $\alpha$ is invariant for d-bisimulation, we have $\mathcal{M}_{\omega}\vDash\alpha[w]$. By the elementary extension, we obtain $\mathcal{M}\vDash\alpha[w]$ that entails the claim (ii). Consequently, the proof is completed.
\end{proof}

Just as with SML, the key model-theoretic argument using saturation needed special care, but now with new modifications matching the above translation of S$_d$ML (cf. \cite{sabotage}).

\subsection{Exploring Expressive Power}\label{subsec-compare}
So far, we have already been able to show whether or not a first-order property belongs to the fragment identified by Theorem \ref{theorem-characterization}. In this section, we show several concrete examples, which will also present a comparison between S$_d$ML and SML with respect to their expressive power on models.

\begin{example}\label{example-seeback}
Consider the first-order property $\alpha_1(x)$ `The current point is irreflexive and not a dead end. Each of its successors only has access to it', i.e., $\alpha_1(x):=\neg Rxx\land\exists yRxy\land\forall y(Rxy\to Ryx\land\forall z(Ryz\to z\equiv x))$. From Example \ref{example-bisimulation}, we know that this property is not invariant for d-bisimulation. For instance, formula $\alpha_1(x)$ is true at state $w_1$ in $\mathcal{M}_1$ but fails at $v$ in $\mathcal{M}_2$. Thus this property is not definable in S$_d$ML.
\end{example}

Interestingly, the result may be quite different if we change the first-order property in Example \ref{example-seeback} slightly, say,

\begin{fact}\label{fact-seebackwithothersuccessors}
The first-order property $\alpha_1^+(x)$ `The current point is irreflexive and not a dead end. Some of its successors are dead ends, the others only have access to dead ends and the current point', i.e., $\alpha_1^+(x):=\neg Rxx\land\exists y(Rxy\land\neg\exists zRyz)\land\exists y(Rxy\land \exists zRyz)\land\forall y(Rxy\to\neg\exists zRyz\lor(Ryx\land\exists z(Ryz\land\neg\exists uRzu)\land\forall z(Ryz\to z\equiv x\lor\neg\exists uRzu)))$, is definable in S$_d$ML.
\end{fact}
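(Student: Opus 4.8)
The plan is to exhibit an explicit $\mathcal{L}_d$-formula $\varphi_1^+$ whose standard translation is logically equivalent to $\alpha_1^+(x)$, and to verify the equivalence by unwinding the semantics of S$_d$ML directly against the first-order clauses of $\alpha_1^+$. The intuition guiding the construction is exactly the one behind Fact~\ref{fact-seebackwithothersuccessors}: the first-order property of ``being a dead end'' is modally definable by $\Box\bot$, so the disjunct ``$\neg\exists z R y z$'' in $\alpha_1^+$ can be grabbed modally, whereas in Example~\ref{example-seeback} the analogous property (``every successor only sees $x$'') could not be pinned down because no $\mathcal{L}_d$-formula could single out the node $x$ itself. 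Here the dynamic operator $[-\;]$ does the work of ``removing the unwanted successors'': from a point $y$ satisfying $Ryx \land \exists z(Ryz \land \neg\exists u Rzu) \land \forall z(Ryz \to z \equiv x \lor \neg\exists u R z u)$, after deleting all links from $y$ to dead ends we are left with a point that sees only $x$, and $x$ in turn is a point sitting in the same configuration (irreflexive, with the prescribed kind of successors). This lets us encode the ``only has access to dead ends and the current point'' clause as: after sabotaging the dead-end successors of $y$, every remaining successor of $y$ is a point that is itself irreflexive and has a dead-end successor.

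Concretely, I would first fix the abbreviation $d := \Box\bot$ for ``dead end'' and let $\mathrm{irr}$ be a formula asserting irreflexivity-at-the-point in the relevant sense — but since irreflexivity is not modally expressible in general, the real trick is that we never need to assert irreflexivity of an arbitrary node, only to propagate the structural description. So I would define something like
\[
\chi := \Diamond\top \land \Diamond d \land \Box\bigl(d \lor [-d]\Diamond\top\bigr),
\]
and then take $\varphi_1^+ := [-d]\bot \;?\;\dots$ — i.e. carefully combine: the evaluation point is not a dead end ($\Diamond\top$), has a dead-end successor ($\Diamond d$), has a non-dead-end successor ($\Diamond\Diamond\top$ after suitable care), and every successor $y$ either is a dead end or, after deleting $y$'s links to dead ends, still has a successor and every such remaining successor is (itself irreflexive with a dead-end successor and) ``loops back'', which we capture with nested $[-d]$ modalities. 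The precise formula will require a couple of nested occurrences of $[-d]$ and $\Box$; I would write it out fully, then check by a short case analysis on an arbitrary pointed model $\lr{\mathcal{M},w}$ that $\mathcal{M},w \vDash \varphi_1^+$ iff $\mathcal{M} \vDash \alpha_1^+(x)[w]$, handling each conjunct of $\alpha_1^+$ separately and using the truth condition $\mathcal{M},w\vDash[-\psi]\theta$ iff $\mathcal{M}|_{\lr{w,\psi}},w\vDash\theta$ together with the fact that $V(\Box\bot)$ is exactly the set of dead ends and that deleting links cannot create new dead ends nor affect which nodes are dead ends (since $[-\;]$ only removes outgoing links from $w$).

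The main obstacle I anticipate is getting the formula to faithfully capture the clause $\forall y(Rxy \to \neg\exists z R y z \lor (Ryx \land \dots))$: the biconditional between ``$y$ loops back to $x$'' (a genuinely first-order, nominal-flavoured statement) and something purely modal. The point of Fact~\ref{fact-seebackwithothersuccessors} — contrasted with Example~\ref{example-seeback} — is that once $y$'s successors other than $x$ are all dead ends, the operator $[-d]$ strips them away and leaves a model in which $y$'s only successor is $x$, and $x$ is recognizable not by name but by the structural property that it again has both a dead-end successor and a non-dead-end successor; so ``$y$ sees only $x$-like points after sabotage'' is modally expressible even though ``$y$ sees $x$'' is not. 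I would therefore spend the bulk of the verification argument on showing that after the relevant $[-d]$ update, the surviving successors of $y$ are precisely the non-dead-end successors of $y$, that under the hypotheses of $\alpha_1^+$ these all coincide with $x$, and conversely that if the modal formula holds then each such surviving successor must satisfy the structural description forcing it to be $x$ — using that $x$ is the unique non-dead-end successor of $y$ guaranteed by $\forall z(Ryz \to z\equiv x \lor \neg\exists u R z u)$. The remaining conjuncts (not a dead end, has a dead-end successor, has a non-dead-end successor, irreflexivity of the current point) are routine: the first three are immediate, and irreflexivity of $w$ itself is forced because $w$ is among its own successors only if $Rww$, which would make $w$ one of the $y$'s, and the structural constraints on the $y$'s are incompatible with $w$'s own configuration — this last incompatibility is the one genuinely delicate bookkeeping step, which I would verify by noting $w$ has a non-dead-end successor that is not covered by ``$z\equiv x \lor \neg\exists u R z u$'' applied with $y=w$, a contradiction.
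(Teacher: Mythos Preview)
Your high-level instinct is right — the dead-end predicate $\Box\bot$ is the lever, and $[-\Box\bot]$ is what turns the first-order clause ``$z\equiv x$'' into something modally checkable. But you have placed the sabotage at the wrong node, and this is not cosmetic: it is where the whole argument lives.

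You propose to apply $[-d]$ (with $d:=\Box\bot$) \emph{at the successor} $y$, stripping $y$'s links to dead ends so that $y$'s remaining successors should all be $w$, and then to recognise each remaining successor as $w$ ``not by name but by the structural property that it again has both a dead-end successor and a non-dead-end successor''. That does not force identity. Take $W=\{w,e_0,y,e_1,z,e_2\}$ with the $e_i$ dead ends and edges $w\to e_0,\ w\to y,\ y\to e_1,\ y\to z,\ z\to e_2,\ z\to y$. Here $\alpha_1^+(w)$ fails (the non-dead-end successor of $y$ is $z\neq w$), yet after sabotage at $y$ the sole remaining successor of $y$ is $z$, which has a dead-end successor $e_2$ and a non-dead-end successor $y$ and so passes your test. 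Worse, the submodel generated from $y$ after this sabotage ($y\to z,\ z\to e_2,\ z\to y$) is isomorphic to what you get in the intended good model ($w\to e_0,y$; $y\to e_1,w$) after the same sabotage at $y$ ($y\to w,\ w\to e_0,\ w\to y$). Hence \emph{no} conjunct of the shape $\Box(d\lor[-d]\theta)$ evaluated at $w$ can separate the two models, whatever $\theta$ you choose; the ``nested $[-d]$ modalities'' you allude to cannot rescue this, and the remaining static conjuncts you list cannot help either since the two pointed models are standardly bisimilar at $w$.

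The paper's construction is the mirror image: apply $[-\Box\bot]$ \emph{at the evaluation point} $w$ first, and only then descend via $\Box$. This marks $w$ itself — it is now the unique relevant point that has lost its dead-end successors — so from two levels down one can test ``$u=w$'' by the purely modal condition ``$u$ has no dead-end successor'', provided a preparatory static conjunct $(B_2)$ has recorded that every such $u$ originally \emph{did} have one. The decisive conjunct is $(B_3)=[-\Box\bot]\Box(\Diamond\Box\bot\land\Box(\neg\Box\bot\to\neg\Diamond\Box\bot))$, with the sabotage sitting \emph{outside} the first $\Box$, not inside it. On the counterexample above $(B_3)$ fails at once: after deleting $w\to e_0$, the non-dead-end successor $z$ of $y$ still sees $e_2$, violating $\neg\Box\bot\to\neg\Diamond\Box\bot$ there. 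Your irreflexivity discussion in the last paragraph likewise needs the right formula to argue from; once the sabotage is placed at $w$, irreflexivity drops out of the same conjunct in one line (if $Rww$ then $w$, being non-dead, would survive the cut and would have to retain a dead-end successor, which it cannot).
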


\begin{proof}
Consider the following formulas of S$_d$ML:
\begin{align*}
 (B_1)\qquad &\Diamond\Box\bot\land\Diamond\Diamond\top\\
 (B_2)\qquad & \Box(\Diamond\top\to\Diamond\Box\bot\land\Diamond(\Diamond\Box\bot\land\Diamond\Diamond\top)\land\Box(\Box\bot\lor(\Diamond\Box\bot\land\Diamond\Diamond\top)))\\
 (B_3)\qquad & [-\Box \bot]\Box(\Diamond\Box\bot\land\Box(\neg\Box\bot\to\neg\Diamond\Box\bot))
\end{align*}
Let $\varphi_1^+:=(B_1\land B_2\land B_3)$. This formula is satisfiable, say, it is true at $\lr{\mathcal{M}_1,w_1}$ in the proof of Fact \ref{fact-bisimulation}. Let $\lr{\mathcal{M},u}$ be a pointed model. It is not hard to see that $\mathcal{M},u\vDash \varphi_1^+$ if $\mathcal{M}\vDash\alpha_1^+(x)[u]$. Now assume that $\mathcal{M},u\vDash \varphi_1^+$. Formula $(B_1)$ states that, the current point $u$ has some successors $u_1$ that are dead ends, and some successors $u_2$ which have successors. By $(B_2)$, each $u_2$ reaches some dead end $u_3$, and some point $u_4$ which is similar to $u$: it has some successors which are dead ends, and some successors that also have successors. After cutting the links from node $u$ to the dead ends, from $(B_3)$ it holds that $u_2$ still can see some dead ends, and that $u_4$ cannot reach dead ends any longer. Therefore we obtain $u_2\not=u$ and $u_4=u$, consequently, $\mathcal{M}\vDash\alpha_1^+(x)[u]$. So we conclude that  $\mathcal{M}\vDash\alpha_1^+(x)[u]$ iff $\mathcal{M},u\vDash\varphi_1^+$ for any pointed model $\lr{\mathcal{M},u}$. 
\end{proof}

Through observation, we can find that the property $\alpha_1^+(x)$ expands the current point and its successors in $\alpha_1(x)$ with some successors that are dead ends. But the former one is definable in S$_d$ML and the latter one is not. What is the reason for this? 

Suppose that $\lr{\mathcal{M},u}$ be a pointed model that is d-bisimilar to $\lr{\mathcal{M}_1,w_1}$ in the proof of Fact \ref{fact-bisimulation}. By Definition \ref{def-bisimulation}, we know that $u$ can reach some dead end $u_1$, and some $u_2$ that has access to some dead ends. Except those dead ends, $u_2$ can also see some point $u_3$ that is similar to $u$: $u_3$ can reach some dead end and some node that has successors. Further more, after cutting the links from $u$ to the dead ends, $u_2$ still can see some dead ends, but $u_3$ cannot reach any dead ends now. So we have $u_2\not=u$ and $u_3=u$. In such a way, we conclude that the property $\alpha_1^+(x)$ is invariant under d-bisimulation. 

\begin{example}\label{example-count}
Consider the FOL property `There exist $n$ successors of the current point'. This property is not invariant for d-bisimulation. For instance, in the following models:
\begin{center}
\begin{tikzpicture}[every node/.style={circle,draw,inner sep=0pt,minimum size=5mm}]
\node(a) at (0,0) {$w$};
\node(b) at (-1,-1.5){$w_2$};
\node(c) at (1,-1.5){$w_1$};
\node(d) at (4,0){$v$};
\node(e) at (4,-1.5){$v_1$};
\draw[->](a) to (b);
\draw[->](a) to (c);
\draw[->](d) to (e);
\draw[dashed](a) to (d);
\draw[dashed](b) to [bend left=15](e);
\draw[dashed](c) to (e);
\end{tikzpicture} 
\end{center}
the property `there exist $2$ successors' is true at point $w$ in the model to the left, but it fails at $v$ to the right. Hence it is not definable in S$_d$ML.
\end{example}

In contrast, as noted in \cite{sabotage}, SML can count successors of the current state,
and it can also define the length of a cycle. That is, for each positive natural number $n$, there exists a SML formula $\varphi$ such that, for any $\mathcal{M}=\lr{W,R,V}$ and $w\in W$, $\mathcal{M},w\vDash\varphi$ iff $\lr{W,R}$ is a cycle of length $n$. Is this property definable in S$_d$ML?

\begin{example}\label{example-cycle}
Recall the two models displayed in Example \ref{example-bisimulation}. The underlying frame of $\mathcal{M}_1$ is a cycle of length 2, while that of $\mathcal{M}_2$ is a cycle of length 1. So S$_d$ML cannot define the length of a cycle.
\end{example}
 
 Intuitively, these differences between S$_d$ML and SML stem from the features of $[-\;]$ and the standard sabotage modality $\blacklozenge$. In SML, each occurrence of $\blacklozenge$ in a formula deletes exactly one link. While, in S$_d$ML, $[-\;]$ operates uniformly, which blocks the logic to define the first-order properties in Example \ref{example-count}-\ref{example-cycle}. However, this does not mean that S$_d$ML is less expressive than SML with respect to models. Actually the notion of bisimulation for sabotage modal logic is not an extension of d-bisimulation. When tackling with cases involving infinite, operator $[-\;]$ may show more strength. Here is an example.
 
 \begin{example}\label{example-infinite}
 We establish a model in the following way: 
 \begin{center}
  \begin{tikzpicture} 
\node(a)[circle,draw,inner sep=0pt,minimum size=5mm] at (0,0) {$w$};
\node(b)[circle,draw,inner sep=0pt,minimum size=5mm] at (-3,-1.5) {$v_0$};
\node(c)[circle,draw,inner sep=0pt,minimum size=5mm] at (-2,-1.5) {$v_1$};
\node(d)[circle,inner sep=0pt,minimum size=5mm] at (-1,-1.5) {$\cdots$};
\node(d)[circle,draw,inner sep=0pt,minimum size=5mm] at (0,-1.5) {$v_n$};
\node(e)[circle,inner sep=0pt,minimum size=5mm] at (1,-1.5) {$\cdots$};
\draw[->](a) to (b);
\draw[->](a) to (c);
\draw[->](a) to (d);
\end{tikzpicture}
 \end{center}
 As we can see, node $w$ has countable successors. By the truth condition for $[-\;]$, formula $[-\top]\Box\bot$ is true at node $w$, which says that all links starting at $w$ are cut by the operator $[-\;]$. However there is no such a formula $\varphi$ of SML that can do this: each occurrence of $\blacklozenge$ cuts one link, but the number of $\blacklozenge$ occurring in a formula is always finite.
 \end{example} 
 
From Examples \ref{example-count}-\ref{example-infinite}, we then get the following conclusion.  
 
 \begin{fact}\label{fact-comparisonwithSML}
S$_d$ML and SML are not comparable in their expressive power on models. 
 \end{fact}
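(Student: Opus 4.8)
The plan is to establish the two non-inclusions separately, each witnessed by an explicit first-order property together with a pair of models. The statement ``S$_d$ML and SML are not comparable in their expressive power on models'' means: there is a model property definable in SML but not in S$_d$ML, and there is a model property definable in S$_d$ML but not in SML. Both halves have already been prepared by Examples \ref{example-count}--\ref{example-infinite}, so the proof is essentially a matter of assembling them.

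For the first direction (SML $\not\subseteq$ S$_d$ML), I would take the property ``the underlying frame is a cycle of length $n$'' for some fixed $n$, say $n=1$ versus $n=2$. As recalled just before Example \ref{example-cycle}, this property is definable in SML for every positive $n$ (citing \cite{sabotage}). Example \ref{example-cycle}, in turn, exhibits the two models from Example \ref{example-bisimulation}: the frame of $\mathcal{M}_1$ is a 2-cycle and that of $\mathcal{M}_2$ is a 1-cycle, yet $\lr{\mathcal{M}_1,w_1}\underline{\leftrightarrow}_d\lr{\mathcal{M}_2,v}$, hence by Theorem \ref{theorem-bisimtoequiv} they satisfy the same $\mathcal{L}_d$-formulas. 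So no $\mathcal{L}_d$-formula can separate a 1-cycle from a 2-cycle, while some SML-formula can. (Example \ref{example-count}, on successor-counting, gives an alternative witness for the same direction, and I might mention it in passing.) For the second direction (S$_d$ML $\not\subseteq$ SML), I would invoke Example \ref{example-infinite}: on the model where $w$ has countably many successors, the $\mathcal{L}_d$-formula $[-\top]\Box\bot$ is true at $w$, asserting that after deleting all outgoing links $w$ becomes a dead end. No SML-formula expresses the property ``$w$ is a dead end after deleting all its outgoing links'' uniformly, since each occurrence of $\blacklozenge$ removes exactly one link and a formula contains only finitely many such occurrences, so on a node with infinitely many successors an SML-formula can never force all of them to be removed. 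Making this last point fully rigorous is the one place that needs a little care: I would argue that for any SML-formula $\psi$ with $k$ occurrences of $\blacklozenge$, one can find two such ``infinite fan'' models — or rather the same model evaluated against two scenarios — that $\psi$ cannot tell apart, e.g. by a straightforward Ehrenfeucht--Fra\"{\i}ss\'e-style or bisimulation argument for SML showing that $\psi$ cannot distinguish ``$w$ has finitely many but $> k$ successors, all eventually removed'' from a genuinely infinite fan; alternatively one simply appeals to the observation, already made in the text, that an SML-formula after all its deletions still leaves infinitely many links out of $w$ intact.

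Putting these together: the cycle-length property (or successor-counting) lies in SML but not in S$_d$ML, and the property defined by $[-\top]\Box\bot$ lies in S$_d$ML but not in SML; hence neither logic subsumes the other on models, which is exactly Fact \ref{fact-comparisonwithSML}.

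The main obstacle I anticipate is the inexpressibility claim on the SML side of the second direction — that ``becoming a dead end after deleting all outgoing links'' is not SML-definable. The $\mathcal{L}_d$ side is immediate from Theorem \ref{theorem-bisimtoequiv} and the already-displayed d-bisimilar model pairs, but ruling out \emph{all} SML-formulas requires either citing a bisimulation-invariance theorem for SML from \cite{sabotage} or giving a direct finiteness argument on the number of $\blacklozenge$'s; I would lean on the latter, since it is short and self-contained: any SML-formula contains a fixed finite number $k$ of $\blacklozenge$'s, so evaluated on a model whose root has at least $k+1$ (or infinitely many) successors it can never realize a model update in which every outgoing link of the root is gone, whereas $[-\top]\Box\bot$ does precisely that.
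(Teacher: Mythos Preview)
Your overall plan mirrors the paper exactly: the paper derives Fact \ref{fact-comparisonwithSML} simply by pointing to Examples \ref{example-count}--\ref{example-infinite}, and you invoke the same examples in the same roles. The first direction (SML $\not\subseteq$ S$_d$ML) is fine: cycle-length is SML-definable by \cite{sabotage}, and the d-bisimilar pair from Example \ref{example-bisimulation} together with Theorem \ref{theorem-bisimtoequiv} shows it is not S$_d$ML-definable.

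The second direction, however, has a genuine gap in your elaboration. The formula $[-\top]\Box\bot$ is a \emph{validity} of S$_d$ML: at any pointed model $\lr{\mathcal{M},w}$, the update $\mathcal{M}|_{\lr{w,\top}}$ removes every edge out of $w$, so $\Box\bot$ always holds afterwards. Consequently the ``property defined by $[-\top]\Box\bot$'' is the universal class of pointed models, which is trivially SML-definable (by $\top$). Your proposed Ehrenfeucht--Fra\"iss\'e argument about finitely many $\blacklozenge$'s therefore cannot land where you want it to: no matter how many successors $w$ has, both logics make the candidate formula true there. Example \ref{example-infinite} is really about the \emph{update capability} of the operator, not about separating classes of pointed models; the paper's one-line derivation of Fact \ref{fact-comparisonwithSML} from that example is correspondingly informal. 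To make the direction S$_d$ML $\not\subseteq$ SML rigorous under your reading (``there is a model property definable in S$_d$ML but not in SML'') you would need a \emph{non-trivial} $\mathcal{L}_d$-formula together with two pointed models that are SML-equivalent (e.g.\ sabotage-bisimilar in the sense of \cite{sabotage}) yet distinguished by that formula. Neither your proposal nor the paper supplies such a pair; you should either produce one or reinterpret ``expressive power on models'' to match the paper's looser intent.
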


\section{From S$_d$ML to Hybrid Logics}\label{sec-hybrid}
While an effective first-order translation shows that validity in S$_d$ML is effectively axiomatizable, it gives no concrete information about a more `modal' complete set of proof principles. In this section, following the techniques developed by some dynamic-epistemic logics (cf. e.g. \cite{baltag}), we try to axiomatize S$_d$ML by means of recursion axioms. 

The principles for Boolean cases are as usual. However, as for $[-\varphi]\Box\psi$, there is a problem. From the typical method of recursion axioms used in dynamic-epistemic logic, we know that dynamic operators can be pushed inside through standard modalities. But it fails for S$_d$ML, since that after pushing $[-\;]$ under a standard modality over successors of the current world, the model change is not local in the successors any longer and it takes place somewhere else (cf. \cite{sabotage}).
 
 Hence the principle for $[-\varphi]\Box\psi$ should illustrate the position where the change happens. To do so, a natural method is to seek help from hybrid logics, which enable us to name nodes in a model. Consider the hybrid logic with \textit{nominals}, \textit{at operator} $@$ and \textit{down-arrow operator} $\downarrow$, which is denoted by $\mathcal{H}(\downarrow)$. With its formulas of the form $\downarrow x\Box\downarrow y\varphi$, we can manipulate links by naming pairs of points (see \cite{Relation change}).

\subsection{S$_d$ML and Hybrid Logics}\label{subsec-hybridtranslation}

As a warm-up, we briefly discuss the relation between S$_d$ML and hybrid logics. In particular,  the following translation illustrates that S$_d$ML can be reduced to $\mathcal{H}(\downarrow)$. Similar to the standard translation, a finite sequence $O$ will be used.

\begin{definition}[The Hybrid Translation for S$_d$ML]\label{def-hybridtranslation} Let $O$ be a finite sequence of pairs of variables of nominals and properties, denoted with $\lr{x_0,\psi_0};...;\lr{x_i,\psi_i};...;\lr{x_n,\psi_n}(0\le i\le n)$. The translation $T^O:\mathcal{L}_d\to \mathcal{H}(\downarrow)$ is recursively  defined in the following way: 
\begin{eqnarray} 
T^O(p)&=&p\nonumber\\
T^O(\top)&=&\top\nonumber\\
T^O(\neg\varphi)&=&\neg T^O(\varphi)\nonumber\\
T^O(\varphi_1\land\varphi_2)&=&T^O(\varphi_1)\land T^O(\varphi_2)\nonumber\\
T^O(\Diamond\varphi)&=&\downarrow x\Diamond(\neg(@_xx_0\land T^{\lr{x_0,\bot}}(\psi_0))\land\nonumber\\
&&\bigwedge\limits_{0\leqslant i\leqslant n-1}\!\!\neg(@_xx_{i+1}\land T^{\lr{x_0,\psi_0};...;\lr{x_i,\psi_i}}(\psi_{i+1}))\land T^O(\varphi))\nonumber\\
T^O([-\psi]\varphi)&=&\downarrow x T^{O;\lr{x,\psi}}(\varphi)\nonumber
\end{eqnarray}
\end{definition}

In fact, the truth value of a $\mathcal{H}(\downarrow)$-formula in some model may depend on the valuation of nominals occurring in it. However, this is not problematic: by Definition \ref{def-hybridtranslation}, for each $\varphi\in\mathcal{L}_d$, $T^{\lr{x,\bot}}(\varphi)$ yields a $\mathcal{H}(\downarrow)$-formula with no free variables of nominals. For brevity, we will leave out the assignment of values to variables in models of $\mathcal{H}(\downarrow)$ if there is no ambiguity.  Now we show the correctness of Definition \ref{def-hybridtranslation}.

\begin{theorem}[Correctness of the Hybrid  Translation]\label{theorem-correctnessofhybridtranslation}
Let $\lr{\mathcal{M},w}$ be a pointed model and $\varphi$ be a formula of $\mathcal{L}_d$, then
$$\mathcal{M},w\vDash\varphi\;\;{\textit{iff}}\;\;\mathcal{M},w\vDash T^{\lr{x,\bot}}(\varphi).$$
\end{theorem}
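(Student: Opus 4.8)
The plan is to argue by induction on the structure of $\varphi\in\mathcal{L}_d$, mirroring the proof of Theorem~\ref{theorem-correctnessoftranslation} but with the stronger induction loading needed to deal with the running sequence $O$. A bare induction on $\varphi$ alone will not close, because the recursive clauses for $\Diamond$ and $[-\;]$ produce sub-translations $T^{O'}$ with $O'\neq\lr{x,\bot}$ and with auxiliary nominal-variables bound along the way; so the statement I would actually prove by induction is a parametrized version. Given a pointed model $\lr{\mathcal{M},w}$, an assignment $g$ interpreting the nominal-variables, and a sequence $O=\lr{x_0,\psi_0};\dots;\lr{x_n,\psi_n}$ whose variables $x_0,\dots,x_n$ are all in the domain of $g$, I would prove: $\mathcal{M},w,g\vDash T^O(\varphi)$ iff $\mathcal{M}',w\vDash\varphi$, where $\mathcal{M}'$ is the model obtained from $\mathcal{M}$ by successively deleting, for each $i$, all links from $g(x_i)$ to worlds satisfying $\psi_i$ (this is exactly the model update dictated by $O$ under $g$, with the convention that the leading pair $\lr{x,\bot}$ deletes nothing). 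The target statement is the special case $O=\lr{x,\bot}$, where $\mathcal{M}'=\mathcal{M}$.

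The Boolean cases are immediate, and the atom/$\top$ cases hold because link deletion never changes the valuation (this is just validity~(3) in spirit). The two substantive cases are $\Diamond\varphi$ and $[-\psi]\varphi$. For $[-\psi]\varphi$: by the semantics, $\mathcal{M}',w\vDash[-\psi]\varphi$ iff $(\mathcal{M}')|_{\lr{w,\psi^{\mathcal{M}'}}},w\vDash\varphi$; and the translation sets $T^O([-\psi]\varphi)=\downarrow x\,T^{O;\lr{x,\psi}}(\varphi)$, which under $g$ binds $x$ to $w$ and then, by the induction hypothesis applied to the extended sequence $O;\lr{x,\psi}$ and the extended assignment $g[x\mapsto w]$, equals truth of $\varphi$ in the model where we additionally delete links from $w$ to $\psi$-worlds — which is exactly $(\mathcal{M}')|_{\lr{w,\psi}}$. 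The only subtlety is that $\psi$ in the new pair must be evaluated in the already-updated model $\mathcal{M}'$, not in $\mathcal{M}$; but this is precisely the behaviour of the inductively defined update read off from a sequence, since the pairs are processed in order, so the bookkeeping matches.

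For $\Diamond\varphi$: $\mathcal{M}',w\vDash\Diamond\varphi$ iff there is an $R^{\mathcal{M}'}$-successor $v$ of $w$ with $\mathcal{M}',v\vDash\varphi$. Now $\lr{w,v}\in R^{\mathcal{M}'}$ iff $\lr{w,v}\in R^{\mathcal{M}}$ and this link survives all the deletions recorded in $O$, i.e.\ for no recorded pair $\lr{x_i,\psi_i}$ do we have $g(x_i)=w$ and $v\vDash\psi_i$ in the model current at step $i$. The translation's $\Diamond$-clause says exactly this: after $\downarrow x$ binds $x$ to $w$, it asserts $\Diamond$ of a successor at which, for each index $i$, it is \emph{not} the case that $@_x x_i$ (i.e.\ $w=g(x_i)$) together with the translation of $\psi_i$ relative to the truncated prefix $\lr{x_0,\psi_0};\dots;\lr{x_{i-1},\psi_{i-1}}$ — and this prefix is precisely the sequence describing the model in which $\psi_i$ should be tested. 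Here I would invoke the induction hypothesis twice: once on each $\psi_i$ (legitimate because $\psi_i$ is a proper subformula of the original $[-\psi_i]$-subformula, hence structurally smaller, with the truncated sequence as parameter), and once on $\varphi$ with the full sequence $O$. Matching the conjunction of non-deletion conditions against the definition of $R^{\mathcal{M}'}$ closes the case.

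The main obstacle — and the point that needs to be stated carefully rather than waved through — is exactly this interplay between structural induction on $\varphi$ and the fact that the parameters $\psi_i$ appearing inside $O$ are themselves $\mathcal{L}_d$-formulas requiring the induction hypothesis. One must check that the induction is well-founded: every application of the hypothesis to some $\psi_i$ is to a strict subformula of $\varphi$ (since $\psi_i$ came from a subformula $[-\psi_i]\chi$ of $\varphi$), and the sequence parameter, though it grows, is always finite and only ever contains formulas that are subformulas of $\varphi$. Once this is set up, plus the observation that $@$ correctly captures equality of the current world with $g(x_i)$ and that $\downarrow$ correctly rebinds to the current world, the verification is a routine unwinding entirely parallel to Theorem~\ref{theorem-correctnessoftranslation} and Example~\ref{example-trans}; I would present the $\Diamond$ and $[-\;]$ cases in detail and dismiss the rest as straightforward.
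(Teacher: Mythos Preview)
Your strategy---structural induction with a strengthened, parametrized hypothesis over arbitrary $O$ and assignment $g$---is the paper's approach, but you make explicit the loading that the paper leaves tacit. The paper does a bare induction on $\varphi$ with the fixed starting sequence $\lr{x,\bot}$ and, in the $[-\varphi_1]\varphi_2$ case, passes from $T^{\lr{x,\bot}}(\varphi_2)$ evaluated in the updated model $\mathcal{M}'$ to $T^{\lr{x,\bot};\lr{x,\varphi_1}}(\varphi_2)$ evaluated in $\mathcal{M}$ ``by Definition~\ref{def-hybridtranslation}''. That transition is precisely what your parametrized statement is designed to justify, so your write-up is the more careful one on this point.

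There is, however, one real wrinkle in your well-foundedness paragraph. You claim each $\psi_i$ in $O$ is a strict subformula of the current $\varphi$ ``since $\psi_i$ came from a subformula $[-\psi_i]\chi$ of $\varphi$''; but in the parametrized statement $\varphi$ is the \emph{current} formula, and the $\psi_i$'s sitting in $O$ were deposited there by \emph{ancestors} of $\varphi$ in the outer recursion, so they need not be subformulas of $\varphi$ at all. Plain structural induction on $\varphi$ therefore does not license the inner calls $T^{\text{prefix}}(\psi_i)$ that appear in the $\Diamond$ clause. The easy repair is to induct instead on the combined size $|\varphi|+\sum_i|\psi_i|$ (summing over the entries of $O$): the $[-\;]$ clause moves $\psi$ from the formula into $O$ and drops the operator, decreasing the measure by one; in the $\Diamond$ clause the main call $T^O(\alpha)$ drops the diamond, and each inner call $T^{\cdots}(\psi_k)$ has measure at most $1+\sum_{j\le k}|\psi_j|\le 1+\sum_j|\psi_j|<1+|\alpha|+\sum_j|\psi_j|$. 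With this measure in place your argument goes through exactly as you sketched.
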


\begin{proof}
The proof is by induction on the structure of $\varphi$. The Boolean cases are straightforward, and we only show the non-trivial cases.

(1). When $\varphi$ is $\Diamond\psi$, the following equivalences hold:
\begin{align*}
\mathcal{M},w\vDash\varphi &\;\;{\rm{iff}}\;\; {\rm{there\; exists}}\; v\in W \;{\rm{s.t.}}\; Rwv \;{\rm{and}}\;\mathcal{M},v\vDash\psi\\ 
&\;\;{\rm{iff}}\;\; {\rm{there\; exists}}\; v\in W\;{\rm{s.t.}}\; Rwv\;{\rm{and}}\;\mathcal{M},v\vDash T^{\lr{x,\bot}}(\psi)\\
&\;\;{\rm{iff}}\;\; \mathcal{M},w\vDash\Diamond T^{\lr{x,\bot}}(\psi)\\
&\;\;{\rm{iff}}\;\;  \mathcal{M},w\vDash\downarrow x\Diamond(\neg(@_xx\land T^{\lr{x,\bot}}(\bot))\land T^{\lr{x,\bot}}(\psi))\\
&\;\;{\rm{iff}}\;\;  \mathcal{M},w\vDash T^{\lr{x,\bot}}(\varphi)
\end{align*}
The first equivalence holds by the semantics of $\mathcal{L}_d$. The second one follows from the inductive hypothesis. The third and fourth equivalences follow by the semantics of $\mathcal{H}(\downarrow)$. The last one holds by Definition \ref{def-hybridtranslation}. 

(2). When $\varphi$ is $[-\varphi_1]\varphi_2$, we have the following equivalences:
\begin{align*}
\mathcal{M},w\vDash[-\varphi_1]\varphi_2&\;\;{\rm{iff}}\;\; \exists\mathcal{M}'\; {\rm{s.t.}} \lr{\mathcal{M},w}\xrightarrow[]{-\varphi_1}\lr{\mathcal{M}',w}\; {\rm{and}} \;\mathcal{M}',w\vDash\varphi_2\\
&\;\;{\rm{iff}}\;\; \exists \mathcal{M}'\; {\rm{s.t.}} \lr{\mathcal{M},w}\xrightarrow[]{-\varphi_1}\lr{\mathcal{M}',w}\; {\rm{and}} \;\lr{\mathcal{M}',w}\vDash T^{\lr{x,\bot}}(\varphi_2)\\
&\;\;{\rm{iff}}\;\;\lr{\mathcal{M},w}\vDash T^{\lr{x,\bot};\lr{x,\varphi_1}}(\varphi_2)\\
&\;\;{\rm{iff}}\;\;\lr{\mathcal{M},w}\vDash T^{\lr{x,\bot}}(\varphi)    
\end{align*}
The first equivalence follows directly from the semantics of $\mathcal{L}_d$. The second one holds by the inductive hypothesis. The last two equivalences follow by Definition \ref{def-hybridtranslation}.

Therefore, for each $\varphi\in\mathcal{L}_d$, it holds that $\mathcal{M},w\vDash\varphi$ iff $\mathcal{M},w\vDash T^{\lr{x,\bot}}(\varphi)$.
\end{proof}

In the way described, we can reduce S$_d$ML to $\mathcal{H}(\downarrow)$. But, does the converse direction hold? 
First note that the following property is definable in $\mathcal{H}(\downarrow)$:

\begin{fact}\label{fact-hybrid-countsuccessor}
The property `there exist n successors of the current point' is definable in $\mathcal{H}(\downarrow)$.
\end{fact}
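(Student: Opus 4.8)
The plan is to construct, for each positive natural number $n$, an explicit $\mathcal{H}(\downarrow)$-formula that is true at a point $w$ in a model $\mathcal{M}$ exactly when $w$ has at least $n$ distinct successors (or exactly $n$, depending on which variant we want; I will aim for ``exactly $n$'', combining a ``$\geqslant n$'' clause with a ``$\leqslant n$'' clause). The key observation is that $\downarrow$ together with $@$ lets us name the current point and then, at a successor, test equality with previously named points, so we can force successors to be pairwise distinct.

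Concretely, first I would write the ``at least $n$'' formula by nesting $n$ diamonds, each binding a fresh nominal variable and each asserting non-equality with all the previously bound ones. For $n=2$ this is $\downarrow x_1 \Diamond(\downarrow x_2 (\neg @_{x_2} x_1))$ wrapped so that $x_1$ names the starting point and each $x_i$ names the $i$-th chosen successor; in general the formula has the shape $\Diamond \downarrow x_1 \Diamond \downarrow x_2 (\neg @_{x_2}x_1 \land \Diamond \downarrow x_3(\neg @_{x_3}x_1 \land \neg @_{x_3}x_2 \land \cdots))$, asserting the existence of $n$ successors $x_1,\dots,x_n$ with $@_{x_i}\neg x_j$ for $i\neq j$. (I would be careful with the exact placement of $\downarrow$ relative to $\Diamond$: we must bind the nominal \emph{after} moving along the edge so that it names the successor, and we must also record the starting point beforehand via an outer $\downarrow$ in order to be able to say ``distinct from where we came from'' if that is desired; for mere counting of successors only mutual distinctness among the $x_i$ is needed.) Second, for the ``at most $n$'' direction I would write the dual $\Box$-formula: after naming $n$ successors as above, assert $\Box(@{\cdot}x_1 \lor \cdots \lor @{\cdot}x_n)$, i.e. every successor is equal to one of the named ones — more precisely $\downarrow x_1\cdots$ binding along edges and then a universal $\Box y (\bigvee_{i=1}^n @_y x_i)$. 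Conjoining the two gives exactly $n$ successors. Third, I would verify correctness by the obvious semantic argument: the nominals pick out witnesses, $@_{x_i}x_j$ fails iff the denotations differ, and the $\Box$-clause enforces that the successor set is covered; this is a routine unwinding of the $\mathcal{H}(\downarrow)$ semantics.

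I do not expect a serious obstacle here — this is a standard fact about hybrid logic with $\downarrow$, included only to set up the following expressivity comparison. The one point requiring a little care is bookkeeping: making sure each $\downarrow$ binds the right point (successor vs.\ current point) and that the scopes of the binders are nested so that all $n$ names remain visible when we assert pairwise distinctness and when we assert the covering $\Box$-clause. I would present the $n=2$ (or $n=3$) case in full to fix the pattern and then state the general formula with an ellipsis, noting that the construction is uniform in $n$.
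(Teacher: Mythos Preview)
Your overall strategy matches the paper's: name the starting point and the successors with $\downarrow$, test (in)equality with $@$, and combine an ``at least $n$'' clause with an ``at most $n$'' (covering) clause. However, the concrete formula you sketch has a real bug. In your ``general shape'' $\Diamond\downarrow x_1\,\Diamond\downarrow x_2(\neg @_{x_2}x_1 \land \Diamond\downarrow x_3(\cdots))$, after $\Diamond\downarrow x_1$ the evaluation point has moved to the successor $x_1$, so the next $\Diamond$ ranges over successors of $x_1$, not of the original point. What you end up expressing is the existence of a simple path of length $n$ from the current point, not of $n$ distinct successors of that one point. The same problem afflicts your $\Box$-clause: as written it would be evaluated at the last-named successor, not at the starting point, so it would bound the out-degree of $x_n$ rather than of the original world.

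The fix --- and this is exactly what the paper does --- is to name the starting point with an outer $\downarrow x$ and then insert $@_x$ before each subsequent $\Diamond$ (and before the final $\Box$) to jump back to $x$ before picking the next successor. The correct shape is
\[
\downarrow x\,\Diamond\downarrow x_1\,@_x\Diamond\downarrow x_2\,\cdots\,@_x\Diamond\downarrow x_n\,@_x\Box\Bigl(\bigvee_{1\le i\le n}x_i\ \land\ \bigwedge_{1\le i<j\le n}\neg @_{x_i}x_j\Bigr).
\]
You correctly flagged the bookkeeping as the delicate point, but your diagnosis --- that recording the start with an outer $\downarrow$ is only needed ``if that is desired'' for a distinctness test --- misses its essential role in navigation: without the $@_x$ jumps there is simply no way for the nested diamonds to speak about siblings rather than a path. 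Once this is repaired, your argument and the paper's coincide.
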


\begin{proof}
We prove it by building the desired formula. Let $n$ be a positive natural number. Consider the following $\mathcal{H}(\downarrow)$-formula:
\begin{center}
$\downarrow x(\Diamond\downarrow x_1(@_x\Diamond\downarrow x_2(...(@_x\Diamond\downarrow x_n(@_x\Box(\bigvee\limits_{0\le i\le n} x_i\land \bigwedge\limits_{1\le i<j\le n}\neg @_{x_i}x_j\underbrace{))...)))}_{n+2}$
\end{center}
The formula states that the current point $x$ has successors $x_1$, ..., $x_n$, that each node reachable from $x$ must be some $x_i$, where $1\le i\le n$, and that for any different $i$ and $j$ such that $1\le i,j\le n$, $x_i$ is distinct from $x_j$. Thus, there exist $n$ successors of the current point iff the stated hybrid formula holds at that point.
\end{proof}

But Example \ref{example-count} showed that this property is not definable in S$_d$ML. 

\begin{fact}\label{fact-comparehybrid}
$\mathcal{H}(\downarrow)$ is more expressive than S$_d$ML over models.
\end{fact}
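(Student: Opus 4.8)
The plan is to establish the two halves of "$\mathcal{H}(\downarrow)$ is strictly more expressive than S$_d$ML over models" separately. The first half, namely that every S$_d$ML-definable property is $\mathcal{H}(\downarrow)$-definable, is already in hand: Theorem \ref{theorem-correctnessofhybridtranslation} gives a truth-preserving translation $T^{\lr{x,\bot}}:\mathcal{L}_d\to\mathcal{H}(\downarrow)$, so any class of pointed models defined by an $\mathcal{L}_d$-formula $\varphi$ is also defined by the $\mathcal{H}(\downarrow)$-sentence $T^{\lr{x,\bot}}(\varphi)$. The second half, strictness, is witnessed by a single property that separates the two logics.

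The separating property is "there exist $n$ successors of the current point" (for any fixed $n\geqslant 2$). First I would invoke Fact \ref{fact-hybrid-countsuccessor}, which already exhibits an explicit $\mathcal{H}(\downarrow)$-formula defining this property. Then I would invoke Example \ref{example-count}, which exhibits two pointed models --- one with exactly two successors of the evaluation point, one with exactly one --- that are d-bisimilar (the d-bisimulation is indicated by the dashed lines in that example). By Theorem \ref{theorem-bisimtoequiv}, d-bisimilar pointed models satisfy exactly the same $\mathcal{L}_d$-formulas, so no $\mathcal{L}_d$-formula can be true at the two-successor point and false at the one-successor point; hence "there exist $2$ successors" is not S$_d$ML-definable. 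Combining the two halves yields that the class of $\mathcal{H}(\downarrow)$-definable properties strictly contains the class of S$_d$ML-definable ones, which is the claim.

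There is essentially no obstacle here, since all the real work has been done in the cited results; the proof is a two-line assembly. The only point that deserves a word of care is the direction of the inclusion: one must be explicit that the translation $T^{\lr{x,\bot}}$ goes from $\mathcal{L}_d$ into $\mathcal{H}(\downarrow)$ and preserves truth in both directions (Theorem \ref{theorem-correctnessofhybridtranslation}), so that S$_d$ML-definability implies $\mathcal{H}(\downarrow)$-definability and not the reverse. Once that is noted, the counterexample from Example \ref{example-count} together with the defining formula from Fact \ref{fact-hybrid-countsuccessor} closes the argument. One might also remark, for completeness, that the same pair of models shows a second separating property (e.g.\ the cycle-length property of Example \ref{example-cycle}), but a single witness suffices.

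\begin{proof}
By Theorem \ref{theorem-correctnessofhybridtranslation}, the translation $T^{\lr{x,\bot}}:\mathcal{L}_d\to\mathcal{H}(\downarrow)$ satisfies $\mathcal{M},w\vDash\varphi$ iff $\mathcal{M},w\vDash T^{\lr{x,\bot}}(\varphi)$ for every $\varphi\in\mathcal{L}_d$ and every pointed model $\lr{\mathcal{M},w}$. Hence any class of pointed models definable by an $\mathcal{L}_d$-formula is also definable by a $\mathcal{H}(\downarrow)$-formula, so $\mathcal{H}(\downarrow)$ is at least as expressive as S$_d$ML over models.

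For strictness, consider the property `there exist $2$ successors of the current point'. By Fact \ref{fact-hybrid-countsuccessor}, this property is definable in $\mathcal{H}(\downarrow)$. On the other hand, Example \ref{example-count} exhibits two pointed models, one with exactly two successors of the evaluation point and one with exactly one, that are d-bisimilar. By Theorem \ref{theorem-bisimtoequiv}, these two pointed models satisfy exactly the same $\mathcal{L}_d$-formulas; since the property distinguishes them, it is not definable in S$_d$ML. Therefore the class of $\mathcal{H}(\downarrow)$-definable properties strictly contains the class of S$_d$ML-definable properties, i.e., $\mathcal{H}(\downarrow)$ is more expressive than S$_d$ML over models.
\end{proof}
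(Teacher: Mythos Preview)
Your proposal is correct and follows essentially the same approach as the paper: the paper also derives Fact~\ref{fact-comparehybrid} from the hybrid translation (Theorem~\ref{theorem-correctnessofhybridtranslation}) for the inclusion, together with Fact~\ref{fact-hybrid-countsuccessor} and Example~\ref{example-count} for the separating property. Your write-up is in fact more explicit than the paper's, which leaves the assembly of these ingredients to the reader.
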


Therefore S$_d$ML can be viewed as a fragment of $\mathcal{H}(\downarrow)$. Any hybrid logic at least as expressive as $\mathcal{H}(\downarrow)$ is more expressive than S$_d$ML. Even so, the hybrid translation described in Definition \ref{def-hybridtranslation} suggests that it may be viable to analyze validity in the logic S$_d$ML with expressive resources similar to those of $\mathcal{H}(\downarrow)$. 

\subsection{Digression on Recursion Axioms}\label{subsec-recursion}

One attractive format for axiomatizing logics of model change are recursion axioms in the style of dynamic-epistemic logic (see \cite{johandynamic}). As mentioned already, Boolean cases are available for S$_d$ML as well. We begin with the principle for $[-\;]$:  \footnote{Actually, the principle for $[-\;]$ is not necessary to show a complete set of recursion axioms, cf. \cite{lig}.}

\begin{fact}\label{fact-recursionBooleandynamic}Let $\varphi$, $\psi$ and $\chi$ be $\mathcal{L}_d$-formulas. Then it holds that
\begin{align}
[-\varphi][-\psi]\chi&\leftrightarrow\downarrow x[-\downarrow y(\varphi\lor@_x[-\varphi]@_y\psi)]\chi
\end{align}
where $x$ and $y$ are new nominal variables.
\end{fact}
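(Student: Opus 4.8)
The plan is to unfold the semantics of both sides at an arbitrary pointed model $\lr{\mathcal{M},w}$ with $\mathcal{M}=\lr{W,R,V}$, and verify that the two successive, \emph{local} deletions on the left-hand side produce exactly the same relation as the single deletion on the right-hand side. First I would compute the effect of $[-\varphi][-\psi]\chi$: evaluating $[-\varphi]$ at $w$ passes to $\mathcal{M}_1:=\mathcal{M}|_{\lr{w,\varphi}}$, which removes all links from $w$ to $\varphi$-nodes (truth of $\varphi$ being computed in $\mathcal{M}$); then evaluating $[-\psi]$ at $w$ in $\mathcal{M}_1$ passes to $\mathcal{M}_2:=\mathcal{M}_1|_{\lr{w,\psi}}$, which removes all links from $w$ to nodes that are $\psi$ \emph{in $\mathcal{M}_1$}. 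So $\mathcal{M}_2$ deletes exactly those $\lr{w,v}\in R$ with $\mathcal{M},v\vDash\varphi$ or $\mathcal{M}_1,v\vDash\psi$.

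The second step is to check that the right-hand side deletes precisely this set. After $\downarrow x$, the nominal $x$ names $w$. Now $[-\downarrow y(\varphi\lor@_x[-\varphi]@_y\psi)]$, evaluated at $w$, deletes all links $\lr{w,v}\in R$ to nodes $v$ satisfying $\downarrow y(\varphi\lor@_x[-\varphi]@_y\psi)$ — i.e., binding $y$ to $v$, satisfying $\varphi\lor@_x[-\varphi]@_y\psi$ at $v$ in $\mathcal{M}$. The first disjunct gives exactly the $v$ with $\mathcal{M},v\vDash\varphi$. For the second disjunct, $@_x[-\varphi]@_y\psi$ relocates evaluation to $w$ (the denotation of $x$), performs the $[-\varphi]$ update \emph{at $w$} — yielding $\mathcal{M}_1=\mathcal{M}|_{\lr{w,\varphi}}$ — and then $@_y$ relocates to $v$ (the denotation of $y$) and asserts $\psi$; so the second disjunct holds at $v$ iff $\mathcal{M}_1,v\vDash\psi$. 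Hence the set of deleted links is $\{\lr{w,v}\in R\mid\mathcal{M},v\vDash\varphi\text{ or }\mathcal{M}_1,v\vDash\psi\}$, which is exactly the set deleted in passing from $\mathcal{M}$ to $\mathcal{M}_2$. Therefore the model on which $\chi$ is finally evaluated is the same on both sides (the nominal labelings being irrelevant to the $\mathcal{L}_d$-formula $\chi$, which contains no nominals), and the biconditional holds.

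I would make two points precise along the way. One is that the freshness of $x$ and $y$ is used exactly so that introducing these nominals and the associated assignment changes does not disturb the truth value of any subformula that matters; since $\chi$, $\varphi$, $\psi$ are $\mathcal{L}_d$-formulas with no nominals, their truth is independent of the valuation of $x,y$. The other, which is the main obstacle, is the bookkeeping around the \emph{nesting and scope} of the updates inside $@_x[-\varphi]@_y\psi$: one must be careful that $[-\varphi]$ there acts at $w$ (because of the preceding $@_x$) and not at the point currently being tested, and that the deletion is computed with $\varphi$'s truth evaluated in the original $\mathcal{M}$, matching clause by clause with the left-hand side's first update. Spelling out the interaction of $@$, $\downarrow$ and $[-\;]$ carefully — essentially a small diagram chase on which relation is in force at which evaluation point — is where the care is needed; once that is laid out, the equality of the two resulting relations is immediate, and the rest is a routine appeal to the truth conditions in Definition \ref{def-semantics} together with the correctness of the hybrid machinery.
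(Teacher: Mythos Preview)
Your proposal is correct and follows essentially the same route as the paper: both arguments show that the two pointed models $\mathcal{M}|_{\lr{w,\varphi}}|_{\lr{w,\psi}}$ and $\mathcal{M}|_{\lr{w,\downarrow y(\varphi\lor@_x[-\varphi]@_y\psi)}}$ (with $x$ naming $w$) coincide by comparing which links $\lr{w,v}$ get removed. The only cosmetic difference is that the paper frames this as a proof by contradiction on a putative link present in one model but not the other, whereas you compute the deleted sets directly and observe they are equal; the semantic unpacking of $@_x[-\varphi]@_y\psi$ is identical in both.
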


\begin{proof}
Let $\lr{\mathcal{M},w}$ be a pointed model. We prove it by showing that $\mathcal{M}|_{\lr{w,\varphi}}|_{\lr{w,\psi}}$ and $\mathcal{M}|_{\lr{w,\downarrow y(\varphi\lor@_x[-\varphi]@_y\psi)}}$ are same, where $w\in V(x)$. Suppose not, then there must be some $v\in W$ such that $\lr{w,v}\in\mathcal{M}|_{\lr{w,\varphi}}|_{\lr{w,\psi}}$ and  $\lr{w,v}\not\in\mathcal{M}|_{\lr{w,\downarrow y(\varphi\lor@_x[-\varphi]@_y\psi)}}$, or that $\lr{w,v}\in\mathcal{M}|_{\lr{w,\downarrow y(\varphi\lor@_x[-\varphi]@_y\psi)}}$ and $\lr{w,v}\not\in\mathcal{M}|_{\lr{w,\varphi}}|_{\lr{w,\psi}}$.

Now consider the first case. From $\lr{w,v}\not\in\mathcal{M}|_{\lr{w,\downarrow y(\varphi\lor@_x[-\varphi]@_y\psi)}}$, we know that $\mathcal{M},v\vDash\varphi\lor@_x[-\varphi]@_y\psi$ where $v\in V(y)$. By $\lr{w,v}\in\mathcal{M}|_{\lr{w,\varphi}}|_{\lr{w,\psi}}$, it follows that $\mathcal{M}|_{\lr{w,\varphi}},v\not\vDash\psi$. Since $\mathcal{M}|_{\lr{w,\varphi}}|_{\lr{w,\psi}}$ is a submodel of  $\mathcal{M}|_{\lr{w,\varphi}}$, we obtain $\lr{w,v}\in\mathcal{M}|_{\lr{w,\varphi}}$. Consequently, it holds that $\mathcal{M},v\not\vDash\varphi$, thus, $\mathcal{M}|_{\lr{w,\varphi}},v\vDash\psi$. So we have arrived at a contradiction.

Next we consider the second case. By $\lr{w,v}\in\mathcal{M}|_{\lr{w,\downarrow y(\varphi\lor@_x[-\varphi]@_y\psi)}}$, it holds that $\mathcal{M},v\vDash\neg\varphi\land@_x[-\varphi]@_y\neg\psi$ where $v\in V(y)$. Then we know $\lr{w,v}\in\mathcal{M}|_{\lr{w,\varphi}}$. Besides, by $\lr{w,v}\not\in\mathcal{M}|_{\lr{w,\varphi}}|_{\lr{w,\psi}}$, we obtain $\mathcal{M}|_{\lr{w,\varphi}},v\not\vDash\psi$ that entails a contradiction.
\end{proof}

Note that some operators of $\mathcal{H}(\downarrow)$ occur in (6). From Definition \ref{def-hybridtranslation}, we know that it is equivalent with some formula of $\mathcal{H}(\downarrow)$. Consider formula $\downarrow x[-\downarrow y(\varphi\lor@_x[-\varphi]@_y\psi)]\chi$. By the semantics, that it is true at a pointed model $\lr{\mathcal{M},w}$ means that $w$ is $\chi$ in the model $\mathcal{M}|_{\lr{w,\downarrow y(\varphi\lor@_x[-\varphi]@_y\psi)}}$, where $V(x)=\{w\}$. Intuitively, the new model is obtained by removing all links from $w$ to the points that are $\varphi$, and to the points which are $\psi$ after removing the links from $w$ to $\varphi$-points. This is exactly what $[-\varphi][-\psi]\chi$ states. 

We now move to the case for $\Box$. It seems like that the following result will work:

\begin{fact}\label{fact-recursionBox} For each $[-\varphi]\Box\psi\in\mathcal{L}_d$, the following equivalence holds:
\begin{align}
[-\varphi]\Box\psi&\leftrightarrow\downarrow x\Box\downarrow y(\neg\varphi\to @_x[-\varphi]@_y\psi)
\end{align}
where $x$ and $y$ are new nominal variables.
\end{fact}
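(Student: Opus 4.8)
The plan is to prove the equivalence (7) semantically, by evaluating both sides at an arbitrary pointed model $\lr{\mathcal{M},w}$ and showing that the resulting truth conditions coincide. First I would unfold the left-hand side: $\mathcal{M},w\vDash[-\varphi]\Box\psi$ holds iff $\mathcal{M}|_{\lr{w,\varphi}},w\vDash\Box\psi$, i.e. iff for every $v$ with $\lr{w,v}\in R\setminus(\{w\}\times V(\varphi))$ — equivalently, every $R$-successor $v$ of $w$ with $\mathcal{M},v\not\vDash\varphi$ (recall principle (3), that $[-\;]$ does not affect atoms, and more generally that $V(\varphi)$ is computed in $\mathcal{M}$) — we have $\mathcal{M}|_{\lr{w,\varphi}},v\vDash\psi$. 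The subtlety already flagged in the text is that the evaluation of $\psi$ at $v$ takes place in the \emph{changed} model $\mathcal{M}|_{\lr{w,\varphi}}$, so the link deletion at $w$ must be remembered even though we have moved to $v$; this is precisely why naming $w$ with a nominal $x$ and re-issuing $[-\varphi]$ from $x$ via the $@$-operator is needed.

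Next I would unfold the right-hand side. Evaluating $\downarrow x\Box\downarrow y(\neg\varphi\to@_x[-\varphi]@_y\psi)$ at $\lr{\mathcal{M},w}$ first binds $x$ to $w$ (set $V(x)=\{w\}$), then quantifies over all $R$-successors $v$ of $w$, binding $y$ to $v$, and asserts $\neg\varphi\to@_x[-\varphi]@_y\psi$ at $v$. So for each successor $v$: if $\mathcal{M},v\vDash\varphi$ the implication is vacuously true (matching the fact that the link $\lr{w,v}$ is deleted on the left and imposes no constraint); if $\mathcal{M},v\not\vDash\varphi$, we need $\mathcal{M},v\vDash@_x[-\varphi]@_y\psi$, i.e. $\mathcal{M},w\vDash[-\varphi]@_y\psi$, i.e. $\mathcal{M}|_{\lr{w,\varphi}},w\vDash@_y\psi$, i.e. $\mathcal{M}|_{\lr{w,\varphi}},v\vDash\psi$ (since $V(y)=\{v\}$ and deleting links does not change the domain or the denotation of the nominal). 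The key point to check here is that the jump $@_x$, the re-deletion $[-\varphi]$, and the jump back $@_y$ compose to give exactly ``evaluate $\psi$ at $v$ in $\mathcal{M}|_{\lr{w,\varphi}}$'', which is what the left-hand side demands. I would then observe that $V(\varphi)$ appearing in $[-\varphi]$ on the right is still computed relative to $\mathcal{M}$ (no prior deletions have occurred), so the two submodels $\mathcal{M}|_{\lr{w,\varphi}}$ obtained on each side are literally the same model, and the truth conditions line up successor-by-successor.

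The main obstacle is making the bookkeeping with nominals rigorous: one must be careful that introducing fresh nominal variables $x,y$ and the associated assignments does not interfere with the evaluation of $\varphi$ or $\psi$ (which is why the theorem insists $x,y$ be new), and that the order of operations $\downarrow x$, then $\Box$, then $\downarrow y$ correctly captures ``first fix the deletion point, then move, then remember where we landed.'' A clean way to handle this is to phrase the argument as a chain of ``iff''s exactly as above, citing the semantics of $\mathcal{H}(\downarrow)$ for $\downarrow$, $@$ and $\Box$, the semantics of $[-\;]$ from Definition \ref{def-semantics}, and the observation that $\mathcal{M}|_{\lr{w,\varphi}}$ and $\mathcal{M}$ agree on $W$, on $V$, and hence on the denotations of all nominals. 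Once the successor-by-successor matching is established, universally quantifying over successors $v$ of $w$ yields the biconditional (7), completing the proof. As a sanity check against Example \ref{example-validity} and the warnings in the text, I would note that this recursion step does \emph{not} attempt to push $[-\varphi]$ through $\Box$ onto $\psi$ directly — it relocates the deletion via a named point — which is exactly the move that the non-commutativity phenomena show to be necessary.
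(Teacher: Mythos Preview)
Your proposal is correct and follows essentially the same semantic unfolding as the paper's proof: both arguments reduce each side to the condition ``for every $R$-successor $v$ of $w$ with $\mathcal{M},v\not\vDash\varphi$, $\mathcal{M}|_{\lr{w,\varphi}},v\vDash\psi$'' via the semantics of $[-\;]$, $\downarrow$, and $@$. The only difference is cosmetic---the paper phrases each direction as a proof by contradiction, whereas you present the argument as a direct chain of biconditionals---and your explicit remarks about freshness of $x,y$ and the invariance of $W$, $V$, and nominal denotations under link deletion make the bookkeeping slightly more transparent.
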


\begin{proof}
Let $\lr{\mathcal{M},w}$ be a pointed model. For the direction from left to right, we suppose that $\mathcal{M},w\vDash[-\varphi]\Box\psi$ and $\mathcal{M},w\not\vDash\downarrow x\Box\downarrow y(\neg\varphi\to @_x[-\varphi]@_y\psi)$. Then it holds that $w$ ($\in V(x)$) has a successor $v$ ($\in V(y)$) such that $\mathcal{M},v\vDash \neg\varphi\land @_x[-\varphi]@_y\neg\psi$. From $\mathcal{M},w\vDash[-\varphi]\Box\psi$, it follows that $\mathcal{M}|_{\lr{w,\varphi}},w\vDash\Box\psi$. Since $\mathcal{M},v\vDash\neg\varphi$, we obtain $\lr{w,v}\in\mathcal{M}|_{\lr{w,\varphi}}$. Thus it holds that $\mathcal{M}|_{\lr{w,\varphi}},v\vDash\psi$. Besides, $\mathcal{M},v\vDash\neg\varphi\land @_x[-\varphi]@_y\neg\psi$ entails $\mathcal{M},w\vDash[-\varphi]@_y\neg\psi$. Consequently, it holds that $\mathcal{M}|_{\lr{w,\varphi}},v\vDash\neg\psi$, which entails a contradiction.

For the converse direction, we assume that $\mathcal{M},w\vDash\downarrow x\Box\downarrow y(\neg\varphi\to @_x[-\varphi]@_y\psi)$ and $\mathcal{M},w\not\vDash[-\varphi]\Box\psi$. Then there exists $v\in W$ such that $\lr{w,v}\in R\setminus(\{w\}\times V(\varphi))$ and $\mathcal{M}|_{\lr{w,\varphi}},v\vDash\neg\psi$. Consider the case where $w$ and $v$ are named as $x$ and $y$ respectively. It holds that $\mathcal{M}|_{\lr{w,\varphi}},w\vDash@_y\neg\psi$. So we obtain $\mathcal{M}|_{\lr{w,\varphi}},w\vDash@_x[-\varphi]@_y\neg\psi$. Further more, from $\lr{w,v}\in R\setminus(\{w\}\times V(\varphi))$, we know $\lr{w,v}\in R$ and $\mathcal{M},v\vDash\neg\varphi$. Thus it is conclude that $\mathcal{M},w\not\vDash\downarrow x\Box\downarrow y(\neg\varphi\to @_x[-\varphi]@_y\psi)$.
\end{proof}

In formula (7), that $\downarrow x\Box\downarrow y(\neg\varphi\to @_x[-\varphi]@_y\psi)$ is true at $\lr{\mathcal{M},w}$ says that for each point $v$, if $v\in R(w)$ and $v$ is not $\varphi$, then $v$ is $\psi$ after deleting all links from $w$ to the $\varphi$-points. However, although formula (7) is valid, it is not the solution: the formula of the form $@_x[-\varphi]@_y\psi$ blocks the recursion format, even though we have that   

\begin{fact}\label{fact-recursion@}
For any $p\in\mathbf{P}$, $\mathcal{L}_d$-formulas $\varphi$, $\psi$ and $\chi$, and nominal variable $x$, the following equivalences hold:  
\begin{align}
[-\varphi]@_xp&\leftrightarrow@_xp \\
[-\varphi]@_x\neg\psi&\leftrightarrow\neg[-\varphi]@_x\psi\\
[-\varphi]@_x(\psi\land\chi)&\leftrightarrow[-\varphi]@_x\psi\land [-\varphi]@_x\chi\\
[-\varphi]@_x\Box\psi&\leftrightarrow\downarrow y@_x\Box\downarrow z(\neg(\varphi\land@_xy)\to@_y[-\varphi]@_z\psi)
\end{align}
where $y$ and $z$ are new nominal variables.
\end{fact}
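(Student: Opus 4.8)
The plan is to verify the four equivalences (8)--(11) by the standard bisimulation/semantic method, treating each connective case in turn and reducing everything to the fact (established by the semantics of $[-\;]$ and of $@$) that $[-\varphi]@_x\psi$ is true at $\langle\mathcal M,w\rangle$ iff $\psi$ holds at the named point $x$ in the model $\mathcal M|_{\langle w,\varphi\rangle}$. The key observation driving all four cases is that $@_x$ relocates evaluation to the point named $x$ \emph{but does not undo the link deletion performed by $[-\varphi]$}; so the effect of the compound operator $[-\varphi]@_x$ is ``delete the links out of the current point $w$ to its $\varphi$-successors, then jump to $x$''. This asymmetry (the deletion site $w$ and the evaluation site $x$ come apart) is exactly why $@_x[-\varphi]@_y\psi$ blocks the recursion format, and it is also what makes (11) the delicate case.

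For (8): since $[-\varphi]$ never changes the valuation $V$ and $@_xp$ depends only on whether the point named $x$ lies in $V(p)$, both sides are true at $\langle\mathcal M,w\rangle$ iff that point is in $V(p)$; this is immediate from Definition~\ref{def-semantics}. For (9): by self-duality of $[-\;]$ (validity (2)) together with the semantics of $@$, $[-\varphi]@_x\neg\psi$ holds iff $\neg\psi$ holds at $x$ in $\mathcal M|_{\langle w,\varphi\rangle}$ iff $\psi$ fails there iff $[-\varphi]@_x\psi$ fails; equivalently one just notes $@_x\neg\psi\leftrightarrow\neg@_x\psi$ and pushes $[-\varphi]$ through the outer negation using (2). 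For (10): $@_x(\psi\land\chi)\leftrightarrow @_x\psi\land @_x\chi$, and $[-\varphi]$ distributes over $\land$ by (1) plus self-duality; so both sides say $\psi$ and $\chi$ hold at $x$ in $\mathcal M|_{\langle w,\varphi\rangle}$.

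The main work is (11). I would argue both directions semantically on an arbitrary pointed model $\langle\mathcal M,w\rangle$, mirroring the proof of Fact~\ref{fact-recursionBox}. Name $w$ as $y$ and read the left side: $[-\varphi]@_x\Box\psi$ holds iff, in $\mathcal M' := \mathcal M|_{\langle w,\varphi\rangle}$, the point $x$ satisfies $\Box\psi$, i.e. every $v$ with $\langle x,v\rangle\in R'$ satisfies $\psi$ in $\mathcal M'$. The subtlety is that $R'$ agrees with $R$ on the successors of $x$ \emph{unless} $x=w$, in which case precisely the links from $w$ to $\varphi$-points have been removed; so ``$\langle x,v\rangle\in R'$'' is equivalent to ``$\langle x,v\rangle\in R$ and not $(x=w$ and $v$ is $\varphi)$'', which is what the guard $\neg(\varphi\land @_xy)$ encodes once $w$ is named $y$ and $v$ is named $z$ (the conjunct $@_x y$ is true exactly when $x$ and $y$ denote the same point, i.e. $x=w$). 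The consequent $@_y[-\varphi]@_z\psi$ then says: return to $w$ (named $y$), delete the $\varphi$-links out of $w$ again --- harmlessly, since $\mathcal M|_{\langle w,\varphi\rangle}$ is stable under re-deleting, i.e. $\mathcal M'|_{\langle w,\varphi\rangle}=\mathcal M'$ --- and check $\psi$ at $v$ (named $z$), which is exactly $\mathcal M',v\vDash\psi$. Matching the two sides quantifier-by-quantifier over the successors $v$ of $x$, with the guard sorting out the $x=w$ versus $x\neq w$ cases, gives the equivalence; for the right-to-left direction one assumes the right side fails, extracts a witnessing successor $v$ of $x$ in $\mathcal M'$ with $\mathcal M',v\vDash\neg\psi$, names $w,v$ as $y,z$, and reverses the computation. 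The hard part is bookkeeping the interaction of the two nominals $y,z$ with the deletion guard correctly --- in particular checking that re-applying $[-\varphi]$ inside $@_y[-\varphi]@_z\psi$ is idempotent and hence does no further damage --- rather than anything conceptually deep; the whole fact then follows since each equivalence is a routine case check modulo this one careful argument.
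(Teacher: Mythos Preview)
Your approach is essentially the paper's: dismiss (8)--(10) as immediate from the semantics and verify (11) by unpacking both sides over an arbitrary pointed model, matching the $R'$-successors of the point named $x$ against the $R$-successors filtered by the guard $\neg(\varphi\land @_xy)$. The paper frames (11) as two proofs by contradiction, you give a direct semantic computation, but the content is the same.

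One slip is worth flagging. In your analysis of the consequent $@_y[-\varphi]@_z\psi$ you write that we ``delete the $\varphi$-links out of $w$ \emph{again}'' and invoke idempotence, $\mathcal M'|_{\langle w,\varphi\rangle}=\mathcal M'$. Neither is correct nor needed. The right-hand side of (11) is evaluated at $\langle\mathcal M,w\rangle$, so when you reach $@_y[-\varphi]@_z\psi$ you are still in $\mathcal M$; the $[-\varphi]$ there is the \emph{first} deletion on that side, taking $\mathcal M$ to $\mathcal M'$, and then $@_z\psi$ yields $\mathcal M',v\vDash\psi$ directly. There is no second deletion to worry about. Moreover, the idempotence claim is false in general: $\mathcal M'|_{\langle w,\varphi\rangle}$ removes links to points satisfying $\varphi$ \emph{in $\mathcal M'$}, and for non-Boolean $\varphi$ this set can differ from the points satisfying $\varphi$ in $\mathcal M$ (cf.\ Example~\ref{example-validity}). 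Your conclusion $\mathcal M',v\vDash\psi$ is right, but the route to it should simply be ``evaluate in $\mathcal M$, jump to $w$, delete once, jump to $v$''. Once you drop the idempotence detour the argument is clean and matches the paper's.
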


\begin{proof}
The validity of (8)-(10) is straightforward. We now consider (11). Let $\lr{\mathcal{M},w}$ be a pointed model. From left to right. Suppose that $\mathcal{M},w\vDash[-\varphi]@_x\Box\psi$ and $\mathcal{M},w\not\vDash\downarrow y@_x\Box\downarrow z(\neg(\varphi\land@_xy)\to@_y[-\varphi]@_z\psi)$. Let $u$ be a point such that $V(x)=\{u\}$. Then it holds that $\mathcal{M},u\vDash\Diamond\downarrow z(\neg(\varphi\land@_xy)\land@_y[-\varphi]@_z\neg\psi)$ where $w\in V(y)$. Therefore there exists some point $v$ such that $Ruv$, $v\in V(z)$ and $\mathcal{M},v\vDash\neg(\varphi\land@_xy)\land@_y[-\varphi]@_z\neg\psi$. By $\mathcal{M},v\vDash\neg(\varphi\land@_xy)$, it holds that $\lr{u,v}\in\mathcal{M}|_{\lr{w,\varphi}}$. From $\mathcal{M},v\vDash@_y[-\varphi]@_z\neg\psi$, we obtain $\mathcal{M}|_{\lr{w,\varphi}},v\vDash\neg\psi$, which contradicts to $\mathcal{M},w\vDash[-\varphi]@_x\Box\psi$.

From right to left. Suppose that $\mathcal{M},w\vDash\downarrow y@_x\Box\downarrow z(\neg(\varphi\land@_xy)\to@_y[-\varphi]@_z\psi)$ and $\mathcal{M},w\not\vDash[-\varphi]@_x\Box\psi$. Let $u$ be a point such that $V(x)=\{u\}$. Then there exists some point $v$ such that  $\lr{u,v}\in \mathcal{M}|_{\lr{w,\varphi}}$ and $\mathcal{M}|_{\lr{w,\varphi}},v\vDash\neg\psi$. From $\mathcal{M},w\vDash\downarrow y@_x\Box\downarrow z(\neg(\varphi\land@_xy)\to@_y[-\varphi]@_z\psi)$, it holds that $\mathcal{M},v\vDash@_y[-\varphi]@_z\psi$ where $w\in V(y)$ and $v\in V(z)$. Consequently, we have $\mathcal{M}|_{\lr{w,\varphi}},v\vDash\psi$ that entails a contradiction.
\end{proof}

In the rest of this section, we are not going to present a  solution for this issue. Actually we conjecture that there exists no a recursion axiom for $[-\varphi]\Box\psi$ in $\mathcal{H}(\downarrow)$, which is contrasted with our initial intuition. However, given Corollary \ref{corollary}, there must be some sort of recursion axioms for it. Thus a question arises:

\medskip

\noindent\textbf{Open Problem.} Could there be a complete set of recursion axioms for S$_d$ML? 

\medskip

Through the above considerations, we understand why $\mathcal{H}(\downarrow)$ fails to do the job. In fact, there may be no easy solution, short of going to full first-order logic. All this suggests that, despite the axiomatizability in principle (as observed in Section \ref{sec-translation}), the structure of the logical validities of S$_d$ML is computationally complex. This suspicion will be confirmed in the next section, where we prove the undecidability of the logic.    

\section{Undecidability of S$_d$ML}\label{sec-undecidability}
Up to now, we have already shown that S$_d$ML is more expressive than the standard modal logic. Meanwhile, it is also a fragment of the hybrid logic $\mathcal{H}(\downarrow)$. It is well-known that the satisfiability problem for the standard modal logic is decidable. While, as noted in \cite{hybrid}, $\mathcal{H}(\downarrow)$ is undecidable. So, is S$_d$ML decidable or not? 

Actually, there are some fragments of $\mathcal{H}(\downarrow)$ that are decidable. For instance, \cite{hybriddecidable} shows that after removing all formulas containing a nesting of $\Box$, $\downarrow$ and $\Box$, $\mathcal{H}(\downarrow)$ becomes decidable. But in this section, we will present a negative answer to the question above, i.e., the satisfiability problem for S$_d$ML is undecidable. Moreover, we will identify the source of its high complexity. Before these results, we first show that S$_d$ML lacks both the tree model property and the finite model property.

\begin{theorem}\label{theorem-tree}
The logic S$_d$ML does not have the tree model property.
\end{theorem}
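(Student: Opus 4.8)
The plan is to refute the tree model property by exhibiting a single satisfiable $\mathcal{L}_d$-formula that holds at the root of no tree model; throughout, the only property of trees I would use is that they are acyclic and that the root has no predecessor. The leverage comes from the non-locality of $[-\;]$ already flagged after Remark~\ref{remark-trans}: in a tree with root $r$, no world $v\ne r$ can reach $r$, so passing from $\mathcal{M}$ to $\mathcal{M}|_{\lr{r,q}}$ cannot change the truth value of any formula at any $v\ne r$. Hence, if I can write a formula which, after one deletion step $[-q]$ performed at the evaluation point, \emph{witnesses} a genuine change at some world reachable from that point, then that world must coincide with the evaluation point, and a cycle appears.

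Concretely, I would take
\[
\varphi_{\mathrm{tree}}\;:=\;\Diamond q\;\land\;\Box\Box\Diamond q\;\land\;[-q]\Diamond\Diamond\Box\neg q .
\]
First I would verify satisfiability on the small model $\mathcal{M}$ with domain $\{w,v,c\}$, links $\lr{w,v},\lr{v,w},\lr{w,c}$ and $V(q)=\{c\}$: then $\mathcal{M},w\vDash\Diamond q$ via $c$; $\mathcal{M},w\vDash\Box\Box\Diamond q$ since the only world reachable from $w$ in two steps is $w$ itself (via $w\to v\to w$) and $w$ has the $q$-successor $c$; and $\mathcal{M},w\vDash[-q]\Diamond\Diamond\Box\neg q$ because deleting the $q$-links out of $w$ removes $\lr{w,c}$, after which $w\to v\to w$ holds and $w\vDash\Box\neg q$ (as $v\notin V(q)$).

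Then comes the heart of the argument. Suppose $\mathcal{T}=\lr{W,R,V}$ is a tree with root $r$ and $\mathcal{T},r\vDash\varphi_{\mathrm{tree}}$; write $\mathcal{T}'=\mathcal{T}|_{\lr{r,q}}$, so that $R'=R\setminus\{\lr{r,x}\mid x\in V(q)\}$. From the third conjunct I obtain $a,b$ with $R'ra$, $R'ab$ and $\mathcal{T}',b\vDash\Box\neg q$; since $R'\subseteq R$ this gives an $R$-path $r\to a\to b$ of length two in $\mathcal{T}$, so the second conjunct $\Box\Box\Diamond q$ yields an $R$-successor $d$ of $b$ with $d\in V(q)$. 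But $\mathcal{T}',b\vDash\Box\neg q$ forbids $R'bd$, so the link $\lr{b,d}$ is among those deleted, i.e.\ $\lr{b,d}\in\{\lr{r,x}\mid x\in V(q)\}$, forcing $b=r$. Then $r\to a\to r$ is an $R$-cycle through the root (in the degenerate case $a=r$, a self-loop $\lr{r,r}$), contradicting acyclicity of $\mathcal{T}$. Hence $\varphi_{\mathrm{tree}}$ has no tree model and S$_d$ML lacks the tree model property.

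I expect the only real obstacle to be the joint calibration of the two positive conjuncts: $\Box\Box\Diamond q$ must be strong enough that any world still validating $\Box\neg q$ \emph{after} the deletion is forced to have lost a link (hence to be $r$), while being weak enough that the three-point witness above still satisfies $\varphi_{\mathrm{tree}}$. Everything else is routine, including checking that the same reasoning goes through under the weaker reading in which only the point-generated submodel need be a tree, since the argument uses acyclicity alone.
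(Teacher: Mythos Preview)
Your proof is correct and follows essentially the same strategy as the paper's: exploit the locality of $[-\;]$ so that a world witnessing a change after the deletion must coincide with the evaluation point, producing a cycle. The paper uses the formula $p\land\Diamond p\land\Diamond\neg p\land\Box(p\to\Diamond p\land\Diamond\neg p)\land[-\neg p]\Box\Box p$ and carries out the same argument one level shallower (forcing the evaluation point to be reflexive rather than to lie on a $2$-cycle), but the mechanism is identical.
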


\begin{proof}
Consider the following formulas:
\begin{align*}
 (R_1)\qquad &p\land\Diamond p\land\Diamond\neg p\\
 (R_2)\qquad & \Box(p\to\Diamond p\land \Diamond\neg p)\\
 (R_3)\qquad & [-\neg p]\Box\Box p
\end{align*}
Let $\varphi_r:=(R_1\land R_2\land R_3)$. We now show that, for any $\mathcal{M}=\{W,R,V\}$ and $w\in W$, if $\mathcal{M},w\vDash \varphi_r$, then the evaluation point $w$ is reflexive. By $(R_1)$, $w$ has some $p$-successor(s) and some $\neg p$-successor(s). Formula $(R_2)$ states that each its $p$-successor $w_1$ also has at least one $p$-successor $w_2$ and at least one $\neg p$-successor $w_3$. From $(R_3)$ we know that, after deleting all links from $w$ to the $\neg p$-points, $w_1$ does not have $\neg p$-successors any longer. If node $w_1$ is not $w$, then $\varphi_r$ cannot be true at $w$.  That is to say, for each $v\in W$, if $Rwv$ and $\mathcal{M},v\vDash p$, then $v=w$, i.e., $R(w)\cap V(p)=\{w\}$. So if formula $\varphi_r$ is true, the evaluation point must be reflexive (with at least one $\neg p$-successor). A model for $\varphi_r$ is the $\mathcal{M}_2$ in the proof of Fact \ref{fact-bisimulation}, and $\varphi_r$ is true at the point $v_1$.  
\end{proof}

In addition, S$_d$ML also lacks the finite model property. To show this, inspired by the methods of \cite{hybrid}, we will construct a `spy point', i.e., a special point which has access in one step to any reachable point in the model. 

\begin{theorem}\label{theorem-finitemodel}
The logic S$_d$ML does not have the finite model property.
\end{theorem}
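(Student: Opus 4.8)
The plan is to build an infinite model with a spy point and to write an $\mathcal{L}_d$-formula that forces this structure, so that no finite model can satisfy it. First I would fix a proposition letter, say $s$, marking the spy point $w$, together with the requirements: (a) $w$ is irreflexive; (b) $w$ sees a point, and from $w$ one reaches only non-$s$ points; (c) every point reachable from $w$ is in fact a one-step $R$-successor of $w$ (this is the `spy' condition, and it is exactly where the local link-deletion operator earns its keep); and (d) on the non-$s$ points there is an irreflexive, transitive, serial `progress' relation, so that the reachable part of the model must contain an infinite $R$-chain $v_0, v_1, v_2,\dots$ with all $v_i$ distinct.

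The heart of the construction is forcing the spy condition together with injectivity of the successor chain. For the spy part, the idea is: from $w$, after deleting (via $[-\;]$) the link from $w$ to some currently-named successor, a $\Box$-statement can detect whether a point two steps away from $w$ is still reachable in one step; by iterating this one forces every point ever reached from $w$ to lie in $R(w)$. Concretely I would use formulas of the shape $\Box(\chi \to \Diamond(\chi' \land \psi))$ guarded by $[-\;]$ to say: whenever $w$ sees $u$ and $u$ sees $u'$, then $w$ sees $u'$; and then use a further $[-\;]$-guarded formula to force that the `progress' relation never loops back, so the chain $v_0 \to v_1 \to \cdots$ cannot be finite — a finite such chain would have to cycle, contradicting irreflexivity-plus-transitivity of the progress relation. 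This mimics the $\mathcal{H}(\downarrow)$ spy-point arguments of \cite{hybrid}, but replacing nominals and $\downarrow$ by the definable deletion $[-\;]$, which is the delicate translation step.

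I would then verify two things. Soundness: exhibit one concrete infinite model (the spy point $w$ with $s$ true only at $w$, plus the natural numbers as non-$s$ successors of $w$, with the progress relation $<$) and check it satisfies the conjunction $\varphi_\infty$ of all the clauses; this is a routine semantic check. Completeness of the forcing: assume $\mathcal{M}, w \vDash \varphi_\infty$ and argue that the submodel generated by $w$ must be infinite — from the clauses one extracts an injective infinite sequence of successors of $w$, so $W$ is infinite. Hence $\varphi_\infty$ is satisfiable but has no finite model.

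The main obstacle I expect is writing the spy clause correctly with $[-\;]$ in place of nominals: because $[-\varphi]$ deletes \emph{all} links from the current point to $\varphi$-points simultaneously, I cannot single out one successor by deletion alone, so I must arrange the proposition letters (e.g. an auxiliary marker used together with $s$) so that at the relevant moment exactly the intended link gets cut, and so that the side effects of the global-in-successors deletion do not destroy the other clauses. Getting this interplay right — and checking that the deletion at $w$ genuinely reveals the two-step reachability pattern without collapsing distinct successors — is the technically demanding part; the rest (irreflexivity, seriality, transitivity of the progress relation, and the soundness model) is standard modal bookkeeping.
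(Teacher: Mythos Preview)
Your plan is correct and follows essentially the same approach as the paper: a spy-point construction in the style of \cite{hybrid}, auxiliary propositional markers to compensate for the uniform nature of $[-\;]$, and an irreflexive transitive serial order on the spy's successors that forces the model to be infinite. The paper's concrete realization of the auxiliary-marker idea you anticipate is to give the spy point $w$ some dead-end $\neg p$-successors as a signature, so that after executing $[-\neg p]$ at $w$ one can recognise ``this point is $w$'' by ``this point has just lost its $\neg p$-successors''---which is exactly the kind of side-effect detection you describe as the technically demanding part.
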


\begin{proof}
Let $\varphi_\infty$ be the conjunction of the following formulas:
\begin{align*}
(F_1) &&& s\land p\land\Box\neg s\land\Diamond p\land\Diamond\neg p\land\Box(\neg p\to\Box\bot)\\
(F_2)&&&\Box(p\to\Diamond s\land\Diamond \neg s\land\Box p)\\
(F_3)&&&\Box(p\to\Box(s\to \Box\neg s\land\Diamond \neg p))\\
(F_4)&&&[-\neg p]\Box\Box(s\to\neg\Diamond\neg p)\\ 
(F_5)&&&\Box(p\to\Box(\neg s\to\Diamond s\land\Diamond\neg s\land\Box p))\\
(F_6)&&&\Box(p\to\Box(\neg s\to\Box(s\to\Box\neg s\land\Diamond\neg p)))\\
(F_7)&&&[-\neg p]\Box\Box(\neg s\to\Box(s\to\neg\Diamond\neg p))\\
(\textit{Spy})&&&\Box(p\to\Box(\neg s\to[-\neg s]\Box\Diamond(p\land\Box s)))\\
(\textit{Irr})&&&\Box(p\to [-s]\Box\Diamond s)\\
(\textit{No-3cyc})&&&\neg\Diamond(p\land[-s]\Diamond[-s]\Diamond\Diamond(\neg s\land\Box\neg s))\\
(\textit{Trans})&&&\Box(p\to[-s]\Box\Box(\neg s\to[-\neg s]\Box\Diamond(\Box\neg s\land\Diamond\Box s)))
\end{align*}

First, we show that the formula $\varphi_\infty$ is satisfiable. Consider the following model  $\mathcal{M}$:
\begin{center}
\begin{tikzpicture}
\node(a)[circle,draw,inner sep=0pt,minimum size=5mm, label=left:{$s$,$p$}] at (0,0) {$w$};
\node(b)[circle,draw,inner sep=0pt,minimum size=5mm] at (-1.5,1.5) {$v_0$};
\node(c)[circle,draw,inner sep=0pt,minimum size=5mm] at (0,1.5) {$v_1$};
\node(d)[circle,draw,inner sep=0pt,minimum size=5mm] at (1.5,1.5) {$v_2$};
\node(e)[circle,draw,inner sep=0pt,minimum size=5mm,label=above:$p$] at (-3,-1.5) {$w_0$};
\node(f)[circle,draw,inner sep=0pt,minimum size=5mm,label=above:$p$] at (-1.5,-1.5) {$w_1$};
\node(g)[circle,draw,inner sep=0pt,minimum size=5mm,label=above right:$p$] at (0,-1.5) {$w_2$};
\node(h)[circle,draw,inner sep=0pt,minimum size=5mm,label=above:{$p$}] at (1.5,-1.5) {$w_3$};
\node(i)[circle,inner sep=0pt,minimum size=5mm] at (3,-1.5) {$\cdots$};
\draw[->](a) to (b);
\draw[->](a) to (c);
\draw[->](a) to (d);
\draw[<->](a) to (e);
\draw[<->](a) to (f);
\draw[<->](a) to (g);
\draw[<->](a) to (h);
\draw[<->](a) to (i);
\draw[->](e) to (f);
\draw[->](f) to (g);
\draw[->](g) to (h);
\draw[->](h) to (i);
\draw[->](e) to[bend right=25] (g);
\draw[->](f) to[bend right=25] (h);
\draw[->](f) to[bend right=26] (i);
\draw[->](e) to[bend right=26] (h);
\draw[->](e) to[bend right=27] (i);
\draw[->](g) to[bend right=25] (i);
\end{tikzpicture}  
\end{center} 
clearly, $\mathcal{M},w\vDash\varphi_\infty$. Thus there exists at least one model satisfying formula $\varphi_\infty$. 

Next, we show that for any $\mathcal{M}=\{W,R,V\}$ and $w\in W$, if $\mathcal{M},w\vDash\varphi_\infty$, then $W$ is infinite. For brevity, define that $B=\{v\in W| v\in R(w)\cap V(p)\}$, i.e., $B$ is the set of the $p$-successors of $w$. In the following proof, we assume that all previous conjuncts hold.

By $(F_1)$, the evaluation point $w$ is $(s\land p)$, and it cannot see any $s$-points. In particular, $w$ cannot see itself. Besides, $w$ has some $p$-successor(s) (i.e., $B\not=\emptyset$) and some $\neg p$-successor(s) (i.e., $R(w)\setminus B\not=\emptyset$). In addition, each point in $R(w)\setminus B$ is a dead end.

From formula $(F_2)$, we know that each element in $B$ can see some $(s\land p)$-point(s) and $(\neg s\land p)$-point(s), but cannot see any $\neg p$-points. Hence each point in $B$ has a successor distinct from itself.  

According to formula $(F_3)$, for any $w_1\in B$, each its $s$-successor can see some $\neg p$-point(s), but cannot see any $s$-points. 

By $(F_4)$, after removing all links from $w$ to $\neg p$-points, for each $w_1\in B$, each of its $s$-successors $w_2$ has no $\neg p$-successors. Thus $(F_4)$ shows that each $w_1\in B$ can see point $w$, and that for each $s$-point $w_2\in W$, if $w_2$ is a successor of $w_1$, then $w_2$ must be $w$. 

Formulas $(F_2)$-$(F_4)$ show the properties of the $(\neg s\land p)$-points which are accessible from the point $w$ in one step. Similarly, formulas $(F_5)$, $(F_6)$ and $(F_7)$ play the same role as $(F_2)$, $(F_3)$ and $(F_4)$ respectively, but focusing on showing the properties of the $(\neg s\land p)$-points that are accessible from $w$ in two steps. In particular, $(F_7)$ guarantees that every $(\neg s\land p)$-point $w_1$ which is accessible from $w$ in two steps can also see $w$, and that for each $s$-point $w_2\in W$, if $w_2$ is a successor of $w_1$, then $w_2$ must be $w$. 

Formula (\textit{Spy}) says that, for each $(\neg s\land p)$-point $w_1$ that is accessible from $w$ in two steps, after removing the links from $w_1$ to the $\neg s$-points, each successor $w_2$ of $w_1$ has a $p$-successor $w_3$ that only has $s$-successors. Besides, point $w_2$ must be $s$. By $(F_7)$, we know that $w_2=w$. In addition, by $(F_2)$, $w_3$ should have some $\neg s$-successor(s) if the cut induced by $[-s]$ does not take place at $w_3$. So it holds that $w_3=w_1$. In such a way, (\textit{Spy}) makes the evaluation point $w$ be a spy-point, and it ensures that each $(\neg s\land p)$-point $w_1$ which is accessible from $w$ in two steps is also accessible from $w$ in one step. 
By (\textit{Irr}), for each $w_1\in B$, after removing the link from $w_1$ to $w$, each its successor still can see $w$. Therefore each $w_1\in B$ is irreflexive. Besides, (\textit{No-3cyc}) disallows cycles of length 2 or 3 in $B$, and (\textit{Trans}) forces the accessibility relation $R$ to transitively order $B$.

Hence $B$ is an unbounded strict partial order, thus it is infinite and so is $W$. Now we have already shown that $\varphi_\infty$ is satisfiable, and that for each pointed model $\lr{\mathcal{M},w}$, if $\mathcal{M},w\vDash\varphi_\infty$, then $\mathcal{M}$ is an infinite model. This completes the proof.
\end{proof}

Now, by encoding the $N\times N$ tiling problem, we show that S$_d$ML is undecidable. A tile $t$ is a $1\times1$ square, of fixed orientation, with colored edges \textit{right($t$)}, \textit{left($t$)}, \textit{up($t$)} and \textit{down($t$)}. The $N\times N$ tiling problem is: given a finite set of tile types $T$, is there a function $f:N\times N\to T$ such that \textit{right(f(n,m))}=\textit{left(f(n+1,m))} and \textit{up(f(n,m))}=\textit{down(f(n,m+1))}? This problem is known to be undecidable (see \cite{tile}).

Following the ideas in \cite{hybrid}, we will use three modalities $\Diamond_s$, $\Diamond_u$ and $\Diamond_r$. Correspondingly, a model $\mathcal{M}=\{W,R_s,R_u,R_r,V\}$ now has three kinds of accessibility relations. We will construct a spy point over the relation $R_s$. The relations $R_u$ and $R_r$ represent moving up and to the right, respectively, from one tile to the other. Besides, the operator $[-\;]$ will work in the usual way, i.e., all of the three kinds of relations should be cut if the current point have some particular successors via them.\footnote{There is also no problem if we use three kinds of dynamic operators that correspond to the three kinds of accessibility relations respectively. In the proof of Theorem \ref{theorem-undecidable}, these three kinds of links are disjoint.} Let us see the details.

\begin{theorem}\label{theorem-undecidable}
The satisfiability problem for S$_d$ML is undecidable.
\end{theorem}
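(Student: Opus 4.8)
The plan is to encode the $N\times N$ tiling problem into satisfiability of S$_d$ML, following the spy-point technique of Theorem~\ref{theorem-finitemodel} but now over the relation $R_s$ while using $R_u$ and $R_r$ for the grid structure. First I would write down a formula $\varphi_T$, the conjunction of a ``grid part'' and a ``tiling part''. The grid part reuses the machinery of Theorem~\ref{theorem-finitemodel}: a distinguished evaluation point $w$ marked by a propositional atom $s$, which via $R_s$ spies on every grid point (each grid point carries $p$ and points back to $w$ via $R_s$), enforced by $\Box_s$-formulas together with $[-\;]$-formulas that, after cutting the $R_s$-links from $w$ to its $\neg p$-successors or after cutting $R_s$-links to $\neg s$-successors, detect which successors coincide with $w$. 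On top of this spied-upon set of grid points I would impose: (i) every grid point has exactly one $R_u$-successor and exactly one $R_r$-successor, both again grid points (``functionality'', expressible because the spy point lets us name and compare successors in the style of the hybrid translation of Section~\ref{sec-hybrid}, but now carried out purely with $\Box$, $\Diamond$ and $[-\;]$ as in the $(\textit{Spy})$, $(\textit{Irr})$, $(\textit{No-3cyc})$, $(\textit{Trans})$ conjuncts); (ii) the confluence/commutativity property $R_u\circ R_r = R_r\circ R_u$, i.e.\ going up-then-right reaches the same grid point as right-then-up, again using the spy point to identify the two points; and (iii) $R_u$ and $R_r$ generate no unwanted cycles, so that the reachable part of the grid is (a quotient covering) $N\times N$.

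Once the grid is in place, the tiling part is routine: introduce one atom $P_t$ for each tile type $t\in T$, and conjoin $\Box_s(p\to\bigvee_{t\in T}(P_t\land\bigwedge_{t'\ne t}\neg P_{t'}))$ (exactly one tile per point), together with the colour-matching constraints $\Box_s(p\to\bigwedge_{t}(P_t\to\Box_r\bigvee_{\textit{right}(t)=\textit{left}(t')}P_{t'}))$ and the analogous clause for $\Box_u$ with $\textit{up}/\textit{down}$. The two directions of the reduction are then: if $T$ tiles $N\times N$ via $f$, build the obvious model on domain $\{w\}\cup(N\times N)$ with $R_s$ the spy relation, $R_u,R_r$ the successor relations, and $V(P_t)=f^{-1}(t)$, and check $\mathcal{M},w\vDash\varphi_T$ conjunct by conjunct; conversely, if $\mathcal{M},w\vDash\varphi_T$, read off a tiling from the $R_u,R_r$-reachable grid points — functionality gives well-definedness of the ``coordinates'', confluence gives consistency of the two ways of reaching $(n,m)$, and the colour clauses give that $f(n,m):=$ the unique $t$ with $\mathcal{M},v_{n,m}\vDash P_t$ is a correct tiling. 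Since $N\times N$-tiling is undecidable and $\varphi_T$ is computable from $T$, satisfiability of S$_d$ML is undecidable, and by Corollary~\ref{corollary} it is not even co-r.e.\ axiomatizable only in the trivial sense — rather, validity is r.e.\ but satisfiability is not decidable.

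The main obstacle, and the place where all the real work sits, is step (i)--(iii): expressing \emph{functionality} of $R_u$, $R_r$ and the \emph{grid confluence} $R_u\circ R_r=R_r\circ R_u$ in S$_d$ML, which has no nominals of its own. The trick — exactly as in Theorem~\ref{theorem-finitemodel}'s $(\textit{Spy})$/$(\textit{Trans})$ conjuncts — is that the spy point $w$ can reach every relevant point in one $R_s$-step, so ``$v$ is the point reached from $u$'' can be pinned down by deleting, via $[-\;]$, the $R_s$-links from $w$ to all points lacking some temporarily-marked property, and then observing from $w$ which points survive; uniqueness is forced by $[-\;]$-formulas of the shape ``after cutting the link from $u$ to its chosen successor, $w$ can no longer see that successor along $R_s$'', mimicking the $(\textit{Irr})$ pattern. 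Getting these S$_d$ML-formulas exactly right — making sure the link deletions are local to the intended point and have no side effects elsewhere in the model, and that no spurious identifications or cycles slip through — is delicate, precisely because (as stressed after Example~\ref{example-validity}) removing links has non-local effects on truth values. I would therefore present $\varphi_T$ as a long explicit conjunction with each conjunct annotated by the grid property it enforces, prove the two implications separately, and relegate the verification of the harder conjuncts (\textit{Spy}-style and \textit{Trans}-style) to a sequence of small lemmas of the form ``if all previous conjuncts hold at $w$, then conjunct $(F_k)$ forces such-and-such''.
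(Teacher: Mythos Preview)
Your proposal is correct and follows essentially the same approach as the paper: a reduction from $N\times N$ tiling via a spy point over $R_s$ and grid relations $R_u,R_r$, with conjuncts enforcing the spy structure, functionality of $R_u$ and $R_r$, convergence, cycle-freeness, and the standard colour-matching clauses. The only small slip is that your witnessing model for the ``if'' direction needs an extra dead-end $\neg p$-successor of $w$ (so domain $\{w,v\}\cup(N\times N)$ rather than $\{w\}\cup(N\times N)$), since your own spy machinery relies on $w$ having $\neg p$-successors to cut.
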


\begin{proof}
Let $T=\{T_1,...,T_n\}$ be a finite set of tile types. For each $T_i\in T$, we use $u(T_i)$, $d(T_i)$, $l(T_i)$, $r(T_i)$ to represent the colors of its up, down, left and right edges respectively. Besides, we code each tile type with a fixed propositional atom $t_i$. Now we will define a formula $\varphi_T$ such that $\varphi_T$ is satisfiable iff $T$ tiles $N\times N$. Consider the following formulas:
\begin{align*}
(M_1)&&& s\land p\land\Box_s\neg s\land\Diamond_s p\land\Diamond_s\neg p\land\Box_s(\neg p\to\Box_s\bot)\\
(M_2) &&& \Box_s(p\to\Diamond_s\top\land\Box_s(s\land\Diamond_s\neg p))\\
(M_3) &&& [-\neg p]\Box_s\Box_s(s\land\neg\Diamond_s\neg p)\\ 
(M_4)&&& \Box_s(p\to\Diamond_\dag\top\land\Box_\dag(\neg s\land p\land\Diamond_s\top\land\Box_s(s\land\Diamond_s\neg p)))& \dag\in\{u,r\}\\
(M_5)&&&[-\neg p]\Box_s\Box_\dag\Box_s\neg\Diamond_s\neg p& \dag\in\{u,r\}\\
(M_6)&&& \Box_s(p\to\Box_\dag(\Diamond_u\top\land\Diamond_r\top\land\Box_u(\neg s\land p)\land\Box_r(\neg s\land p)))&\dag\in\{u,r\}\\
(M_7)&&& \Box_s(p\to[-s]\Box_\dag(\Diamond_ss\land\neg\Diamond_\dag\neg\Diamond_ss))&\dag\in\{u,r\}\\
(\textit{Spy})&&&\Box_s(p\to\Box_\dag[-\neg s]\Box_s\Diamond_s(p\land\Box_u\bot\land\Box_r\bot))&\dag\in\{u,r\}\\
(\textit{Func})&&&\Box_s(p\to[-s]\Box_\dag[-\neg s]\Diamond_s\Diamond_s(p\land\neg\Diamond_ss\land\Diamond_\dag\top\land\\
&&&\Box_\dag(\Box_u\bot\land\Box_r\bot)) &\dag\in\{u,r\}\\
(\textit{No-UR})&&& \Box_s(p\to[-s]\Box_u\Box_r\Diamond_s s\land[-s]\Box_r\Box_u\Diamond_s s) \\
(\textit{No-URU})&&&\Box_s(p\to[-s]\Box_u\Box_r\Box_u\Diamond_s s)\\
(\textit{Conv})&&&\Box_s(p\to[-s]\Diamond_u[-s]\Diamond_r[-\neg s]\Diamond_s\Diamond_s(p\land\neg\Diamond_ss\land\\
&&&\Box_r(\Diamond_u\top\land\Diamond_r\top)\land\Diamond_u\neg\Diamond_ss\land\Diamond_r\Diamond_u(\Box_u\bot\land\Box_r\bot)))\\
(\textit{Unique})&&&\Box_s(p\to\bigvee\limits_{1\le i\le n}t_i\land\bigwedge\limits_{1\le i<j\le n}(t_i\to\neg t_j))\\
(\textit{Vert})&&& \Box_s(p\to \bigwedge\limits_{1\le i\le n}(t_i\to\Diamond_u\bigvee\limits_{1\le j\le n,\;u(T_i)=d(T_j)}t_j))\\
(\textit{Horiz})&&&\Box_s(p\to \bigwedge\limits_{1\le i\le n}(t_i\to\Diamond_r\bigvee\limits_{1\le j\le n,\;r(T_i)=l(T_j)}t_j))
\end{align*}

Define $\varphi_T$ as the conjunction of the formulas above. Let $\mathcal{M}=\{W,R_s,R_u,R_r,V\}$ be an arbitrary model and $w\in W$. We first analyze the effects of the stated formulas on what this model must look like.

More concretely, suppose that $\mathcal{M},w\vDash\varphi_T$. We show that $\mathcal{M}$ is a tiling of $N\times N$. For brevity, define $G:=\{v\in W|v\in R_s(w)\cap V(p)\}$ where $R_s(w)=\{v\in W|R_swv\}$, and we will use its elements to represent the tiles. In the following proof, we also assume that all previous conjuncts hold.

Formula $(M_1)$ is similar to $(F_1)$ occurring in the proof of Theorem \ref{theorem-finitemodel}, except that $(M_1)$ focuses on the relation $R_s$ only.

By $(M_2)$, each tile $w_1$ has some successor(s) via the relation $R_s$, and each such successor $w_2$ is $(s\land p)$ and also has some $(\neg s\land\neg p)$-successor(s) via $R_s$. It is worthy to note that formulas $(M_1)$ and $(M_2)$ illustrate that $R_s$ is irreflexive.

Formula $(M_3)$ ensures that each tile $w_1$ can see $w$ via $R_s$, and that for each $(s\land p)$-point $w_2\in W$, if $w_2$ is accessible from $w_1$ via $R_s$, then $w_2=w$.

$(M_4)$ states that each tile has some successor(s) via $R_u$ and some successor(s) via $R_r$. Besides, each point that is accessible from a tile via $R_u$ or $R_r$ is very similar to a tile: it is $(\neg s\land p)$, and has some $(s\land p)$-successor(s) $w_1$ via relation $R_s$ where each $w_1$ can see some $(\neg s\land\neg p)$-point(s) via $R_s$.  

By formula $(M_5)$, each $w_1\in W$ accessible from a tile via $R_u$ or $R_r$ can see $w$ by $R_s$. Also, for each $(s\land p)$-point $w_2\in W$, if it is accessible from $w_1$ via $R_s$, then $w_2=w$. 

Formula $(M_6)$ ensures that each $w_1\in W$ that is accessible from some tile via $R_u$ or $R_r$ also has some successor(s) via $R_u$ and some successor(s) via $R_r$. Besides, each its successor via $R_u$ or $R_r$ is $(\neg s\land p)$.

From formula $(M_7)$, it follows that both $R_u$ and $R_r$ are irreflexive and asymmetric.

By (\textit{Spy}), we know that the evaluation point $w$ is a spy point via the relation $R_s$.

Note that formula $(M_4)$ says that each tile has some tile(s) above it and some tile(s) to its right. Now, with (\textit{Func}), we have that each tile has exactly one tile above it and exactly one tile to its right.

By (\textit{No-UR}), no tile can be above/below as well as to the left/right of another tile. Formula (\textit{No-URU}) disallows cycles following successive steps of the $R_u$, $R_r$, and $R_u$ relations, in this order. Further more, (\textit{Conv}) ensures that the tiles are arranged as a grid. 

Formula (\textit{Unique}) guarantees that each tile has a unique type. (\textit{Vert}) and (\textit{Horiz}) force the colors of the tiles to match properly.

Thus we conclude that $\mathcal{M}$ is indeed a tiling of $N\times N$. 

\smallskip

Next we show the other direction required for our proof. Suppose the function $f: N\times N\to T$ is a tiling of $N\times N$. Define a model $\mathcal{M}=\{W,R_s,R_u,R_r,V\}$ as follows:
\begin{align*}
W&=(N\times N)\cup\{w,v\}\\
R_s&=\{\lr{w,v}\}\cup\{\lr{w,n}\mid n\in N\}\cup\{\lr{n,w}\mid n\in N\}\\
R_u&=\{\lr{\lr{n,m},\lr{n,m+1}}\mid n,m\in N\}\\ 
R_r&=\{\lr{\lr{n,m},\lr{n+1,m}}\mid n,m\in N\}\\
V(s)&=\{w\}\\
V(p)&=\{w\}\cup N\\
V(t_i)&=\{\lr{n,m}\in N\times N\mid f(\lr{n,m})=T_i\},\; {\rm{for\; each }}\; i\in\{1,...,n\}\\
V(q)&=\emptyset,\; {\rm{for\; any\; other\; propositional\; atoms}}\; q
\end{align*}
In particular, $w$ is a spy point in $\mathcal{M}$. By construction, we know that $\mathcal{M},w\vDash\varphi_T$.
\end{proof}

Thus, perhaps surprisingly, given the simple-looking syntax and semantics of S$_{d}$ML, the complexity of its logic is high. What is the reason for this high complexity, as contrasted with decidability of dynamic-epistemic logics of link deletion \cite{Liu2}? For SML, the reason offered by \cite{sabotage} is the stepwise nature of link deletion, and this is confirmed by the result in \cite{palstep} showing how a very simple stepwise variant of public announcement logic is undecidable. However, our case is different, since links are cut in a uniform definable way: the only remaining potential culprit is then the locality. 

To see the effects of this feature, recall the above formula (7). We already saw in Section \ref{subsec-recursion} that a formula of the form $@_x[-\varphi]@_y\psi$ blocks the recursion format. In contrast, consider a global version S$_d^g$ML of S$_d$ML. The truth condition for $[-\;]$ now reads:
\begin{align*}
\lr{W,R,V},w\vDash [-\varphi]\psi  \;\;{\rm{\textit{iff}}}\;\; \lr{W,R\setminus\{\lr{s,t}\in R\mid\mathcal{M},t\vDash\varphi\},V},w\vDash \psi.
\end{align*}
Given the global change made in this semantics, here is a valid recursion axiom for $\Box$: 
$$[-\varphi]\Box\psi\leftrightarrow\Box(\neg\varphi\to[-\varphi]\psi).$$
Indeed, following the general method for modal logics of definable model change presented in \cite{Liu2}, one can find a complete set of recursion axioms for S$_d^g$ML:

\begin{fact}\label{fact-globalversion}
The logic S$_d^g$ML is axiomatizable and decidable.
\end{fact}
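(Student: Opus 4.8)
The plan is to axiomatize S$_d^g$ML by a reduction to the basic modal logic, exactly along the lines of public-announcement and dynamic-epistemic logics; decidability then falls out of decidability of the target logic. Concretely, I would take any complete Hilbert calculus for the minimal normal modal logic $\mathbf{K}$ over $\mathcal{L}_\Box$, add the rule of replacement of provable equivalents (so that reductions may be carried out inside any context), and add the recursion axioms
\begin{align*}
[-\varphi]p &\leftrightarrow p\\
[-\varphi]\neg\psi &\leftrightarrow \neg[-\varphi]\psi\\
[-\varphi](\psi\land\chi) &\leftrightarrow [-\varphi]\psi\land[-\varphi]\chi\\
[-\varphi]\Box\psi &\leftrightarrow \Box(\neg\varphi\to[-\varphi]\psi)
\end{align*}
optionally together with the composition law $[-\varphi][-\psi]\chi\leftrightarrow[-(\varphi\lor[-\varphi]\psi)]\chi$, which is convenient but not strictly needed if reductions are always applied to innermost occurrences of $[-\;]$. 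The first step is to check soundness of each axiom, which is a routine unfolding of the global truth condition displayed above: since the deletion induced by $[-\varphi]$ is a global model-update function, a successor $t$ of $w$ survives in $\langle W,R\setminus\{\langle s,t\rangle\in R\mid t\vDash\varphi\},V\rangle$ iff $t\not\vDash\varphi$, and once it survives its local behaviour in the updated model is captured by prefixing $[-\varphi]$ at $t$; this is precisely what the $\Box$-axiom records, while the atomic and Boolean cases are immediate because worlds and valuations are untouched.

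Next I would establish completeness of the calculus, which in particular shows that S$_d^g$ML is (finitely, hence recursively) axiomatizable. For this I would introduce the standard complexity measure $c$ used in reduction arguments in dynamic-epistemic logic (cf. \cite{baltag,johandynamic,Liu2}), designed so that the left-hand side of each recursion axiom has strictly greater $c$-value than the right-hand side, even accounting for the duplication of $\varphi$ in the $\Box$-axiom and the branching in the conjunction axiom. An induction on $c$, using the recursion axioms and replacement of equivalents, then shows that every $\mathcal{L}_d$-formula $\chi$ is provably equivalent in the calculus to an $\mathcal{L}_\Box$-formula $\mathrm{tr}(\chi)$ with no occurrence of $[-\;]$. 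Since all recursion axioms are valid and $\mathbf{K}$ is complete for $\mathcal{L}_\Box$, completeness for S$_d^g$ML follows: if $\vDash\chi$ then $\vDash\mathrm{tr}(\chi)$, so $\vdash_{\mathbf K}\mathrm{tr}(\chi)$, and hence $\vdash\chi$.

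Finally, decidability drops out of the same translation. The map $\chi\mapsto\mathrm{tr}(\chi)$ is effective, the reduction steps being purely syntactic and terminating because $c$ is well-founded. By soundness, $\mathcal{M},w\vDash\chi$ iff $\mathcal{M},w\vDash\mathrm{tr}(\chi)$, so $\chi$ is S$_d^g$ML-satisfiable iff $\mathrm{tr}(\chi)$ is satisfiable in the basic modal logic; as the latter satisfiability problem is decidable, so is satisfiability --- and dually validity --- for S$_d^g$ML. The one genuinely non-routine ingredient in all of this is the choice of the measure $c$ that decreases under every reduction axiom despite these size increases; this is exactly the point the general treatment of logics of definable model change in \cite{Liu2} is set up to handle, so I would simply invoke it, after which the rest is bookkeeping.
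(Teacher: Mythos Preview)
Your proposal is correct and follows essentially the same approach as the paper: the paper does not spell out a proof but simply notes that the recursion axiom $[-\varphi]\Box\psi\leftrightarrow\Box(\neg\varphi\to[-\varphi]\psi)$ is valid for the global semantics and then invokes the general method for modal logics of definable model change in \cite{Liu2} to obtain a complete set of recursion axioms, from which axiomatizability and decidability follow. Your write-up is a faithful (and more detailed) unpacking of exactly that argument.
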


The complexity effect of the local behavior of S$_d$ML also show at a crucial step in our proof of undecidability. In the proof of Theorem \ref{theorem-undecidable}, formula (\textit{Conv}) forces the tiles to satisfy a first-order convergence property, i.e., 

\vspace{-2mm}

$$\forall t\forall t_1\forall t_2(R_utt_1\land R_rt_1t_2\to\exists t_3(R_rtt_3\land R_ut_3t_2)).$$ 

\noindent As noted in \cite{openmind}, this property can give logics high complexity. 

By contrast,  convergence is not definable in S$_d^g$ML, even though we expand the model with some extra tools, e.g. a spy point. Roughly speaking, given two tiles $t_1$ and $t_2$ that have same properties, we still can distinguish between them with S$_d$ML, say, their properties will be different after cutting some links starting from $t_1$; however, we cannot do this with S$_d^g$ML, since links are cut in a global way.\footnote{From a technical point of view, to show that S$_d^g$ML cannot define the convergence property, we need its notion of bisimulation, which is easily defined.} 

The more general issue arising here goes beyond our specific logics of sabotage.

\medskip

\noindent\textbf{Open Problem.} Does making update operations local (world-relative) generate
undecidability in general for decidable dynamic-epistemic logics? 

\medskip

This would provide an alternative diagnosis to the comparison of sabotage and update offered in \cite{sabotage}, closer to the modified dynamic-epistemic logics studied in \cite{localpal}.

\section{Related Work}\label{sec-relatedwork}
This article is primarily inspired by existing work on sabotage games \cite{lig} and their applications. A good source for the latest developments in sabotage modal logics is \cite{sabotage}, which also has extensive references to current work on related modal logics for definable graph change. Meanwhile, a number of authors have studied other graph games using matching modal logics. For instance, in poison games, originating in graph theory, instead of deleting links, a player can poison a node, to make it inaccessible to her opponent. Poison games have been recently studied in the modal logics of \cite{poisonlogic}, using the close similarities between these systems and variants of so-called memory logics \cite{memory} in the hybrid tradition. In another tradition, that of Boolean network games, \cite{Declan} has proposed a logic of local fact change which can characterize Nash equilibria, providing a new way of looking at the interaction between graph games, network games and logics of control.


Throughout the paper, dynamic-epistemic logic \cite{baltag,johandynamic} has been used as a decidable contrasting design to our systems. Technically, our logic S$_{d}$ML has resemblances to several recent logics for local announcements. \cite{localpal} introduces a logic to characterize both global and local announcements. Similar to our set-up, it has definable updates of links, but there is also a difference. Although more expressive than public announcement logic, this logic is decidable. Moreover, we are inspired by other logics for local graph modifiers, too. For instance, \cite{changeoperator} investigates a special type of local SML, whose dynamic operator refers to a model transition that cuts a link from the current state and then evaluates a formula at the target of the deleted arrow. Finally, more akin to the above-mentioned \cite{Declan}, \cite{graphmodifier} studies local modifiers that update the valuation at the evaluation point, and shows that adding those modifiers dramatically increases the expressive power of the logic.   

Next, a highly relevant line of research for this article is hybrid logic, an area from which we have taken several basic techniques. As far as we know, \cite{hybrid} is the first to present the method of constructing a spy point, the main tool that was used to prove the undecidability of our logic S$_d$ML. \cite{Relation change} shows how relation-changing logics such as SML can be seen as fragments of hybrid logics, and identifies various decidable fragments of those logics with the help of hybrid translations. This fits with our findings in Section \ref{subsec-hybridtranslation}. Finally, \cite{hybridpal} merges hybrid logic with public announcement logic. Differently from the operator $[-]$ in S$_d$ML, the announcement modality there operates in a global way, making it possible to axiomatize the logic by means of recursion axioms.

It remains to note that this article fits with the general program recently proposed in \cite{graphgame} for a much broader study of analysis and design for graph games in tandem with matching modal logics. In particular, it proposed various meaningful new games, and identified general questions behind the match between logic and game.

\section{Summary and Further Directions}\label{sec-summary}
In this article, we started with a definable sabotage game S$_d$G that models some interesting phenomena in everyday life, and explored a matching logical system, definable sabotage modal logic S$_d$ML. We presented a first-order translation for the logic, showed a characterization theorem with regard to a novel notion of definable sabotage bisimulation, probed an axiomatization for S$_d$ML using recursion axioms in an extended hybrid language, and finally, we proved its undecidability.

Immediate technical open problems for our logic S$_d$ML resemble those in the literature for SML. For instance, we would like to have a good Hilbert-style proof theory, which may perhaps be found by analyzing semantic tableaux for S$_d$ML. Another open problem is the complexity of the schematic validities of our language.

Next, while our language can define winning positions for players in given finite graphs, it cannot express generic winning conditions across models. To obtain the latter, we need a modal $\mu$-calculus enriched with our local definable deletion modality, whose behavior shows the complexities already noted for sabotage $\mu$-calculus in \cite{sabotage}.

In terms of generality, one would like to establish the precise connections between our logic S$_d$M and other modal logics for graph games in the cited literature. For instance, the difference in expressive power that we noted in Section \ref{sec-bisimulationcharacterization} between S$_d$ML and SML does not preclude the existence of faithful embeddings either way.

As a final technical issue, we mentioned the contrast between locality and stepwise link deletion as sources of undecidability, discussed in Section \ref{sec-undecidability}. One could also merge these in a stepwise version of our logic, denoted  S$_d^s$ML. Clearly, its validities are different from those of S$_d$ML: for instance, $[-]$ is no longer self-dual. Our methods from Section 6 should also be able to prove its undecidability, but we have not yet been able to do so.

We end by stepping back to reality. In our introduction, we mentioned social networks \cite{Liu3}, where adding links (gaining friends or neighbors) is as important as deleting links (losing friends or neighbors). A connection between our logic and existing logics for social networks, and games played over these, would be a natural next step.

Another such step toward greater realism would arise when making connections to more elaborate versions of our game scenarios, for instance involving more complex independent goals for players than we have considered, or imperfect information when players cannot perfectly observe each other's moves. In general, such games may have probabilistic equilibria, and our logics would have to acquire interfaces with probability.

\section*{Acknowledgements}
I wish to thank Johan van Benthem and Fenrong Liu for their generous help and patient guidance  throughout all stages of this project. I would like to thank Alexandru Baltag for his inspiring suggestions. This work was supported by the Major Program of the National Social Science Foundation of China [17ZDA026].

\bibliographystyle{unsrt}
\bibliography{\jobname}

\end{document}